\newcommand{\texcomment}[1]{}
\newtheorem{theorem}{Theorem}[section]
\newtheorem{definition}[theorem]{Definition}
\newtheorem{lemma}[theorem]{Lemma}
\newenvironment{proof}{\noindent\rm{\bf Proof:}}{\hbox{$\Box$}\vspace*{0.2\baselineskip}}
\newenvironment{reflemma}[1]{\begin{trivlist}\item[\hskip
      \labelsep{\bf Lemma #1.}]\it}{\end{trivlist}}
\newenvironment{reftheorem}[1]{\begin{trivlist}\item[\hskip
      \labelsep{\bf Theorem #1.}]\it}{\end{trivlist}}
\newcommand{\aset}[1]{\{{#1}\}}
\newcommand{\aseq}[1]{\langle#1\rangle}
\newcommand{\sembrack}[1]{[\hspace*{-1.2pt}[#1]\hspace*{-1.2pt}]}
\newcommand{\dom}{\mathop\textit{dom}\nolimits}
\newcommand{\ran}{\mathop\textit{ran}\nolimits}
\newcommand{\ttif}[3]{\textsf{if}\;{#1}\;\textsf{then}\;{#2}\;\textsf{else}\;{#3}}
\newcommand{\ttassign}[2]{{#1}\;:=\;{#2}}
\newcommand{\wpre}[2]{\textit{wp}({#1}, {#2})}
\newcommand{\vect}[1]{\overrightarrow{{#1}}}
\begin{document}

\title{On Bounding Problems of Quantitative Information
  Flow\thanks{This work was supported by MEXT KAKENHI 20700019,
    20240001, and 22300005, and Global COE Program ``CERIES.''}}
\author{Hirotoshi Yasuoka\inst{1} \and Tachio Terauchi\inst{2}}
\institute{Tohoku University\\ \email{yasuoka@kb.ecei.tohoku.ac.jp}
  \and Nagoya University\\\email{terauchi@is.nagoya-u.ac.jp}}
\maketitle

\begin{abstract}
Researchers have proposed formal definitions of quantitative
information flow based on information theoretic notions such as the
Shannon entropy, the min entropy, the guessing entropy, belief, and
channel capacity.  This paper investigates the hardness of precisely
checking the quantitative information flow of a program according to
such definitions.  More precisely, we study the ``bounding problem''
of quantitative information flow, defined as follows: Given a program
$M$ and a positive real number $q$, decide if the quantitative
information flow of $M$ is less than or equal to $q$.  We prove that
the bounding problem is not a $k$-safety property for any $k$ (even
when $q$ is fixed, for the Shannon-entropy-based definition with the
uniform distribution), and therefore is not amenable to the
self-composition technique that has been successfully applied to
checking non-interference.  We also prove complexity theoretic
hardness results for the case when the program is restricted to
loop-free boolean programs.  Specifically, we show that the problem is
PP-hard for all definitions, showing a gap with non-interference which
is coNP-complete for the same class of programs.  The paper also
compares the results with the recently proved results on the
comparison problems of quantitative information flow.\newline
\newline
\noindent{\bf Keywords:} security, quantitative information flow, program verification

\end{abstract}

\section{Introduction}

\label{sec:introduction}

We consider programs containing high security inputs and low security
outputs.  Informally, the quantitative information flow problem
concerns the amount of information that an attacker can learn about
the high security input by executing the program and observing the low
security output.  The problem is motivated by applications in
information security.  We refer to the classic by
Denning~\cite{denning82} for an overview.

In essence, quantitative information flow measures {\em how} secure,
or insecure, a program (or a part of a program --e.g., a variable--)
is.  Thus, unlike
non-interference~\cite{DBLP:conf/sosp/Cohen77,goguen:sp1982}, that
only tells whether a program is completely secure or not completely
secure, a definition of quantitative information flow must be able to
distinguish two programs that are both interferent but have different
degrees of ``secureness.''

For example, consider the following programs.
\[
\begin{array}{l}
M_1 \equiv \ttif{H = g}{\ttassign{O}{0}}{\ttassign{O}{1}} \\
M_2 \equiv \ttassign{O}{H}
\end{array}
\]
In both programs, $H$ is a high security input and $O$ is a low
security output.  Viewing $H$ as a password, $M_1$ is a prototypical
login program that checks if the guess $g$ matches the
password.\footnote{Here, for simplicity, we assume that $g$ is a
  program constant.  See Section~\ref{sec:prelim} for modeling
  attacker/user (i.e., low security) inputs.}  By executing $M_1$, an
attacker only learns whether $H$ is equal to $g$, whereas she would be
able to learn the entire content of $H$ by executing $M_2$.  Hence, a
reasonable definition of quantitative information flow should assign a
higher quantity to $M_2$ than to $M_1$, whereas non-interference would
merely say that $M_1$ and $M_2$ are both interferent, assuming that
there are more than one possible values of $H$.

Researchers have attempted to formalize the definition of quantitative
information flow by appealing to information theory.  This has
resulted in definitions based on the Shannon
entropy~\cite{denning82,clarkjcs2007,malacaria:popl2007}, the min
entropy~\cite{smith09}, the guessing
entropy~\cite{kopf07,DBLP:conf/sp/BackesKR09},
belief~\cite{clarkson:csf2005}, and channel
capacity~\cite{mccamant:pldi2008,malacaria08,NMS2009}.  All of these
definitions map a program (or a part of a program) onto a non-negative
real number, that is, they define a function $\mathcal{X}$ such that
given a program $M$, $\mathcal{X}(M)$ is a non-negative real number.
(Concretely, $\mathcal{X}$ is ${\it SE}[\mu]$ for the
Shannon-entropy-based definition with the distribution $\mu$, ${\it
  ME}[\mu]$ for the min-entropy-based definition with the distribution
$\mu$, ${\it GE}[\mu]$ for the guessing-entropy-based definition with
the distribution $\mu$, and ${\it CC}$ for the channel-capacity-based
definition.\footnote{The belief-based definition takes additional
  parameters as inputs, and is discussed below.})  Therefore, a
natural verification problem for quantitative information flow is to
decide, given $M$ and a quantity $q \geq 0$, if $\mathcal{X}(M) \leq
q$.  The problem is well-studied for the case $q = 0$ as it is
actually equivalent to checking non-interference
(cf. Section~\ref{sec:nonint}).  The problem is open for $q > 0$ .  We
call this the {\em bounding problem} of quantitative information flow.

The problem has a practical relevance as a user is often interested in
knowing if her program leaks information within some allowed bound.
That is, the bounding problem is a form of quantitative information
flow {\em checking} problem (as opposed to {\em inference}).  Much of
the previous research has focused on information theoretic properties
of quantitative information flow and approximate (i.e., incomplete
and/or unsound) algorithms for checking and inferring quantitative
information flow.  To fill the void, in a recent
work~\cite{DBLP:conf/csfw/yasuoka2010}, we have studied the hardness
and possibilities of deciding the {\em comparison problem} of
quantitative information flow, which is the problem of precisely
checking if the information flow of one program is larger than that of
the other, that is, the problem of deciding if $\mathcal{X}(M_1) \leq
\mathcal{X}(M_2)$ given programs $M_1$ and $M_2$.  The study has lead
to some remarkable results, summarized in Section~\ref{sec:ksafety}
and Section~\ref{sec:complex} of this paper to contrast with the new
results on the bounding problem.  However, the hardness results on the
comparison problem do not imply hardness of the bounding
problem.\footnote{But, they imply the hardness of the inference
  problem because we can compare $\mathcal{X}(M_1)$ and
  $\mathcal{X}(M_2)$ once we have computed them.  We also note
    that the hardness of the bounding problems implies that of the
    comparison problems because we can reduce the bounding problem
    $\mathcal{X}(M) \leq q$ to a comparison problem that compares $M$
    with a program whose information flow is $q$.  (But, the reverse
    direction does not hold.)}  Thus, this paper settles the open
question.

We summarize the main results of the paper below.  Here, $\mathcal{X}$
is ${\it SE}[U]$, ${\it ME}[U]$, ${\it GE}[U]$ or ${\it CC}$, where $U$
is the uniform distribution.
\begin{itemize}
\item Checking if $\mathcal{X}(M) \leq q$ is not a $k$-safety
  property~\cite{terauchi:sas05,DBLP:conf/csfw/ClarksonS08} for any $k$.
\item Restricted to loop-free boolean programs, checking if
$\mathcal{X}(M) \leq q$ is PP-hard.
\end{itemize}
Roughly, a verification problem being $k$-safety means that it can be
reduced to a standard safety problem, such as the unreachability
problem, via self composition~\cite{barthe:csfw04,darvas:spc05}.  For
instance, non-interference is a $2$-safety property (technically, for
the termination-insensitive case\footnote{We restrict to terminating
  programs in this paper.  (The termination assumption is
  nonrestrictive because we assume safety verification as a blackbox
  routine.)}), and this has enabled its precise checking via a
reduction to a safety problem via self composition and applying
automated safety verification
techniques~\cite{terauchi:sas05,naumann:esorics06,unno:plas2006}.
Also, our recent work~\cite{DBLP:conf/csfw/yasuoka2010} has shown that
deciding the comparison problem of quantitative information flow for
all distributions (i.e., checking if $\forall \mu.  {\it SE}[\mu](M_1)
\leq {\it SE}[\mu](M_2)$, $\forall \mu.  {\it ME}[\mu](M_1) \leq {\it
  ME}[\mu](M_2)$, $\forall \mu.  {\it GE}[\mu](M_1) \leq {\it
  GE}[\mu](M_2)$, and $\forall \mu.\forall h,\ell. {\it
  BE}[\aseq{\mu,h,\ell}](M_1) \leq {\it
  BE}[\aseq{\mu,h,\ell}](M_2)$\footnote{See below for the notation
  ${\it BE}[\aseq{\mu,h,\ell}](M)$ denoting the belief-based
  quantitative information flow of $M$ with respect to the experiment
  $\aseq{\mu,h,\ell}$.  The result for the belief-based definition is
  proven in the extended version of the paper that is under
  submission~\cite{yasuoka:toplas2010submit}.}) are $2$-safety problems
(and in fact, all equivalent).

We also prove a complexity theoretic gap with these related problems.
We have shown in the previous paper~\cite{DBLP:conf/csfw/yasuoka2010}
that, for loop-free boolean programs, both checking non-interference
and the above comparison problem with universally quantified
distributions are coNP-complete.  (PP is believed to be strictly
harder than coNP.  In particular, $\text{coNP} = \text{PP}$ implies
the collapse of the polynomial hierarchy to level 1.)

Therefore, the results suggest that the bounding problems of
quantitative information flow are harder than the related problems of
checking non-interference and the quantitative information flow
comparison problems with universally quantified distributions, and may
require different techniques to solve (i.e., not self composition).

The belief-based quantitative information flow~\cite{clarkson:csf2005}
differs from the definitions above in that it focuses on the
information flow from a particular execution of the program (called
{\em experiment}) rather than the information flow from all executions
of the program.\footnote{Clarkson et. al.~\cite{clarkson:csf2005}
    also propose a definition which averages the quantitative
    information flow over a distribution of the inputs $h$ and $\ell$.
    Note that a hardness result for (1) below implies the hardness
    result of the bounding problem for this problem as we may take the
    distribution to be a point mass.}  Therefore, we define and study
the hardness of two types of bounding problems for the belief-based
definition:
\begin{itemize}
\item[(1)] ${\it BE}[\aseq{\mu,h,\ell}](M)\le q$
\item[(2)] $\forall h,\ell.{\it BE}[\aseq{\mu,h,\ell}](M)\le q$
\end{itemize}
Here, ${\it BE}[\aseq{\mu,h,\ell}](M)$ denotes the belief-based
information flow of $M$ with the experiment $\aseq{\mu,h,\ell}$ where
$h,\ell$ are the particular (high-security and low-security) inputs.
Note that the problem (2) checks the bound of the belief-based
quantitative information flow for {\em all} inputs whereas (1) checks
the information flow for a particular input.  This paper proves that
neither of these problems are $k$-safety for any $k$, and are PP-hard
for loop-free boolean programs.

We note that the above results are for the case the quantity $q$ is
taken to be an input to the bounding problems.  We show that when
fixing the parameter $q$ constant, some of the problems become
$k$-safety under certain conditions for different $k$'s
(cf. Section~\ref{sec:ksafetyconst}, \ref{sec:ksafetybelief}, and
\ref{sec:ksafetycclike}).

We also define and study the hardness of the following bounding
problems that check the bound over {\em all} distributions.
\begin{itemize}
\item[(1)] $\forall \mu.{\it SE}[\mu](M)\le q$
\item[(2)] $\forall \mu.{\it ME}[\mu](M)\le q$
\item[(3)] $\forall \mu.{\it GE}[\mu](M)\le q$
\item[(4)] $\forall \mu.{\it BE}[\aseq{\mu,h,\ell}](M)\le q$
\item[(5)] $\forall \mu,h,\ell.{\it BE}[\aseq{\mu,h,\ell}](M)\le q$
\end{itemize}
We show that except for (4) and (5), these problems are also not
$k$-safety for any $k$, and are PP-hard for loop-free boolean
programs, when $q$ is not a constant (but are $k$-safety for various
$k$'s when $q$ is held constant).  For the problems (4) and (5), we
show that the problems are actually equivalent to that of checking
non-interference.  (1), (2), and (3) are proven by showing that
the problems correspond to various ``channel capacity like'' definitions
of quantitative information flow.

The rest of the paper is organized as follows.
Section~\ref{sec:prelim} reviews the existing information-theoretic
definitions of quantitative information flow and formally defines the
bounding problems.  Section~\ref{sec:ksafety} proves that the bounding
problems are not $k$-safety problems for ${\it SE}[U]$, ${\it ME}[U]$,
${\it GE}[U]$, and ${\it CC}$.  (Section~\ref{sec:ksafetyconst} shows
that when fixing the parameter $q$ constant, some of them become
$k$-safety under certain conditions for different $k$'s.)
Section~\ref{sec:ksafetybelief} shows $k$-safety results for the
belief-based bounding problems, and Section~\ref{sec:ksafetycclike}
shows $k$-safety results for the bounding problems that check the
bound for all distributions.  Section~\ref{sec:complex} proves
complexity theoretic hardness results for the bounding problems for
loop-free boolean programs for ${\it SE}[U]$, ${\it ME}[U]$, ${\it
  GE}[U]$, and ${\it CC}$, and Section~\ref{sec:complexbeliefcclike}
proves those for the belief-based bounding problems and the bounding
problems that check the bound for all distributions.
Section~\ref{sec:discussion} discusses some implications of the
hardness results.  Section~\ref{sec:related} discusses related work,
and Section~\ref{sec:concl} concludes.  All the proofs appear in
Appendix~\ref{appendix}.

\section{Preliminaries}

\label{sec:prelim}

We introduce the information theoretic definitions of quantitative
information flow that have been proposed in literature.  First, we
review the notion of the {\em Shannon entropy}~\cite{shannon48},
$\mathcal{H}[\mu](X)$, which is the average of the information
content, and intuitively, denotes the uncertainty of the random
variable $X$.
\begin{definition}[Shannon Entropy]
  Let $X$ be a random variable with sample space $\mathbb X$ and $\mu$
  be a probability distribution associated with $X$.  (We write $\mu$
  explicitly for clarity.)  The Shannon entropy of $X$ is defined as
\[
\mathcal{H}[\mu](X)=\sum_{x\in\mathbb{X}} \mu(X=x)\log\frac{1}{\mu(X=x)}
\]
(The logarithm is in base 2.)
\end{definition}
Next, we define {\em conditional entropy}.  Informally, the conditional
entropy of $X$ given $Y$ denotes the uncertainty of $X$ after knowing
$Y$.
\begin{definition}[Conditional Entropy]
Let $X$ and $Y$ be random variables with sample spaces $\mathbb X$ and
$\mathbb Y$, respectively, and $\mu$ be a probability distribution
associated with $X$ and $Y$.  Then, the conditional entropy of $X$
given $Y$, written $\mathcal{H}[\mu](X|Y)$ is defined as
\[
\mathcal{H}[\mu](X|Y)=\sum_{y\in\mathbb Y} \mu(Y=y) \mathcal{H}[\mu](X|Y=y)
\]
where
\[
\begin{array}{l}
\mathcal{H}[\mu](X|Y=y)
=\sum_{x\in\mathbb X} \mu(X=x|Y=y)\log\frac{1}{\mu(X=x|Y=y)} \\
\mu(X=x|Y=y)=\frac{\mu(X=x,Y=y)}{\mu(Y=y)}
\end{array}
\]
\end{definition}
Next, we define (conditional) mutual information.  Intuitively, the
conditional mutual information of $X$ and $Y$ given $Z$ represents the
mutual dependence of $X$ and $Y$ after knowing $Z$.
\begin{definition}[Mutual Information]
  Let $X, Y$ and $Z$ be random variables and $\mu$ be an associated
  probability distribution.\footnote{We abbreviate the sample spaces
    of random variables when they are clear from the context.}  Then,
  the conditional mutual information of $X$ and $Y$ given $Z$ is
  defined as
\[
\begin{array}{rcl}
\mathcal{I}[\mu](X;Y|Z)&=&\mathcal{H}[\mu](X|Z)-\mathcal{H}[\mu](X|Y,Z)\\
&=&\mathcal{H}[\mu](Y|Z)-\mathcal{H}[\mu](Y|X,Z)
\end{array}
\]
\end{definition}

Let $M$ be a program that takes a high security input $H$ and a low
security input $L$, and gives the low security output $O$.  For
simplicity, we restrict to programs with just one variable of each
kind, but it is trivial to extend the formalism to multiple variables
(e.g., by letting the variables range over tuples).  Also, for the
purpose of the paper, unobservable (i.e., high security) outputs are
irrelevant, and so we assume that the only program output is the low
security output.  Let $\mu$ be a probability distribution over the
values of $H$ and $L$.  Then, the semantics of $M$ can be defined by
the following probability equation. (We restrict to terminating
  deterministic programs in this paper.)
\[
\mu(O = o) = \sum_{\scriptsize \begin{array}{l}h,\ell \in \mathbb{H}, \mathbb{L}\\ M(h,\ell) = o\end{array}} \mu(H = h, L = \ell)
\]
Note that we write $M(h,\ell)$ to denote the low security output of
the program $M$ given inputs $h$ and $\ell$.  Now, we are ready to
introduce the Shannon-entropy based definition of quantitative
information flow (QIF)~\cite{denning82,clarkjcs2007,malacaria:popl2007}.
\begin{definition}[Shannon-Entropy-based QIF]
\label{def:se}
Let $M$ be a program with a high security input $H$, a low security input
$L$, and a low security output $O$.  Let $\mu$ be a distribution over
$H$ and $L$.  Then, the Shannon-entropy-based quantitative information
flow is defined
\[
\begin{array}{rcl}
{\it SE}[\mu](M) & = & \mathcal{I}[\mu](O;H|L) \\
 & = & \mathcal{H}[\mu](H|L)-\mathcal{H}[\mu](H|O,L)
\end{array}
\]
\end{definition}
Intuitively, $\mathcal{H}[\mu](H|L)$ denotes the initial uncertainty
knowing the low security input and $\mathcal{H}[\mu](H|O,L)$ denotes
the remaining uncertainty after knowing the low security output.

\begin{sloppypar}
As an example, consider the programs $M_1$ and $M_2$ from
Section~\ref{sec:introduction}.  For concreteness, assume that $g$ is
the value $01$ and $H$ ranges over the space $\aset{00,01,10,11}$.
Let $U$ be the uniform distribution over $\aset{00,01,10,11}$, that
is, $U(h) = 1/4$ for all $h \in \aset{00,01,10,11}$.  Computing
their Shannon-entropy based quantitative information flow, we have,
\end{sloppypar}
\[
\begin{array}{l}
{\it SE}[U](M_1)=\mathcal{H}[U](H)-\mathcal{H}[U](H|O)
=\log 4-\frac{3}{4}\log{3}
\approx .81128\\

{\it SE}[U](M_2)=\mathcal{H}[U](H)-\mathcal{H}[U](H|O)
=\log 4-\log 1
=2
\end{array}
\]
Hence, if the user was to ask if ${\it SE}[U](M_1) \leq 1.0$, that is,
``does $M_1$ leak more than one bit of information (according to ${\it
SE}[U]$)?'', then the answer would be no.  But, for the same query,
the answer would be yes for $M_2$.

Next, we introduce the {\em min entropy}, which Smith~\cite{smith09}
recently suggested as an alternative measure for quantitative information
flow.
\begin{definition}[Min Entropy]
Let $X$ and $Y$ be random variables, and $\mu$ be an associated probability
distribution.  Then, the min entropy of $X$ is defined
\[
\mathcal{H}_\infty[\mu](X)=\log\frac{1}{\mathcal{V}[\mu](X)}
\]
and the conditional min entropy of $X$ given $Y$ is defined
\[
\mathcal{H}_\infty[\mu](X|Y)=\log\frac{1}{\mathcal{V}[\mu](X|Y)}
\]
where
\[
\begin{array}{rcl}
\mathcal{V}[\mu](X)&=&\max_{x\in\mathbb X} \mu(X=x)\\
\mathcal{V}[\mu](X|Y=y)&=&\max_{x\in\mathbb X} \mu(X=x|Y=y)\\
\mathcal{V}[\mu](X|Y)&=&\sum_{y\in\mathbb Y} \mu(Y=y) \mathcal{V}[\mu](X|Y=y)
\end{array}
\]
\end{definition}

\begin{sloppypar}
Intuitively, $\mathcal{V}[\mu](X)$ represents the highest probability
that an attacker guesses $X$ in a single try.  We now define the
min-entropy-based definition of quantitative information flow.
\end{sloppypar}

\begin{definition}[Min-Entropy-based QIF]
\label{def:me}
Let $M$ be a program with a high security input $H$, a low security input
$L$, and a low security output $O$.  Let $\mu$ be a distribution over
$H$ and $L$.  Then, the min-entropy-based quantitative information
flow is defined
\[
{\it ME}[\mu](M)=\mathcal{H}_\infty[\mu](H|L)-\mathcal{H}_\infty[\mu](H|O,L)
\]
\end{definition}

Whereas Smith~\cite{smith09} focused on programs lacking low security
inputs, we extend the definition to programs with low security inputs
in the definition above.  It is easy to see that our definition
coincides with Smith's for programs without low security
inputs.  Also, the extension is arguably natural in the sense that we simply
take the conditional entropy with respect to the distribution over the
low security inputs.

Computing the min-entropy based quantitative information flow for our
running example programs $M_1$ and $M_2$ from
Section~\ref{sec:introduction} with the uniform distribution, we
obtain,
\[
\begin{array}{l}
{\it ME}[U](M_1)=\mathcal{H}_\infty[U](H)-\mathcal{H}_\infty[U](H|O)
=\log 4-\log 2
=1\\
{\it ME}[U](M_2)=\mathcal{H}_\infty[U](H)-\mathcal{H}_\infty[U](H|O)
=\log 4 -\log 1
=2
\end{array}
\]
Hence, if a user is to check whether ${\it ME}[U]$ is bounded by $q$
for $1 \leq q < 2$, then the answer would be yes for $M_1$, but no for
$M_2$.

Next, we introduce the {\em guessing-entropy} based definition of
quantitative information
flow~\cite{Massey94,kopf07,DBLP:conf/sp/BackesKR09}.
\begin{definition}[Guessing Entropy]
Let $X$ and $Y$ be random variables, and $\mu$ be an associated probability
distribution.  Then, the guessing entropy of $X$ is defined
\[
\mathcal{G}[\mu](X)=\sum_{1\le i\le m}i\times\mu(X=x_i)
\]
where $m=|\mathbb{X}|$ and $x_1,x_2,\dots,x_{m}$ satisfies $\forall
i,j.i\le j\Rightarrow \mu(X=x_i)\ge\mu(X=x_j)$.

The conditional guessing entropy of $X$ given $Y$ is defined
\[
\mathcal{G}[\mu](X|Y)=\sum_{y\in{\mathbb Y}}\mu(Y=y)\mathcal{G}[\mu](X|Y=y)\\
\]
where
\[
\begin{array}{c}
\mathcal{G}[\mu](X|Y=y)=\sum_{1\le i\le m}i\times\mu(X=x_i|Y=y)\\
m=|\mathbb{X}|\;\textrm{ and }\;\forall
i,j.i\le j\Rightarrow \mu(X=x_i|Y=y)\ge\mu(X=x_j|Y=y)
\end{array}
\]
\end{definition}

Intuitively, $\mathcal{G}[\mu](X)$ represents the average number of
times required for the attacker to guess the value of $X$.  We now
define the guessing-entropy-based quantitative information flow.

\begin{definition}[Guessing-Entropy-based QIF]
\label{def:ge}
Let $M$ be a program with a high security input $H$, a low security input
$L$, and a low security output $O$.  Let $\mu$ be a distribution over
$H$ and $L$.  Then, the guessing-entropy-based quantitative
information flow is defined
\[
{\it GE}[\mu](M)=\mathcal{G}[\mu](H|L)-\mathcal{G}[\mu](H|O,L)
\]
\end{definition}

\begin{sloppypar}
Like with the min-entropy-based definition, the previous research on
guessing-entropy-based quantitative information flow only considered
programs without low security
inputs~\cite{kopf07,DBLP:conf/sp/BackesKR09}.  But, it is easy to see
that our definition with low security inputs coincides with the
previous definitions for programs without low security inputs.  Also,
as with the extension for the min-entropy-based definition, it simply
takes the conditional entropy over the low security inputs.
\end{sloppypar}

We test {\it GE} on the running example from
Section~\ref{sec:introduction} by calculating the quantities for the
programs $M_1$ and $M_2$ with the uniform distribution. 
\renewcommand{\arraystretch}{1.1}
\[
\begin{array}{l}
{\it GE}[U](M_1)  =\mathcal{G}[U](H)-\mathcal{G}[U](H|O)
=\frac{5}{2} - \frac{7}{4}
 = 0.75\\
{\it GE}[U](M_2) = \mathcal{G}[U](H)-\mathcal{G}[U](H|O)
=\frac{5}{2} - 1
 = 1.5
\end{array}
\]
\renewcommand{\arraystretch}{1.0}

\noindent
Hence, if a user is to check whether ${\it GE}[U]$ is bounded by
$q$ for $0.75 \leq q < 1.5$, then the answer would be yes for $M_1$, but no
for $M_2$.

Next, we introduce the belief-based definition of quantitative
information flow \cite{clarkson:csf2005}.  The belief-based definition
computes the information leak from a single execution of the program,
called an {\em experiment}.

\begin{definition}[Experiment]
  Let $\mu$ be a distribution over a high-security input such that
  $\forall h.\mu(h)>0$, $h_\mathcal{E}$ be a high-security input, and
  $\ell_\mathcal{E}$ be a low-security input.  Then, the experiment
  $\mathcal{E}$ is defined to be the tuple $\langle
  \mu,h_\mathcal{E},\ell_\mathcal{E}\rangle$.\footnote{Clarkson et.
    al.~\cite{clarkson:csf2005} also include the output and the
    program itself as part of the experiment.  In this paper, an
    experiment consists solely of the input and the distribution.}
\end{definition}
Intuitively, the distribution $\mu$ represents the attacker's {\em belief}
about the user's high security input selection, $\ell_\mathcal{E}$
denotes the attacker's low-security input selection, and
$h_\mathcal{E}$ denotes the user's actual selection.  Then, the
belief-based quantitative information flow, which is the information flow
of individual experiments, is defined as follows.
\begin{definition}[Belief-based QIF]
\label{def:beliefqif}
Let $M$ be a program with a high security input, a low security input,
and a low security output.  Let $\mathcal{E}$ be an experiment such
that $\mathcal{E}=\langle \mu,h_\mathcal{E},\ell_\mathcal{E}\rangle$.
Then, the belief-based quantitative information flow is defined

\[ {\it BE}[\mathcal{E}](M)= D(\mu \rightarrow \dot{h_\mathcal{E}})-D(\mu |
o_\mathcal{E} \rightarrow \dot{h_\mathcal{E}}) 
\]where\[
\begin{array}{l}
  o_\mathcal{E}=M(h_\mathcal{E},\ell_\mathcal{E})\\

\dot{h}=\lambda h'. {\sf if}\;h=h'\;{\sf then}\;1\;{\sf else}\;0\\

\mu_{\ell_\mathcal{E}}(o_\mathcal{E})=\sum_{h\in \aset{h'\mid M(h',\ell_\mathcal{E}) =  o_\mathcal{E}}} \mu(h)\\

\mu | o_\mathcal{E}=\lambda h. {\sf
  if}\;M(h,\ell_\mathcal{E})=o_\mathcal{E}\;{\sf
  then}\;\frac{\mu(h)}{\mu_{\ell_\mathcal{E}}(o_\mathcal{E})}\;{\sf else}\;0\\

D(\mu\rightarrow \mu')=\sum_h \mu'(h)\log\frac{\mu'(h)}{\mu(h)}
\end{array}
\]
\end{definition}
Here, $D(\mu \rightarrow \mu')$ is the {\em relative entropy}
  (or, {\em distance}) of $\mu$ and $\mu'$, and quantifies the
difference between the two distributions.\footnote{Here, we
    follow \cite{clarkson:csf2005} and use the notation
    $D(\mu\rightarrow \mu')$ over the more standard notation $D(\mu'
    || \mu)$.}  Note that $\dot{h}$ denotes the point mass
distribution at $h$.  Intuitively, the belief-based quantitative
information flow expresses the difference between the attacker's
belief about the high security input and the output of the experiment.
It can be shown that ${\it BE}[\mathcal{E}](M)$ is equivalent to {\em
  self-information} (for $M$ deterministic), that is, the
  negative logarithm of the probability the event occurs (i.e., in
  this case, the output occurs).
\begin{lemma}
  Let $\mu$ be a belief, $h_\mathcal{E}$ be a high-security input,
  $\ell_\mathcal{E}$ be a low-security input.  Then, ${\it
    BE}[\langle
    \mu,h_\mathcal{E},\ell_\mathcal{E}\rangle](M)=-\log\Sigma_{h\in
    \aset{h'\mid
      M(h',\ell_\mathcal{E})=M(h_\mathcal{E},\ell_\mathcal{E})}}
  \mu(h)$.
\label{lem:be}
\end{lemma}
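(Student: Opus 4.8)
The plan is to unfold Definition~\ref{def:beliefqif} and evaluate the two relative-entropy terms directly. The key observation is that both $D(\mu \rightarrow \dot{h_\mathcal{E}})$ and $D(\mu \mid o_\mathcal{E} \rightarrow \dot{h_\mathcal{E}})$ are distances \emph{to} the point mass $\dot{h_\mathcal{E}}$, so in each defining sum $\sum_h \dot{h_\mathcal{E}}(h) \log \frac{\dot{h_\mathcal{E}}(h)}{\cdots}$ every term with $h \neq h_\mathcal{E}$ vanishes (using the standard convention $0 \log 0 = 0$), leaving only the single surviving term at $h = h_\mathcal{E}$, where $\dot{h_\mathcal{E}}(h_\mathcal{E}) = 1$.

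First I would compute $D(\mu \rightarrow \dot{h_\mathcal{E}}) = \log \frac{1}{\mu(h_\mathcal{E})} = -\log \mu(h_\mathcal{E})$, which is well-defined since the experiment requires $\mu(h) > 0$ for all $h$. Next, since $M(h_\mathcal{E}, \ell_\mathcal{E}) = o_\mathcal{E}$ by the definition of $o_\mathcal{E}$, the conditional belief evaluated at $h_\mathcal{E}$ falls into the ``then'' branch, giving $(\mu \mid o_\mathcal{E})(h_\mathcal{E}) = \frac{\mu(h_\mathcal{E})}{\mu_{\ell_\mathcal{E}}(o_\mathcal{E})}$, and hence $D(\mu \mid o_\mathcal{E} \rightarrow \dot{h_\mathcal{E}}) = \log \frac{\mu_{\ell_\mathcal{E}}(o_\mathcal{E})}{\mu(h_\mathcal{E})}$.

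Subtracting the two quantities, the $\log \mu(h_\mathcal{E})$ contributions cancel, yielding ${\it BE}[\langle \mu, h_\mathcal{E}, \ell_\mathcal{E}\rangle](M) = -\log \mu_{\ell_\mathcal{E}}(o_\mathcal{E})$. Finally I would substitute the definition $\mu_{\ell_\mathcal{E}}(o_\mathcal{E}) = \sum_{h \in \aset{h' \mid M(h', \ell_\mathcal{E}) = o_\mathcal{E}}} \mu(h)$ and rewrite $o_\mathcal{E}$ as $M(h_\mathcal{E}, \ell_\mathcal{E})$ to obtain exactly the claimed formula. There is no substantial obstacle here: the lemma is a direct calculation, and the only points requiring minor care are the point-mass convention $0 \log 0 = 0$ and verifying that the two $\mu(h_\mathcal{E})$ terms cancel precisely, which is what collapses the belief-based leakage into the self-information $-\log \mu_{\ell_\mathcal{E}}(o_\mathcal{E})$ of the observed output.
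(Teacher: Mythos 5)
Your proposal is correct and follows essentially the same route as the paper's own proof: unfold Definition~\ref{def:beliefqif}, use the point-mass property of $\dot{h_\mathcal{E}}$ to collapse each relative-entropy sum to the single term at $h = h_\mathcal{E}$, cancel the $\log\mu(h_\mathcal{E})$ contributions, and substitute the definition of $\mu_{\ell_\mathcal{E}}(o_\mathcal{E})$. The only difference is presentational — you make explicit the $0\log 0 = 0$ convention and the well-definedness from $\mu(h) > 0$, which the paper leaves implicit.
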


Computing the belief-based quantitative information flow for our
running example programs $M_1$ and $M_2$ from
Section~\ref{sec:introduction} with the uniform distribution, we
obtain,
\begin{itemize}
\item $h\in\aset{00,10,11}$
\[
  {\it BE}[\langle U,h\rangle](M_1)=-\log U(M_1(h))=-\log\frac{3}{4}\approx.41503
\]

\item $h=01$
\[
  {\it BE}[\langle U,h\rangle](M_1)=-\log U(M_1(h))=-\log\frac{1}{4}=2
\]
\end{itemize}
And, for any $h\in\aset{00,01,10,11}$,
\[
  {\it BE}[\langle U,h\rangle](M_2)=-\log U(M_2(h))=-\log\frac{1}{4}=2
\]
Therefore, if the user was to ask if ${\it BE}[\aseq{U,h}]$ is bounded
by $1.0$ for $h = 00$, then the answer would be yes for $M_1$ but no
for $M_2$.  But, if the user was to ask if ${\it BE}[\aseq{U,h}]$ is
bounded by $1.0$ for all $h$, then the answer would be no for both
$M_1$ and $M_2$.

Finally, we introduce the definition of quantitative information flow
based on {\em channel
  capacity}~\cite{mccamant:pldi2008,malacaria08,NMS2009}, which is
defined to be the maximum of the Shannon-entropy based quantitative
information flow over the distribution.
\begin{definition}[Channel-Capacity-based QIF]
Let $M$ be a program with a high security input $H$, a low security input
$L$, and a low security output $O$.  Then, the channel-capacity-based
quantitative information flow is defined
\[
{\it CC}(M)=\max_\mu\mathcal{I}[\mu](O;H|L)
\]
\end{definition}

Unlike the other definitions above, the channel-capacity based
definition of quantitative information flow is not parameterized by
the distribution over the inputs.  As with the other definitions, let
us test the definition on the running example from
Section~\ref{sec:introduction} by calculating the quantities for the
programs $M_1$ and $M_2$:
\[
\begin{array}{l}
{\it CC}(M_1)=\max_\mu\mathcal{I}[\mu](O;H) =1 \\
{\it CC}(M_2)=\max_\mu\mathcal{I}[\mu](O;H) =2
\end{array}
\]
Note that ${\it CC}(M_1)$ (resp. ${\it CC}(M_2)$) is equal to ${\it
ME}[U](M_1)$ (resp. ${\it ME}[U](M_2)$).  This is not a coincidence.
In fact, it is known that ${\it CC}(M) = {\it ME}[U](M)$ for all
programs $M$ without low security inputs~\cite{smith09}.

\subsection{Non-interference}

\label{sec:nonint}

We recall the notion of
non-interference~\cite{DBLP:conf/sosp/Cohen77,goguen:sp1982}.
\begin{definition}[Non-intereference]
A program $M$ is said to be non-interferent iff for any $h,h'\in
\mathbb{H}$ and $\ell\in\mathbb{L}$, $M(h,\ell)=M(h',\ell)$.
\end{definition}

It can be shown that for the definitions of quantitative information
flow $\mathcal{X}$ introduced above, $\mathcal{X}(M) \leq 0$ iff $M$
is non-interferent.\footnote{Technically, we need the non-zero-ness
  condition on the distribution.  (See below.)}  That is, the bounding
problem (which we only officially define for positive bounds --see
Section~\ref{sec:boundprob}--) degenerates to checking
non-interference when $0$ is given as the bound.

\begin{theorem}
\label{thm:nonint}
Let $\mu$ be a distribution such that $\forall
h\in\mathbb{H},\ell\in\mathbb{L}.\mu(h,\ell)>0$.  Then,
\begin{itemize}
\item $M$ is non-interferent if and only if ${\it SE}[\mu](M) \leq 0$.
\item $M$ is non-interferent if and only if ${\it ME}[\mu](M) \leq 0$.
\item $M$ is non-interferent if and only if ${\it GE}[\mu](M) \leq 0$.
\item $M$ is non-interferent if and only if ${\it BE}[\aseq{\mu',h,\ell}](M)\le 0$.\footnote{Recall Definition~\ref{def:beliefqif} that $\mu'$ is a distribution over $\mathbb{H}$ such that $\mu'(h) > 0$ for all $h \in \mathbb{H}$.}
\item $M$ is non-interferent if and only if ${\it CC}(M) \leq 0$.
\end{itemize}
\end{theorem}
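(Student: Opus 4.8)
The plan is to exploit that all five measures are built as a difference between an ``uncertainty of $H$ given $L$'' term and an ``uncertainty of $H$ given $O$ and $L$'' term (with ${\it CC}$ a supremum of such ${\it SE}$ differences and ${\it BE}$ a self-information), together with the fact that $O = M(H,L)$ is a \emph{deterministic} function of its inputs, so that conditioning on $O$ in addition to $L$ merely refines the information available about $H$. For each measure I would first show that the measure is always nonnegative, so that the hypothesis $\mathcal{X}(M) \le 0$ collapses to the equality $\mathcal{X}(M) = 0$; I would then characterize the equality case and argue, using the full-support hypothesis $\forall h,\ell.\,\mu(h,\ell) > 0$, that it is equivalent to $M(\cdot,\ell)$ being constant in its first argument for every $\ell$, i.e.\ to non-interference.

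For ${\it SE}$ I would invoke the standard nonnegativity of conditional mutual information, ${\it SE}[\mu](M) = \mathcal{I}[\mu](O;H|L) \ge 0$, with equality iff $O$ and $H$ are conditionally independent given $L$. Since $O = M(H,L)$, conditional independence given $L=\ell$ forces $M(\cdot,\ell)$ to be almost surely constant, and the full-support assumption upgrades this to constancy everywhere, hence non-interference; the converse is immediate, since non-interference makes $O$ a function of $L$ alone so that $\mathcal{H}[\mu](H|O,L) = \mathcal{H}[\mu](H|L)$. For ${\it CC}$ I would simply reduce to this case: ${\it CC}(M) = \max_\mu \mathcal{I}[\mu](O;H|L) \ge {\it SE}[U](M) \ge 0$, so ${\it CC}(M) \le 0$ forces ${\it SE}[U](M) = 0$ and hence non-interference by the ${\it SE}$ argument applied to the full-support distribution $U$; conversely non-interference makes every ${\it SE}[\mu](M) = 0$, so the maximum is $0$.

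For ${\it ME}$ and ${\it GE}$ the crux is a refinement (data-processing) inequality for the relevant uncertainty functional. Fix $\ell$ and note that the outputs partition the high inputs into blocks $B_o = \aset{h \mid M(h,\ell) = o}$. For ${\it ME}$ I would compute $\mu(O=o|L=\ell)\,\mathcal{V}[\mu](H|O=o,L=\ell) = \max_{h\in B_o}\mu(H=h|L=\ell)$ and observe that the sum of block-maxima $\sum_o \max_{h\in B_o}\mu(H=h|L=\ell)$ dominates the global maximum $\mathcal{V}[\mu](H|L=\ell) = \max_o\max_{h\in B_o}\mu(H=h|L=\ell)$, strictly whenever at least two blocks are nonempty; since $\mathcal{H}_\infty = \log(1/\mathcal{V})$ this yields ${\it ME}[\mu](M) \ge 0$. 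For ${\it GE}$ I would use that each $h$'s rank \emph{within its block} is at most its global rank, so that after averaging $\mathcal{G}[\mu](H|O,L) \le \mathcal{G}[\mu](H|L)$, again strictly when some block splits. Averaging over $\ell$ with the positive weights $\mu(L=\ell)$ gives nonnegativity in both cases, with equality iff every $B_o$ is the whole space for every $\ell$; the full-support hypothesis is exactly what makes the strict case fire as soon as a single $M(\cdot,\ell)$ is non-constant.

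For ${\it BE}$ I would appeal directly to Lemma~\ref{lem:be}: ${\it BE}[\aseq{\mu',h,\ell}](M) = -\log \sum_{h' \in S}\mu'(h')$ with $S = \aset{h' \mid M(h',\ell) = M(h,\ell)}$. Since $h \in S$ and $\mu'$ has full support, $\sum_{h'\in S}\mu'(h') \in (0,1]$, so ${\it BE} \ge 0$, with equality iff the sum equals $1$, i.e.\ iff $S = \mathbb{H}$, i.e.\ iff $M(\cdot,\ell)$ is constant; reading the equivalence with the intended universal quantification over $h$ and $\ell$ and ranging $\ell$ over $\mathbb{L}$ then gives non-interference. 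I expect the main obstacle to be the two refinement inequalities for ${\it ME}$ and ${\it GE}$ and their sharp equality characterizations --- specifically, pinning down precisely where the full-support hypothesis is used to force a strict increase in vulnerability, respectively a strict decrease in guessing entropy, the moment any one $M(\cdot,\ell)$ splits into two blocks. By contrast the ${\it SE}$, ${\it CC}$, and ${\it BE}$ cases are comparatively routine given nonnegativity of mutual information and Lemma~\ref{lem:be}.
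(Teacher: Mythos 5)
Your proposal is correct and takes essentially the same route as the paper: a per-definition analysis exploiting determinism (the output-preimage blocks of each $M(\cdot,\ell)$), showing each measure is nonnegative and equals zero exactly when every $M(\cdot,\ell)$ is constant, with the full-support hypothesis supplying strictness --- this matches the paper's ${\it ME}$/${\it GE}$ computations and its use of Lemma~\ref{lem:be} for ${\it BE}$. The only minor differences are that the paper cites Clark et al.\ for the ${\it SE}$ case (you supply the standard conditional-mutual-information argument yourself) and handles ${\it CC}$ by constructing an explicit witness distribution in the contrapositive, whereas you derive it more cleanly from $0 \le {\it SE}[U](M) \le {\it CC}(M)$.
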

The equivalence result on the Shannon-entropy-based definition is
proven by Clark et al.~\cite{clark05}.  The proofs for the other four
definitions are given in Appendix~\ref{appendix}.

\subsection{Bounding Problem}

\label{sec:boundprob}

We define the {\em bounding problem} of quantitative information flow
for each definition introduced above.  The bounding problem for the
Shannon-entropy based definition $B_{\it SE}[\mu]$ is defined as
follows: Given a program $M$ and a positive real number $q$, decide if
${\it SE}[\mu](M) \leq q$.\footnote{Note that we treat $\mu$ as a
    parameter of the bounding problem rather than as an input.}
Similarly, we define the bounding problems for the other three
definitions $B_{\it ME}[\mu]$, $B_{\it GE}[\mu]$, and $B_{\it CC}$ as
follows.
\[
\begin{array}{rcl}
B_{\it ME}[\mu]&=&\aset{(M,q)\mid {\it ME}[\mu](M)\le q}\\
B_{\it GE}[\mu]&=&\aset{(M,q)\mid {\it GE}[\mu](M)\le q}\\ 
B_{\it CC}&=&\aset{(M,q)\mid {\it CC}(M)\le q}
\end{array}
\]
We defer the definitions of the belief-based bounding problems to 
Section~\ref{sec:ksafetybelief}.
\section{K-Safety Property}

\label{sec:ksafety}

We show that none of the bounding problems are $k$-safety problems for
any $k$.  Informally, a program property is said to be a {\em
  $k$-safety}
property~\cite{terauchi:sas05,DBLP:conf/csfw/ClarksonS08} if it can be
refuted by observing $k$ number of (finite) execution traces.  A
$k$-safety problem is the problem of checking a $k$-safety property.
Note that the standard safety property is a $1$-safety property.  An
important property of a $k$-safety problem is that it can be reduced
to a standard safety (i.e., $1$-safety) problem, such as the
unreachability problem, via a simple program transformation called
{\em self composition}~\cite{barthe:csfw04,darvas:spc05}.  This allows
one to verify $k$-safety problems by applying powerful automated
safety verification
techniques~\cite{DBLP:conf/popl/BallR02,DBLP:conf/popl/HenzingerJMS02,mcmillan:cav06,DBLP:journals/sttt/BeyerHJM07}
that have made remarkable progress recently.

As stated earlier, we prove that no bounding problem is a $k$-safety
property for any $k$.  (First, we prove the result for {\it SE}, {\it
  ME}, {\it GE}, and {\it CC}, and defer the result for {\it BE} to
Section~\ref{sec:ksafetybelief}.)  To put the result in perspective, we
compare it to the results of the related problems, summarized below.
Here, $\mathcal{X}$ is ${\it SE}[U]$, ${\it ME}[U]$, ${\it GE}[U]$, or
${\it CC}$, and $\mathcal{Y}$ is ${\it SE}$, ${\it ME}$, or ${\it
  GE}$.  (Recall that $U$ denotes the uniform distribution.)
\begin{itemize}
\item[(1)] Checking non-interference is a $2$-safety problem, but it is not $1$-safety.
\item[(2)] Checking $\mathcal{X}(M_1) \leq \mathcal{X}(M_2)$ is not a $k$-safety problem for any $k$.
\item[(3)] Checking $\forall \mu.\mathcal{Y}[\mu](M_1) \leq \mathcal{Y}[\mu](M_2)$ is a $2$-safety problem.
\end{itemize}
The result (1) on non-interference is classic (see, e.g.,
\cite{mclean:sp94,barthe:csfw04,darvas:spc05}).  The results (2) and
(3) on comparison problems are proven in our recent
paper~\cite{DBLP:conf/csfw/yasuoka2010}.  Therefore, this section's
results imply that the bounding problems are harder to verify (at
least, via the self-composition approach) than non-interference and
the quantitative information flow comparison problems with universally
quantified distributions.

Let ${\it Prog}$ be the set of all programs, and ${\mathbb R}^+$
  be the set of positive real numbers.  Let $\sembrack{M}$ denote the
  semantics (i.e., traces) of $M$, represented by the set of
  input/output pairs, that is, $\sembrack{M} = \aset{((h,\ell),o) \mid
    h \in \mathbb{H}, \ell \in \mathbb{L}, o = M(h,\ell)}$.  Then,
  formally, $k$-safety property is defined as follows.
\begin{definition}[$k$-safety property]
We say that a property $P \subseteq {\it Prog}\times {\mathbb R}^+$
is a $k$-safety property iff $(M,q)\not\in P$ implies that there exists
$T \subseteq \sembrack{M}$ such that $|T|\le k$ and $\forall
M'. T\subseteq\sembrack{M'}\Rightarrow(M',q)\not\in P$.
\end{definition}
Note that the original definition of $k$-safety property is only
defined over
programs~\cite{terauchi:sas05,DBLP:conf/csfw/ClarksonS08}.  However,
because the bounding problems take the additional input $q$, we extend
the notion to account for the extra parameter.

We now state the main results of this section which show that none of
the bounding problems are $k$-safety problems for any $k$.  Because we
are interested in hardness, we focus on the case where the
distribution is the uniform distribution.  That is, the results we
prove for the specific case applies to the general case.
\begin{theorem}
Neither $B_{\it SE}[U]$, $B_{\it ME}[U]$, $B_{\it GE}[U]$, nor $B_{\it
  CC}$ is a k-safety property for any k such that $k > 0$.
\label{thm:senk2}
\end{theorem}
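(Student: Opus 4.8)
The plan is to exploit the fact that all four measures are governed by a \emph{global} feature of the program---roughly, the partition of inputs into output-classes and the sizes of those classes---which no bounded set of traces can pin down. For each fixed $k>0$ I would exhibit a highly-leaking witness program together with a bound $q$ that it violates, yet arrange that every subset of at most $k$ of its traces is also consistent with some program that meets the bound; by the definition of $k$-safety this shows no size-$\le k$ refutation exists. For the witness I take $M_n$ to be the identity function on $\mathbb H=\aset{1,\dots,n}$ with $\mathbb L$ a singleton (so $M_n$ ignores the low input), for $n$ chosen large depending on $k$. A direct computation from the definitions shows that for a deterministic program on a uniform high input and trivial low input, $SE[U](M)=\mathcal H[U](O)$ and $ME[U](M)=CC(M)=\log|\ran(M)|$, whence $SE[U](M_n)=ME[U](M_n)=CC(M_n)=\log n$, and using $\mathcal G[U](H)=(n+1)/2$ one gets $GE[U](M_n)=(n-1)/2$.

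Given any $T\subseteq\sembrack{M_n}$ with $|T|\le k$, let $S$ be the (at most $k$) high inputs occurring in $T$, and define $M'$ by $M'(h)=h$ for $h\in S$ and $M'(h)=c$ for $h\notin S$, where $c$ is a fresh value distinct from every element of $S$. Then $T\subseteq\sembrack{M'}$ by construction, and $\ran(M')$ has at most $k+1$ elements, all but one of them singleton classes and one class of size $n-|S|$. A short calculation then bounds each measure on $M'$: $ME[U](M')=CC(M')=\log|\ran(M')|\le\log(k+1)$, while $SE[U](M')\le\log(k+1)$ since Shannon entropy is at most the log of the support size, and $GE[U](M')=|S|-\frac{|S|(|S|+1)}{2n}<k$. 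Choosing $q=\log(k+1)$ for $SE$, $ME$, $CC$ and $q=k$ for $GE$ makes $(M',q)$ lie in the property for every such $T$, while taking $n>k+1$ (respectively $n>2k+1$) makes $(M_n,q)$ lie outside it. Hence no set of at most $k$ traces of $M_n$ refutes the bound, so the property is not $k$-safety; since $k$ was arbitrary, this holds for every $k$.

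The main obstacle is getting these inequalities to hold \emph{uniformly} over all subsets $T$ of size at most $k$: the worst $T$ to accommodate is one whose $k$ outputs are pairwise distinct, since this forces $|\ran(M')|$, and hence the min-entropy and channel-capacity leakage, to be as large as $\log(k+1)$. This is precisely why, for $ME$, $GE$, and $CC$, the witness bound $q$ must be allowed to grow with $k$. For $SE$ one can do strictly better: letting $n\to\infty$ drives $SE[U](M')$ to $0$ (both terms of $\mathcal H[U](O)$ vanish) while $SE[U](M_n)=\log n$ grows without bound, so the same construction refutes $k$-safety even for an arbitrarily small \emph{fixed} positive $q$, which I would record to obtain the stronger statement advertised for the Shannon measure. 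The only remaining work is the guessing-entropy computation for $M'$, the messiest of the four but conceptually routine.
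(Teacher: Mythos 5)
Your proof is correct, and it rests on the same underlying mechanism as the paper's: because the bound $q$ is an \emph{input} to the bounding problem, for each $k$ one may let $q$ grow with $k$ and then show that any candidate counterexample set of at most $k$ traces can be completed to a program whose flow is within $q$. The difference is in how the argument is organized. The paper proves this theorem as an immediate corollary of its finer constant-$q$ results (Theorems~\ref{thm:senk}, \ref{thm:mek}, and \ref{thm:gek}, together with Lemma~\ref{lem:ccme} for ${\it CC}$), whose proofs establish the \emph{tight} $k$-safety thresholds ($\lfloor 2^q\rfloor+1$ for ${\it ME}[U]$ and ${\it CC}$; $\lfloor\frac{(\lfloor q\rfloor+1)^2}{\lfloor q\rfloor+1-q}\rfloor+1$ for ${\it GE}[U]$) and therefore need delicate extremal constructions, e.g.\ the two-output programs in the proof of Theorem~\ref{thm:gek}. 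You instead give one self-contained, unified construction covering all four measures: the identity program as the violating witness, and completion of any trace set $T$ by collapsing every input outside $S$ to a single fresh output. Your calculations check out: $\ran(M')$ has at most $|S|+1\le k+1$ elements, giving ${\it ME}[U](M')={\it CC}(M')\le\log(k+1)$ and ${\it SE}[U](M')\le\log(k+1)$, and by Lemma~\ref{lem:geu} one gets exactly ${\it GE}[U](M')=|S|-\frac{|S|(|S|+1)}{2n}<k$, while the witness satisfies ${\it SE}[U](M_n)={\it ME}[U](M_n)={\it CC}(M_n)=\log n$ and ${\it GE}[U](M_n)=\frac{n-1}{2}$, so your choices $q=\log(k+1)$ (resp.\ $q=k$) with $n>k+1$ (resp.\ $n>2k+1$) do refute $k$-safety. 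Your closing remark that ${\it SE}[U](M')\to 0$ as $n\to\infty$ also recovers the paper's stronger fixed-$q$ statement for the Shannon case (Theorem~\ref{thm:senk}), and your observation that $q$ \emph{must} grow with $k$ for ${\it ME}$, ${\it GE}$, ${\it CC}$ is exactly what the paper's constant-$q$ theorems confirm. What your leaner route gives up is precisely those sharp constant-$q$ thresholds, which the paper's longer route establishes along the way; for the theorem as stated, however, your argument is complete.
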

The result follows from the fact that for each of bounding
  problem $B_\mathcal{X}$ above, for any $k$, there exists $q$ such
  that deciding $(M, q) \in B_\mathcal{X}$ is not a $k$-safety
  property.  In fact, as we show next, for some of the problems such
  as $B_{\it SE}[U]$, even if we fix $q$ to an arbitrary constant,
  there exists no $k$ such that the problem is $k$-safety.  (But for
  other problems, for certain cases, we can find $k$ that depends on
  $q$.)  We defer the details to the next section.  (See also
  Section~\ref{sec:lowsecinputs}.)
\subsection{K-Safety Under a Constant Bound}

\label{sec:ksafetyconst}

The result above appears to suggest that the bounding problems are
equally difficult for ${\it SE}[U]$, ${\it ME}[U]$, ${\it GE}[U]$, and
$CC$.  However, holding the parameter $q$ constant (rather than having
it as an input) paints a different picture.  We show that the problems
become $k$-safety for different definitions for different $k$'s under
different conditions in this case.

First, for $q$ fixed, we show that the bounding problem for the
channel-capacity based definition of quantitative information flow is
$k$-safety for $k = \lfloor2^q\rfloor + 1$.  (Also, this bound is
tight.)
\begin{theorem}
\label{thm:cck}
Let $q$ be a constant.  Then, $B_{\it CC}$ is
$\lfloor2^q\rfloor+1$-safety, but it is not $k$-safety for any $k \leq
\lfloor2^q\rfloor$.
\end{theorem}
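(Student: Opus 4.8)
The plan is to first replace the real-valued condition ${\it CC}(M) \le q$ by a purely combinatorial one. Since $M$ is deterministic, $O = M(H,L)$ is a function of the inputs, so $\mathcal{H}[\mu](O\mid H,L)=0$ and ${\it CC}(M) = \max_\mu \mathcal{H}[\mu](O\mid L)$. For each fixed $\ell$, the term $\mathcal{H}[\mu](O\mid L=\ell)$ is at most $\log|\aset{M(h,\ell)\mid h\in\mathbb{H}}|$, with equality achievable by choosing the conditional distribution on $H$ so that $O$ is uniform over its image. Concentrating all mass of $L$ on the $\ell$ that maximizes this image size then gives ${\it CC}(M) = \log n(M)$, where $n(M) = \max_{\ell\in\mathbb{L}}|\aset{M(h,\ell)\mid h\in\mathbb{H}}|$ is the largest number of distinct outputs produced under any single low input. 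Because $n(M)$ is an integer, $(M,q)\in B_{\it CC}$ iff $n(M)\le\lfloor2^q\rfloor$, and dually $(M,q)\notin B_{\it CC}$ iff $n(M)\ge\lfloor2^q\rfloor+1$.

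For the $(\lfloor2^q\rfloor+1)$-safety claim, suppose $(M,q)\notin B_{\it CC}$, so $n(M)\ge\lfloor2^q\rfloor+1$. Then there are a low input $\ell^*$ and high inputs $h_1,\dots,h_{\lfloor2^q\rfloor+1}$ whose outputs $M(h_i,\ell^*)$ are pairwise distinct. I take $T=\aset{((h_i,\ell^*),M(h_i,\ell^*))\mid 1\le i\le\lfloor2^q\rfloor+1}$, a subset of $\sembrack{M}$ of size $\lfloor2^q\rfloor+1$. Any $M'$ with $T\subseteq\sembrack{M'}$ must agree with $M$ on these pairs, hence produces at least $\lfloor2^q\rfloor+1$ distinct outputs under $\ell^*$, so $n(M')\ge\lfloor2^q\rfloor+1$ and $(M',q)\notin B_{\it CC}$. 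This is exactly the witness demanded by the definition of $(\lfloor2^q\rfloor+1)$-safety.

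For the lower bound, fix $k\le\lfloor2^q\rfloor$ and set $N=\lfloor2^q\rfloor+1$. I would exhibit a single program $M$ (taking $\mathbb{L}$ a singleton, so low inputs play no role) whose image is exactly $\aset{0,1,\dots,N-1}$; then $n(M)=N>2^q$, so $(M,q)\notin B_{\it CC}$. Given any $T\subseteq\sembrack{M}$ with $|T|\le k\le N-1$, at most $N-1$ distinct output values occur in $T$, so by pigeonhole some value $o^*\in\aset{0,\dots,N-1}$ is absent from $T$. I then define $M'(h,\ell)=M(h,\ell)$ whenever $M(h,\ell)\ne o^*$, and $M'(h,\ell)=o'$ otherwise, where $o'$ is any fixed element of $\aset{0,\dots,N-1}\setminus\aset{o^*}$. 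Since no pair of $T$ has output $o^*$, $M'$ agrees with $M$ on all of $T$, so $T\subseteq\sembrack{M'}$; and the image of $M'$ has size $N-1=\lfloor2^q\rfloor$, so ${\it CC}(M')=\log\lfloor2^q\rfloor\le q$, i.e. $(M',q)\in B_{\it CC}$. Thus no $T$ of size at most $k$ can witness $(M,q)\notin B_{\it CC}$, so $B_{\it CC}$ is not $k$-safety.

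The one step needing care is the combinatorial characterization ${\it CC}(M)=\log n(M)$: I must justify both that the per-$\ell$ entropy is maximized (and equals the logarithm of the image size) by a uniform choice of the high input, and that concentrating $L$ on the best $\ell$ is optimal, the latter because $\mathcal{H}[\mu](O\mid L)$ is a weighted average of the per-$\ell$ entropies and so cannot exceed their maximum. Once this characterization is established, both directions reduce to the elementary counting observation that $k\le\lfloor2^q\rfloor$ traces can never exhibit all $\lfloor2^q\rfloor+1$ distinct outputs required to force ${\it CC}>q$, which is the crux of the tightness.
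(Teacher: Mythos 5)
Your proof is correct and follows essentially the same route as the paper's: both arguments rest on the characterization ${\it CC}(M)=\max_{\ell}\log|M[\mathbb{H},\ell]|$ (which the paper simply cites as Lemma~\ref{lem:ccloglow}, due to Malacaria and Chen, while you re-derive it), both use the same witness of $\lfloor2^q\rfloor+1$ traces with a common low input and pairwise distinct outputs for the upper bound, and both use the same pigeonhole observation (at most $k\le\lfloor2^q\rfloor$ traces cannot exhibit $\lfloor2^q\rfloor+1$ outputs) for tightness. The only cosmetic difference is in the lower bound: the paper completes a candidate counterexample $T$ to the program whose semantics is exactly $T$, whereas you keep the domain of your fixed $M$ and merge the output class missing from $T$ into another output value; both constructions produce a program containing $T$ with at most $\lfloor2^q\rfloor$ outputs, hence in $B_{\it CC}$.
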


We briefly explain the intuition behind the above result. Recall that
a problem being $k$-safety means the existence of a {\em
  counterexample} trace set of size at most $k$.  That is, for $(M, q)
\notin B_{\it CC}$, we have $T \subseteq \sembrack{M}$ such that $|T|
\leq \lfloor2^q\rfloor+1$ such that any program that also contains $T$
as its traces also does not belong to $B_{\it CC}$ (with $q$), that
is, its channel-capacity-based quantitative information flow is
greater than $q$.  Then, the above result follows from the fact that
the channel-capacity-based quantitative information flow coincides
with the maximum over the low security inputs of the logarithm of the
number of outputs~\cite{malacaria08}, therefore, any $T$ containing
$\lfloor 2^q\rfloor +1$ traces of the same low security input and
disjoint outputs is a counterexample.

For concreteness, we show how to check $B_{\it CC}$ via self
composition.  Suppose we are given a program $M$ and a positive real
$q$.  We construct the self-composed program $M'$ shown below.
\[
\begin{array}{l}
  M'(H_1,H_2,\dots,H_n,L)\equiv\\
  \quad O_1:=M(H_1,L);O_2:=M(H_2,L);\dots;O_n:=M(H_n,L);\\
  \quad {\sf assert}(\bigvee_{i,j\in\aset{1,\dots,n}}(O_i=O_j\wedge i\not=j))
\end{array}
\]
where $n=\lfloor 2^q\rfloor +1$.  In general, a self composition
involves making $k$ copies the original program so that the resulting
program would generate $k$ traces of the original (having the desired
property).  By the result proven by Malacaria and
Chen~\cite{malacaria08}(see also Lemma~\ref{lem:ccloglow}), it follows
that $M'$ does not cause an assertion failure iff $(M,q)\in B_{\it
  CC}$.

Next, we show that for programs without low security inputs, $B_{\it
ME}[U]$ and $B_{\it GE}[U]$ are also both $k$-safety problems (but for
different $k$'s) when $q$ is held constant.
\begin{theorem}
\label{thm:mek}
Let $q$ be a constant, and suppose $B_{\it ME}[U]$ only takes programs
  without low security inputs.  Then, $B_{\it ME}[U]$ is $\lfloor
  2^q\rfloor +1$-safety, but it is not $k$-safety for any $k \leq
  \lfloor2^q\rfloor$.
\end{theorem}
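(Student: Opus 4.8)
The plan is to first reduce the whole statement to a clean characterization of ${\it ME}[U]$ on programs without low security inputs, and then mimic the counting argument used for $B_{\it CC}$ in Theorem~\ref{thm:cck}. The key lemma I would establish is that for a program $M$ with no low security input and uniform $U$ over $\mathbb{H}$, we have ${\it ME}[U](M) = \log |\ran(M)|$, where $|\ran(M)|$ is the number of distinct outputs $M$ produces. This follows by direct computation: $\mathcal{H}_\infty[U](H) = \log |\mathbb{H}|$, and since $M$ is deterministic the conditional distribution of $H$ given $O=o$ is uniform over the preimage $M^{-1}(o)$, so $\mathcal{V}[U](H \mid O=o) = 1/|M^{-1}(o)|$ and hence $\mathcal{V}[U](H \mid O) = \sum_o U(O=o)/|M^{-1}(o)| = |\ran(M)|/|\mathbb{H}|$, giving $\mathcal{H}_\infty[U](H \mid O) = \log(|\mathbb{H}|/|\ran(M)|)$ and the claimed cancellation. (Alternatively, this is immediate from the stated identity ${\it CC}(M) = {\it ME}[U](M)$ on programs without low security inputs together with Lemma~\ref{lem:ccloglow}.) Consequently $(M,q) \in B_{\it ME}[U]$ iff $\log|\ran(M)| \le q$ iff $|\ran(M)| \le \lfloor 2^q\rfloor$, using that $|\ran(M)|$ is an integer.

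With this characterization, the $\lfloor 2^q\rfloor + 1$-safety direction is a short extraction argument. Set $n = \lfloor 2^q\rfloor + 1$ and suppose $(M,q) \notin B_{\it ME}[U]$, so $|\ran(M)| \ge n$. Choose $n$ input/output traces of $M$ with pairwise distinct outputs and let $T$ be this set, so $|T| = n$. Any $M'$ with $T \subseteq \sembrack{M'}$ must still produce these $n$ distinct outputs, so $|\ran(M')| \ge n > 2^q$ (using $\lfloor x\rfloor + 1 > x$), whence ${\it ME}[U](M') = \log|\ran(M')| > q$ and $(M',q)\notin B_{\it ME}[U]$. This exhibits the required counterexample set of size $n$.

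For the lower bound I would refute $k$-safety for every $k \le \lfloor 2^q\rfloor$ by fixing a single witness program. Take $M$ (over a sufficiently large $\mathbb{H}$, which is possible since $q$ and hence $\lfloor 2^q\rfloor$ are fixed) with $|\ran(M)| \ge \lfloor 2^q\rfloor + 1$, so $(M,q) \notin B_{\it ME}[U]$. Given any candidate counterexample set $T \subseteq \sembrack{M}$ with $|T| \le \lfloor 2^q\rfloor$, note that $T$ mentions at most $\lfloor 2^q\rfloor$ distinct outputs. Define $M'$ to agree with $M$ on every input appearing in $T$ and to map every other input to a single fixed output already occurring in $T$; then $T \subseteq \sembrack{M'}$ and $|\ran(M')| \le |T| \le \lfloor 2^q\rfloor$, so ${\it ME}[U](M') = \log|\ran(M')| \le q$ and $(M',q) \in B_{\it ME}[U]$. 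Thus no $T$ of size at most $\lfloor 2^q\rfloor$ witnesses the $k$-safety condition for this $M$, so $B_{\it ME}[U]$ is not $k$-safety for any $k \le \lfloor 2^q\rfloor$.

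I expect the only real work to be the min-entropy computation in the key lemma; once ${\it ME}[U](M) = \log|\ran(M)|$ is in hand, the two $k$-safety claims are the same counting argument as for $B_{\it CC}$ (Theorem~\ref{thm:cck}), with the integrality of $|\ran(M)|$ supplying the exact threshold $\lfloor 2^q\rfloor$. A minor point to check is that the extension program $M'$ in each direction stays within the admissible program class (e.g.\ realizable as a loop-free boolean program when inputs are bit-vectors), which is routine since $M'$ only hardcodes finitely many input/output pairs and a default output.
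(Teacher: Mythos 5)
Your proposal is correct and follows essentially the same route as the paper: the paper proves Theorem~\ref{thm:mek} by combining Lemma~\ref{lem:ccme} (${\it CC}(M)={\it ME}[U](M)$ for programs without low security inputs) with Theorem~\ref{thm:cck}, whose proof is exactly your output-counting argument via Lemma~\ref{lem:ccloglow}. You have merely inlined those two citations into a self-contained argument, with your key identity ${\it ME}[U](M)=\log|\ran(M)|$ being precisely the combination of those two lemmas specialized to the low-input-free case.
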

\begin{theorem}
\label{thm:gek}
Let $q$ be a constant, and suppose $B_{\it GE}[U]$ only takes programs
without low security inputs.  If $q\ge\frac{1}{2}$, then, $B_{\it
  GE}[U]$ is $\lfloor\frac{(\lfloor q\rfloor +1)^2}{\lfloor q\rfloor
  +1 -q} \rfloor +1$-safety, but it is not $k$-safety for any $k \leq
\lfloor\frac{(\lfloor q\rfloor +1)^2}{\lfloor q\rfloor +1 -q}
\rfloor$.  Otherwise, $q<\frac{1}{2}$ and $B_{\it GE}[U]$ is
$2$-safety, but it is not $1$-safety.
\end{theorem}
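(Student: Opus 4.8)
The plan is to turn the $k$-safety question into a purely combinatorial extremal problem about integer partitions, solve that problem exactly, and read off both the positive ($K$-safety) and negative (non-$(K{-}1)$-safety) halves at once.

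First I would record a closed form. Since $M$ has no low input, it partitions $\mathbb H$ into output-classes; writing $n_1,\dots,n_r$ for their sizes and $N=\sum_i n_i=|\mathbb H|$, a direct calculation from Definition~\ref{def:ge} under $U$ gives
\[
{\it GE}[U](M)=\frac{N^2-\sum_i n_i^2}{2N}=\frac{\sum_{i<j}n_in_j}{\sum_i n_i}=:f(n_1,\dots,n_r).
\]
A trace set $T\subseteq\sembrack M$ is exactly a choice of how many inputs it keeps from each class, i.e.\ a vector $(c_1,\dots,c_r)$ with $0\le c_i\le n_i$ and $|T|=\sum_i c_i$. Two monotonicity facts drive everything: $f$ is nondecreasing in each argument, and adjoining a new class strictly raises $f$. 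Hence every $M'$ with $T\subseteq\sembrack{M'}$ satisfies ${\it GE}[U](M')\ge f(c_1,\dots,c_r)$, with equality when $M'$ realizes exactly $T$, so $T$ forces every extension out of $B_{\it GE}[U]$ iff $f(c_1,\dots,c_r)>q$.

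Call $c$ a \emph{minimal witness} if $f(c)>q$ but $f(c-e_i)\le q$ for all $i$ (where $c-e_i$ lowers the $i$th coordinate by one). Using the previous paragraph together with monotonicity one checks the clean equivalence: $B_{\it GE}[U]$ is $k$-safety iff every minimal witness has weight $\le k$. Indeed, for ``$\Leftarrow$'' one shrinks any counterexample by deleting inputs while staying above $q$ until a minimal witness (of weight $\le k$) is reached; for ``$\Rightarrow$'' any strictly smaller witness of a program realizing a minimal witness $c$ lies below some $c-e_i$ and is killed by monotonicity. So the exact threshold is $\max\{|c|:c\text{ minimal witness}\}$, and it suffices to prove this equals $K:=\lfloor\frac{n^2}{n-q}\rfloor+1$ with $n=\lfloor q\rfloor+1$ (non-$k$-safety for every $k\le K-1$ then follows from non-$(K{-}1)$-safety, since $k$-safety implies $(k{+}1)$-safety).

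For the upper bound take a minimal witness $c_1\ge\dots\ge c_r$, put $R=N-c_1$, $A'=\sum_{i\ge2}c_i^2$, and combine $f(c)>q$ with $f(c-e_1)\le q$. Writing $S=\sum_{i<j}c_ic_j=NR-\tfrac12(R^2+A')$ these read $N>\frac{R^2+A'}{2(R-q)}$ and $N\le\frac{R^2+A'}{2(R-q)}+1$, which already forces $R>q$ and pins $N$ to an interval of length one. Bounding $A'\le\min(R^2,c_1R)$ I split on $R\le c_1$ versus $R>c_1$. When $R\le c_1$ we get $N\le\frac{R^2}{R-q}+1$, and feasibility $R\le c_1=N-R$ bounds $R$ from above so that, over admissible integers $R>q$, the value of $R\mapsto\frac{R^2}{R-q}$ is largest at the least admissible $R=n$, giving $N\le\lfloor\frac{n^2}{n-q}\rfloor+1=K$. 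When $R>c_1$, the bound $A'\le c_1R$ yields $N^2-(4q+2)N+4q<0$, i.e.\ $N<(2q+1)+\sqrt{4q^2+1}$, which one checks never forces $N$ above $K$. For the matching lower bound the two-class vector $\bigl(\lfloor\tfrac{qn}{n-q}\rfloor+1,\;n\bigr)$ is a minimal witness of weight exactly $K$ (realized by the program with one class of size $n$ and one huge class of size $L$, where $f(L,n)=\tfrac{Ln}{L+n}\to n>q$): every strictly smaller subvector $(c_0,b)$ has $b\le n$ and weight $m\le K-1$, so $f(c_0,b)=\tfrac{c_0b}{c_0+b}\le n-\tfrac{n^2}{m}\le q$. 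Finally the regime $q<\tfrac12$ (where $n=1$ but $2q<n$, so the monotonicity step does not apply) is handled directly: $f(1,1)=\tfrac12>q$ lets every witness shrink to two traces, giving $2$-safety, while a single trace has $f=0\le q$, ruling out $1$-safety.

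The main obstacle is the many-class regime $R>c_1$ inside the upper bound: with no single dominant class to pivot on, the length-one interval for $N$ is less informative and one must show that spreading mass across many classes can never yield a minimal witness heavier than the two-class extremizer. Here the continuous estimate $(2q+1)+\sqrt{4q^2+1}$ can slightly exceed $K$ as a real number (for $q$ just above an integer), so the argument must lean on the floor defining $K$ and track integrality rather than rest on the real-valued inequality alone.
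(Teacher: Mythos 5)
Your proposal is correct and takes essentially the same route as the paper's own proof: your closed form and two monotonicity facts are Lemmas~\ref{lem:geu} and~\ref{lem:gemono2}, your minimal-witness analysis with the sandwich $\frac{R^2+A'}{2(R-q)}<N\le\frac{R^2+A'}{2(R-q)}+1$ is the content of Lemmas~\ref{lem:gemax} and~\ref{lem:gemax3}, your lower-bound witness $\bigl(\lfloor qn/(n-q)\rfloor+1,\,n\bigr)$ is exactly the paper's two-class program with $i=\lfloor q\rfloor+1$, and the $q<\frac{1}{2}$ case degenerates to non-interference in both treatments. The only organizational difference is that you split on $R\le c_1$ versus $R>c_1$ where the paper reduces to two-output programs via $B\le i^2$, and both arguments leave the same delicate extremal/integrality verification as an assertion (your ``one checks'' and ``largest at the least admissible $R$''; the paper's ``by elementary real analysis'').
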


The result for ${\it ME}[U]$ follows from the fact that for programs
without low security inputs, the min-entropy based quantitative
information flow with the uniform distribution is actually equivalent
to the channel-capacity based quantitative information
flow~\cite{smith09}.  The result for ${\it GE}[U]$ may appear less
intuitive, but, the key observation is that, like the channel-capacity
based definition and the min-entropy based definition with the uniform
distribution (for the case without low security inputs), for any set
of traces $T = \sembrack{M}$, the information flow of a program
containing $T$ would be at least as large as that of $M$.  Therefore,
by holding $q$ constant, we can always find a large enough
counterexample $T$.  The reason $B_{\it GE}[U]$ is $2$-safety for
$q<\frac{1}{2}$ is because, in the absence of low security inputs, the
minimum non-zero quantity of ${\it GE}[U](M)$ is bounded (by $1/2$),
and so for such $q$, the problem ${\it GE}[U](M) \leq q$ is equivalent
to checking non-interference.\footnote{In fact, the minimum non-zero
  quantity property also exists for {\it ME}[U] without low security
  inputs and {\it CC}.  There, the minimum non-zero quantity is $1$,
  which agrees with the formulas given in the theorems.}

But, when low security inputs are allowed, neither $B_{\it ME}[U]$ nor
$B_{\it GE}[U]$ are $k$-safety for any $k$, even when $q$ is held
constant.
\begin{theorem}
\label{thm:menk}
Let $q$ be a constant.  (And let $B_{\it ME}[U]$ take programs with
low security inputs.) Then, $B_{\it ME}[U]$ is not a $k$-safety
property for any $k > 0$.
\end{theorem}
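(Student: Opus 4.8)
The plan is to reduce the min-entropy bounding problem to a purely combinatorial statement about output counts, and then, for each $k$, exhibit a single counterexample program whose violation of the bound is a \emph{global average} over the low security inputs that no set of $k$ traces can witness.

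First I would establish the key identity that, for the uniform distribution $U$ over input pairs $(h,\ell)$,
\[
{\it ME}[U](M) = \log \bar{r}, \qquad \bar{r} = \frac{1}{|\mathbb{L}|}\sum_{\ell \in \mathbb{L}} r_\ell,
\]
where $r_\ell = |\aset{M(h,\ell) \mid h \in \mathbb{H}}|$ is the number of distinct outputs produced on low input $\ell$. This follows by direct computation: under $U$ one gets $\mathcal{V}[U](H|L) = 1/|\mathbb{H}|$, and for each $\ell$ and each reachable output $o$ the posterior $\mathcal{V}[U](H|O=o,L=\ell)$ is the reciprocal of the size of the preimage of $o$, so these reciprocals cancel the preimage weights in the sum defining $\mathcal{V}[U](H|O,L)$, leaving exactly $\bar{r}/|\mathbb{H}|$. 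Hence ${\it ME}[U](M) \le q$ is equivalent to $\bar{r} \le 2^q$, i.e.\ $\sum_\ell r_\ell \le |\mathbb{L}|\cdot 2^q$. This generalizes the no-low-input case used in Theorem~\ref{thm:mek}, where the leaked quantity is a single count rather than an average.

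Next, fix $k$ and set $r = \lfloor 2^q\rfloor + 1$, so $r > 2^q$. Let $\mathbb{H} = \aset{0,\dots,N-1}$ with $N \ge r$, choose $|\mathbb{L}| = m$ large enough that $1 + k/m \le 2^q$ (possible since $q > 0$ gives $2^q > 1$), and define $M(h,\ell) = h \bmod r$, independent of $\ell$. Every slice then has exactly $r$ distinct outputs, so $\bar{r} = r$ and ${\it ME}[U](M) = \log r > q$; thus $(M,q) \notin B_{\it ME}[U]$. Now for any $T \subseteq \sembrack{M}$ with $|T| \le k$, I would construct $M'$ over the same $\mathbb{H},\mathbb{L}$ as follows: on every low input $\ell$ not appearing in any trace of $T$, make $M'(\cdot,\ell)$ constant; on each low input that does appear, let $M'$ agree with $T$ on the constrained high inputs and map every other high input to one of the outputs already forced by $T$ on that slice. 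By construction $T \subseteq \sembrack{M'}$; the touched slices (at most $k$ of them) contribute at most $|T| \le k$ to $\sum_\ell r'_\ell$, and each remaining slice contributes $1$, so $\bar{r}' \le (k+m)/m = 1 + k/m \le 2^q$. Hence ${\it ME}[U](M') = \log\bar{r}' \le q$, i.e.\ $(M',q) \in B_{\it ME}[U]$, which refutes $k$-safety. Since $k$ was arbitrary, $B_{\it ME}[U]$ is not $k$-safety for any $k$.

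The main obstacle is establishing the averaging identity ${\it ME}[U](M) = \log\bar{r}$ in the presence of low security inputs; this is the one nonroutine computation, and it is precisely what separates this theorem from Theorem~\ref{thm:mek}. Once the identity is in hand, the rest is a dilution argument: a large number $m$ of low inputs, most of them carrying non-interferent (single-output) slices in $M'$, drives any finite collection of high-leakage slices forced by $T$ below the constant threshold $2^q$, while the uniform original program $M$ keeps the average above it.
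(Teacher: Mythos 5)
Your proof is correct and follows essentially the same route as the paper: your averaging identity ${\it ME}[U](M)=\log\bar{r}$ is exactly the paper's Lemma~\ref{lem:mel} (stated there as $\log(|\mathbb{O}_\mathbb{L}|/|\mathbb{L}|)$), and your dilution construction of $M'$ mirrors the paper's program $\bar{M}$, which extends the trace set $T$ by mapping all remaining inputs to a single output over a large low-input domain. The only (immaterial) difference is framing: the paper argues by contradiction from an arbitrary $M\notin B_{\it ME}[U]$ and enlarges the low-input domain when building the extension, whereas you pre-choose a concrete witness $M$ over a domain already large enough that every candidate counterexample $T$ of size at most $k$ can be diluted within the same domain.
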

\begin{theorem}
\label{thm:genk}
Let $q$ be a constant.  (And let $B_{\it GE}[U]$ take programs with
low security inputs.) Then, $B_{\it GE}[U]$ is not a $k$-safety
property for any $k > 0$.
\end{theorem}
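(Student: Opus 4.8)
Let $q$ be a constant. Then $B_{\it GE}[U]$ (taking programs with low security inputs) is not a $k$-safety property for any $k > 0$.

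Let me think about how to prove this.

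We want to show that for GE with uniform distribution, when low security inputs ARE allowed, the bounding problem is not k-safety for any k, even with q fixed constant.

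The key insight from the earlier theorems: when there are NO low security inputs, GE[U] has a "monotonicity" property — adding more traces can only increase the information flow — which is why we can find large counterexamples and get k-safety for constant q. But WITH low security inputs, this breaks down.

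Let me recall the definition of k-safety. A property P fails to be k-safety means: there exists (M,q) ∉ P such that for EVERY T ⊆ ⟦M⟧ with |T| ≤ k, there is some M' with T ⊆ ⟦M'⟧ and (M',q) ∈ P. In other words, no small counterexample trace set "commits" the program to being outside P — every small subset of traces can be extended to a program that's inside P (satisfies the bound).

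So to show NOT k-safety for any k, I need: for each k, construct a program M_k such that:
1. (M_k, q) ∉ B_GE[U], i.e., GE[U](M_k) > q.
2. For every subset T of ⟦M_k⟧ with |T| ≤ k, there exists M' ⊇ T (as traces) with GE[U](M') ≤ q.

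The way low security inputs help: GE[U](M) = G[U](H|L) - G[U](H|O,L), where we average over low inputs L. The presence of L lets us "dilute" the information flow. The idea: the program has many low security inputs ℓ. For most values of ℓ, the program reveals nothing (non-interferent in that ℓ-slice), contributing 0 to the flow. For a FEW special values of ℓ, the program leaks a lot. The average over all ℓ is weighted, so if the leaky ℓ's are rare, the average flow G[U](H|O,L) can be made small even though individual slices leak a lot.

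Crucially, because the flow is an AVERAGE over L, observing only k traces can't pin down how much leaking happens: the k traces might all come from the leaky slices (making the program look bad), but one can always construct M' where those SAME k traces are present, yet the non-interferent slices dominate the average, bringing GE[U](M') ≤ q.

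The plan is as follows. First I would fix k and the constant q. I construct a program M_k over a high space H and a low space L chosen large enough. I design M_k so that GE[U](M_k) > q. The simplest design: pick a single leaky low input ℓ_0 on which M_k acts as the identity on H (so the ℓ_0-slice leaks maximally), and let all other low inputs be non-interferent (constant output). By choosing |H| and the number of low inputs appropriately I can tune the average so that GE[U](M_k) exceeds q — this requires making the identity slice leak enough to survive averaging, so I take the number of low inputs small relative to the leak. I'll verify this gives flow > q by direct computation using the definition.

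Next, the heart of the argument: I show no counterexample trace set of size ≤ k exists. Given any T ⊆ ⟦M_k⟧ with |T| ≤ k, I construct M' agreeing with M_k on T but with GE[U](M') ≤ q. The construction of M' makes M' non-interferent (or nearly so) on all low-input slices EXCEPT possibly on the slices touched by T, and even there I extend the behavior on the unobserved high inputs to be "mostly constant," so that the distinguishing power of each slice is capped. Since T has only k traces, it touches at most k distinct (ℓ, h) points; I pad each touched slice with many high inputs mapping to a single common output, so that even on those slices the conditional guessing entropy drop is tiny. Because the averaging over L dilutes the few partially-leaky slices, GE[U](M') stays below q. I would verify the averaged guessing-entropy computation to confirm GE[U](M') ≤ q.

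The main obstacle is the arithmetic of the guessing-entropy averaging: unlike min-entropy or channel capacity, the guessing entropy G[U](H|O,L) is a weighted sum of per-output ranks, and I must control the quantity G[U](H|L) - G[U](H|O,L) after padding the k touched slices. The delicate point is ensuring that the padding in M' genuinely forces the per-slice flow on the touched slices down far enough — I must choose the padding high-input multiplicity large enough relative to k and q, while keeping M_k itself above q. I expect the cleanest route is to make M' non-interferent on every slice by remapping all outputs within a touched slice to a single value except at the specific high inputs named in T; since T names at most k high inputs total, the residual guessing-entropy contribution is O(k) per slice and can be made negligible in the L-average by taking $|\mathbb{L}|$ large, which is exactly the freedom that disappears when low inputs are disallowed. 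The rest is routine verification that both bounds hold.
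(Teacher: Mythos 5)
Your proposal is correct and takes essentially the same route as the paper's proof: given any purported counterexample set $T$ of at most $k$ traces, extend it to a program that maps every input outside $T$ to one fixed output over a much larger low-security input domain, so that the averaging over low inputs dilutes the guessing-entropy flow below $q$ (the paper does exactly this, using the explicit formula of Lemma~\ref{lem:geu2} to bound the flow by roughly $i(n-1)/2m$, which vanishes as $|\mathbb{L}|$ grows). The only cosmetic difference is that you phrase the argument as a direct negation with an explicitly constructed violating program $M_k$, whereas the paper argues by contradiction starting from an arbitrary program violating the bound.
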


Finally, we show that the Shannon-entropy based definition (with the
uniform distribution) is the hardest of all the definitions and show
that its bounding problem is not a $k$-safety property for any $k$,
with or without low-security inputs, even when $q$ is held constant.
\begin{theorem}
\label{thm:senk}
Let $q$ be a constant, and suppose $B_{\it SE}[U]$ only takes programs
without low security inputs. Then, $B_{\it SE}[U]$ is not a $k$-safety
property for any $k > 0$.
\end{theorem}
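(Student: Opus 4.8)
The plan is to exploit the fact that, for programs without low-security inputs, ${\it SE}[U]$ coincides with the Shannon entropy of the output distribution under the uniform input distribution. Concretely, if a program induces a partition of an input space of size $N$ into output-preimage blocks of sizes $n_1,\ldots,n_m$, then a short calculation gives ${\it SE}[U](M) = \log N - \sum_i \frac{n_i}{N}\log n_i = -\sum_i \frac{n_i}{N}\log\frac{n_i}{N}$, i.e.\ the entropy of the block-size distribution. The essential difference from the channel-capacity and min-entropy cases (Theorems~\ref{thm:cck} and~\ref{thm:mek}) is that this quantity is \emph{not} monotone under the addition of traces: padding a program with many inputs that all collapse to a single output drives the output distribution toward a point mass and hence drives ${\it SE}[U]$ toward $0$. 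This is exactly what defeats $k$-safety for every fixed $k$, even with $q$ held constant.

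First I would fix, independently of $k$, a witness program $M$ that violates the bound: take $M$ to be the identity map on an input space of size $N$ with $N > 2^q$, so that the output distribution is uniform over $N$ values and ${\it SE}[U](M) = \log N > q$; thus $(M,q) \notin B_{\it SE}[U]$. It then suffices to show that, for every $k$, no trace set $T \subseteq \sembrack{M}$ with $|T|\le k$ can serve as a counterexample witness --- that is, for each such $T$ there is a program $M'$ with $T \subseteq \sembrack{M'}$ and ${\it SE}[U](M') \le q$.

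Given such a $T$, I would construct $M'$ over an enlarged input space of size $N'$ that (i) reproduces each of the at most $k$ traces of $T$ verbatim, and (ii) maps all remaining $N'-k$ inputs to a single fixed output $o^*$. This is expressible as a loop-free program by a $k$-way case split on the inputs occurring in $T$ with a default branch to $o^*$, and by construction $T \subseteq \sembrack{M'}$. Its output distribution has one dominant block of probability $\frac{N'-k}{N'}$ together with at most $k$ further blocks, whence ${\it SE}[U](M') \le \frac{N'-k}{N'}\log\frac{N'}{N'-k} + \frac{k}{N'}\log N'$. Both terms tend to $0$ as $N' \to \infty$, so choosing $N'$ large enough (depending on $k$ and the fixed $q$, which is permitted) yields ${\it SE}[U](M') \le q$, i.e.\ $(M',q) \in B_{\it SE}[U]$. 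Since this rescuing construction works for every $T$ of size at most $k$, the single program $M$ shows that $B_{\it SE}[U]$ is not $k$-safety; as $k$ was arbitrary, the claim follows.

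The dilution estimate itself is routine; the conceptual heart of the argument is recognizing that the non-monotonicity of Shannon entropy under trace addition is what separates this case from the min-entropy and channel-capacity cases, and checking that this non-monotonicity survives holding $q$ constant. Indeed, it is precisely the freedom to grow the padding $N'$ without bound --- driving the entropy below any fixed positive $q$ --- that makes constant-$q$ non-$k$-safety possible here, whereas for ${\it CC}$ and ${\it ME}[U]$ (without low inputs) the quantity only increases with additional disjoint-output traces, yielding $k$-safety for constant $q$.
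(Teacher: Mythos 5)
Your proposal is correct and takes essentially the same route as the paper's own proof: given any candidate counterexample set $T$ of size at most $k$, both arguments extend $T$ with a large number of fresh inputs all mapped to one fixed output, so that the output entropy---and hence ${\it SE}[U]$---drops below the fixed bound $q$ as the padding grows, contradicting $k$-safety. The only cosmetic differences are that you exhibit the identity program as an explicit violating witness (the paper starts from an arbitrary $M \notin B_{\it SE}[U]$), and you compute the diluted entropy directly, whereas the paper first majorizes it by the entropy of a variant $\bar{M'}$ whose $T$-traces receive pairwise distinct outputs.
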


Intuitively, Theorems~\ref{thm:menk}, \ref{thm:genk}, and
\ref{thm:senk} follow from the fact that, for these definitions, given
any potential counterexample $T \subseteq \sembrack{M}$ to show $(M,q)
\notin B_\mathcal{X}$, it is possible to find $M'$ containing $T$
whose information flow is arbitrarily close to $0$ (and so $(M',q) \in
B_\mathcal{X}$).  See Section~\ref{sec:lowsecinputs} for further
discussion.

Because $k$ tends to grow large as $q$ grows for all the definitions
and it is impossible to bound $k$ for all $q$, this section's results
are unlikely to lead to a practical verification of quantitative
information flow.  \footnote{But, a recent
  work~\cite{DBLP:conf/acsac/Heusser2010} shows some promising
  results.}  Nevertheless, the results reveal interesting disparities
among the different proposals for the definition of quantitative
information flow.

\subsection{K-Safety for Belief-based Definition}

\label{sec:ksafetybelief}

This section investigates the hardness of the bounding problems for
the belief-based definition of quantitative information flow.  We
define two types of bounding problems.
\[
\begin{array}{rcl}
B_{\it BE1}[\aseq{\mu,h,\ell}]&=&\aset{(M,q)\mid {\it
BE}[\aseq{\mu,h,\ell}](M)\le q}\\ 
B_{\it BE2}[\mu]&=&\aset{(M,q)\mid \forall h,\ell.{\it
BE}[\aseq{\mu,h,\ell}](M)\le q}
\end{array}
\]
$B_{\it BE1}$ checks the program's information flow against the given
quantity for a specific input pair $h,\ell$ whereas $B_{\it BE2}$
checks that for all inputs.

We show that these problems are not a $k$-safety problems for any $k$,
at least when $q$ is not a constant.  To put the result in
perspective, we compare to the results of the comparison problem for
the belief-based quantitative information flow
problem~\cite{yasuoka:toplas2010submit}.
\begin{itemize}
\item[(1)] Checking ${\it BE}[\aseq{U,h,\ell}](M_1) \leq{\it BE}[\aseq{U,h,\ell}](M_2)$ is not a $k$-safety problem for any $k$.
\item[(2)] Checking $\forall h,\ell.{\it BE}[\aseq{U,h,\ell}](M_1) \leq{\it BE}[\aseq{U,h,\ell}](M_2)$ is not a $k$-safety problem for any $k$.
\item[(3)] Checking $\forall \mu,h,\ell.{\it BE}[\aseq{\mu,h,\ell}](M_1) \leq{\it BE}[\aseq{\mu,h,\ell}](M_2)$ is a $2$-safety problem.
\end{itemize}
Note that the problem in (3) compares the two programs for {\em all}
experiments $\aseq{\mu,h,\ell}$.  This problem also turns out to
be equivalent to the comparison problems with universally quantified
distributions for {\it SE}, {\it ME}, and {\it GE} discussed in
Section~\ref{sec:ksafety}.  Hence, this section's non-$k$-safety
results show that the bounding problems $B_{\it BE1}$ and $B_{\it
  BE2}$ are harder to verify (at least, via the self-composition
approach) than non-interference and the comparison problems with
universally quantified distributions and experiments.

First, we show that $B_{\it BE1}[\aseq{U,h,\ell}]$ is not a $k$-safety
property for any $k$, even when $q$ is held constant, and even without
low security inputs.
\begin{theorem}
\label{thm:be1nk}
Let $q$ be a constant, and suppose $B_{\it BE1}[\aseq{U,h}]$ only
takes programs without low security inputs.  Then, $B_{\it
  BE1}[\aseq{U,h}]$ is not a $k$-safety property for any $k > 0$.
\end{theorem}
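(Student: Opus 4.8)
The plan is to reduce everything to the closed form of Lemma~\ref{lem:be}. For a program $M$ without low-security inputs and the uniform distribution $U$ over its input space $\mathbb{H}$ (write $N = |\mathbb{H}|$), we have ${\it BE}[\aseq{U,h}](M) = -\log\frac{n_o}{N} = \log N - \log n_o$, where $o = M(h)$ and $n_o = |\aset{h' \mid M(h') = o}|$ is the number of inputs colliding with $h$ at its output. Hence the bound ${\it BE}[\aseq{U,h}](M)\le q$ is equivalent to $n_o \ge N/2^q$, i.e.\ it is violated exactly when the output produced on $h$ is ``rare.'' To refute $k$-safety for an arbitrary fixed $k$, I would, following the definition, produce a single witness $(M,q)\notin B_{\it BE1}$ such that every candidate counterexample trace set $T\subseteq\sembrack{M}$ with $|T|\le k$ can be completed to some $M'$ with $T\subseteq\sembrack{M'}$ and $(M',q)\in B_{\it BE1}$; this is precisely the negation of the $k$-safety condition.

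For the violating witness I would take $M$ injective on an input space $\mathbb{H}$ with $\log N > q$, which is possible since $q$ is a fixed constant. Then $n_o = 1$, so ${\it BE}[\aseq{U,h}](M) = \log N > q$ and $(M,q)\notin B_{\it BE1}$. For the completion step, given any $T\subseteq\sembrack{M}$ with $|T|\le k$, I would build $M'$ over an enlarged input space $\mathbb{H}'\supseteq\mathbb{H}$ of size $N'$, obtained by adjoining fresh inputs. First fix the target output: if $(h, M(h))\in T$ set $o = M(h)$, otherwise choose any output $o$; then let $M'$ agree with $T$ on the at most $k$ inputs constrained by $T$, set $M'(h) = o$, and map every remaining (in particular every fresh) input to $o$. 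All but at most $k$ inputs then land on $o$, so $n_o \ge N' - k$ and ${\it BE}[\aseq{U,h}](M') = \log\frac{N'}{n_o} \le \log\frac{N'}{N'-k}$, which tends to $0$ as $N'\to\infty$. Choosing $N'$ large enough that $\log\frac{N'}{N'-k}\le q$ yields $(M',q)\in B_{\it BE1}$, as required. Here $U$ is the uniform distribution over the respective (growing) input space, so the per-input mass shrinks as the domain grows; this ``dilution'' is exactly what drives the belief-based flow toward $0$.

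The structural fact that makes this go through, and the step I would flag as the crux, is that rareness of $h$'s output can never be certified by finitely many traces: any finite $T$ leaves unboundedly much freedom to adjoin fresh inputs colliding with $M'(h)$, so no bounded trace set can force the flow above $q$. The only genuine care needed is to respect the forced value of $M'(h)$ when the trace $(h, M(h))$ happens to lie in $T$ (in which case $o$ is pinned to $M(h)$), but since all fresh inputs can still be routed to that same $o$, the dilution estimate is unaffected. Since $k$ was arbitrary and $q$ is a fixed constant, and the construction uses no low-security inputs, this establishes that $B_{\it BE1}[\aseq{U,h}]$ is not a $k$-safety property for any $k>0$.
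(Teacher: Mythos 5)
Your proof is correct and takes essentially the same approach as the paper: exhibit a witness program whose belief-based flow at $h$ exceeds $q$, then defeat every candidate counterexample trace set $T$ with $|T|\le k$ by enlarging the input domain and routing all fresh inputs to the output of $h$, so that by Lemma~\ref{lem:be} the flow of the extended program drops below $q$. The differences are only cosmetic: you argue the negation of $k$-safety directly rather than by contradiction, and your witness is injective instead of the paper's $\lfloor 2^q\rfloor$-to-one program, but the dilution argument is identical.
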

Next, we show that $B_{\it BE2}[U]$ is also not a $k$-safety property
for any $k$ when $q$ is a constant and $q \geq 1$, even without low
security inputs.  But, when $q$ is held constant and $q < 1$, $B_{\it
  BE2}[U]$ is a $2$-safety property.
\begin{theorem}
\label{thm:be2nk1}
  Let $q$ be a constant.  If $q\ge 1$, then $B_{\it BE2}[U]$ is not a
  $k$-safety property for any $k > 0$ even when $B_{\it BE2}[U]$ only
  takes programs without low security inputs.  Otherwise, $q<1$ and
  $B_{\it BE2}[U]$ is a 2-safety property, but it is not a 1-safety
  property.
\end{theorem}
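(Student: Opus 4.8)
The plan is to first restate the bounding condition in terms of fiber sizes using Lemma~\ref{lem:be}. Writing $N = |\mathbb{H}|$ and, for a program $M$ and a low input $\ell$, abbreviating the preimage size as $n_{\ell,o} = |\aset{h' \mid M(h',\ell) = o}|$, Lemma~\ref{lem:be} with $\mu = U$ gives ${\it BE}[\aseq{U,h,\ell}](M) = \log N - \log n_{\ell, M(h,\ell)}$. Hence $(M,q) \in B_{\it BE2}[U]$ iff every nonempty fiber satisfies $n_{\ell,o} \ge N \cdot 2^{-q}$; equivalently, the bound is violated exactly when, for some $\ell$, some realized output has a preimage strictly smaller than $N \cdot 2^{-q}$. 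I would carry out both cases from this reformulation.

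For $q < 1$ I would show that $B_{\it BE2}[U]$ coincides with non-interference. Since $q<1$ gives $2^{-q} > 1/2$, the threshold $N\cdot 2^{-q}$ exceeds $N/2$; but whenever $M(\cdot,\ell)$ takes two distinct values the fibers partition $\mathbb{H}$ into at least two nonempty parts, forcing some fiber of size $\le N/2 < N\cdot 2^{-q}$ and hence a violation. Conversely, if $M$ is non-interferent then every $M(\cdot,\ell)$ is constant, the unique fiber has size $N$, and ${\it BE} = 0 \le q$. Thus $(M,q)\in B_{\it BE2}[U]$ iff $M$ is non-interferent, and the ``$2$-safety but not $1$-safety'' claim transfers directly from the classical status of non-interference (result (1) of Section~\ref{sec:ksafety}, together with Theorem~\ref{thm:nonint}): two traces $((h,\ell),o)$ and $((h',\ell),o')$ with $o \ne o'$ refute membership, while no single trace can, since any one trace extends to a constant, non-interferent program.

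For $q \ge 1$ I would exhibit, for each $k$, a witness refuting $k$-safety. Fix an even $N$ with $N > 2^q$ and $N \ge 2k$, let $\mathbb{H}$ have size $N$, and take the low-input-free program $M$ that sends one distinguished input $h_1$ to an output $o_1$ and all other inputs to a distinct output $o_0$. Then the fiber of $o_1$ has size $1$, so ${\it BE}[\aseq{U,h_1}](M) = \log N > q$ and $(M,q) \notin B_{\it BE2}[U]$. Now, given any $T \subseteq \sembrack{M}$ with $|T| \le k$, note that $T$ forces at most one input to $o_1$ and at most $k \le N/2$ inputs to $o_0$. I would build $M'$ by keeping these forced assignments and routing the remaining free inputs so that the $o_0$-fiber and the $o_1$-fiber each have size exactly $N/2$; the counts match exactly, since the number of free inputs, $N - |T|$, equals the total number of unfilled slots, and each forced count lies below $N/2$ because $N \ge 2k$. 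Every fiber of $M'$ then has size $N/2 \ge N\cdot 2^{-q}$ (using $q \ge 1$), so $(M',q) \in B_{\it BE2}[U]$ while $T \subseteq \sembrack{M'}$. As this holds for every such $T$, the pair $(M,q)$ witnesses that $B_{\it BE2}[U]$ is not $k$-safety; and since $M$ has no low security inputs, this establishes the stronger claim.

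The main obstacle is the $q\ge 1$ extension step: the scheme works only because the witness uses just two outputs and because $q \ge 1$ forces $N\cdot 2^{-q} \le N/2$, so two equally balanced fibers simultaneously satisfy the bound. A witness with many distinct outputs, or a bound $q<1$, would compel at least one small fiber and break the construction, which is precisely why the threshold sits at $q=1$ and why the same argument cannot be pushed below it.
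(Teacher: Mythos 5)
Your proposal is correct, and it follows the same high-level decomposition as the paper: for $q<1$ show that $B_{\it BE2}[U]$ collapses to non-interference (hence $2$-safety but not $1$-safety), and for $q\ge 1$ defeat every candidate counterexample $T$ by completing it to a program whose two fibers are balanced, giving information flow exactly $1\le q$. The differences are in execution, and they are worth noting. First, your up-front reformulation via Lemma~\ref{lem:be} --- membership in $B_{\it BE2}[U]$ iff every nonempty fiber has size at least $N\cdot 2^{-q}$ --- lets you handle the $q<1$ direction by a direct pigeonhole argument, whereas the paper routes this through its monotonicity lemma (Lemma~\ref{lem:bemono}); your version is more self-contained. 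Second, for $q\ge 1$ the paper uses a single witness program of size $\lfloor 2^q\rfloor+1$, \emph{independent of $k$}: it first argues that any admissible counterexample $T$ must be the \emph{entire} trace set (since every proper subset, viewed as a program, already satisfies the bound), and then defeats that $T$ by \emph{enlarging the high-input domain} with $\lfloor 2^q\rfloor - 1$ fresh inputs mapped to the rare output. You instead pre-size the domain as a function of $k$ (even $N$ with $N>2^q$ and $N\ge 2k$) and re-route the unconstrained inputs \emph{within the same domain} so both fibers have size exactly $N/2$; the slot-counting argument ($N-|T|$ free inputs fill exactly $N-|T|$ slots, each forced count being at most $N/2$) is what replaces the paper's ``$T$ must be everything'' step. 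What each buys: the paper's construction yields one fixed program refuting $k$-safety for all $k$ simultaneously, which is aesthetically tighter; yours avoids both the domain-extension step and the structural analysis of which $T$ can serve as counterexamples, at the cost of a witness that grows with $k$ --- a trade-off that is immaterial for the theorem, since non-$k$-safety is quantified per $k$ anyway.
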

The $2$-safety property for the case $q < 1$ follows because $B_{\it
  BE2}[U]$ turns out to be equivalent to non-interference for such
$q$.  The results show that the bounding problems for the belief-based
definition is also quite hard, except for the case where one checks if
the information flow is less than $1$ for all inputs, which degenerates
to checking non-interference.

\subsection{K-Safety for Channel Capacity Like Definitions}

\label{sec:ksafetycclike}

In this section, we study the hardness of the bounding problems that
check the bound for all distributions.  We define the following problems.
\[
\begin{array}{lcl}
B_{\it SECC}&=&\aset{(M,q)\mid\forall \mu. {\it SE}[\mu](M)\le q}\\
B_{\it MECC}&=&\aset{(M,q)\mid\forall \mu. {\it ME}[\mu](M)\le q}\\
B_{\it GECC}&=&\aset{(M,q)\mid\forall \mu. {\it GE}[\mu](M)\le q} \\
B_{\it BE1CC}[h,\ell]&=&\aset{(M,q)\mid\forall \mu. {\it BE}[\aseq{\mu,h,\ell}](M)\le q} \\
B_{\it BE2CC}&=&\aset{(M,q)\mid\forall \mu.\forall h,\ell. {\it BE}[\aseq{\mu,h,\ell}](M)\le q}
\end{array}
\]
Note that $B_{\it SECC} = B_{\it CC}$ because ${\it CC}(M) = \max_\mu
{\it SE}[\mu](M)$.  For this reason, we call these bounding
problems ``channel capacity like.'' For instance, K{\"o}pf and
Smith~\cite{DBLP:conf/csfw/KopfS10} call $\max_\mu {\it ME}[\mu](M)$
the {\em min-entropy channel capacity}.  (Note that $(M,q) \in B_{\it
  MECC}$ iff $\max_\mu {\it ME}[\mu](M) \leq q$.)  $B_{\it GECC}$
follows the same spirit.  We define two types of channel-capacity like
problems for the belief-based definition corresponding to the two
types of bounding problems $B_{\it BE1}$ and $B_{\it BE2}$.

We prove $k$-safety results for each of these problems.  The result
below for $B_{\it SECC}$ follows directly from that of $B_{\it CC}$
(i.e., Theorem~\ref{thm:cck}).  But, the other results proved are
  new.
\begin{theorem}
\label{thm:secck}
Let $q$ be a constant.  Then, $B_{\it SECC}$ is
$\lfloor2^q\rfloor+1$-safety, but it is not $k$-safety for any $k \leq
\lfloor2^q\rfloor$.
\end{theorem}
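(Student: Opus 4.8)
The plan is to reduce Theorem~\ref{thm:secck} directly to Theorem~\ref{thm:cck} using the stated identity $B_{\it SECC} = B_{\it CC}$. Since ${\it CC}(M) = \max_\mu {\it SE}[\mu](M)$, the condition $\forall \mu.\,{\it SE}[\mu](M) \le q$ is literally the same as ${\it CC}(M) \le q$, so $(M,q) \in B_{\it SECC}$ if and only if $(M,q) \in B_{\it CC}$ as sets. Because the two problems coincide as subsets of ${\it Prog} \times \mathbb{R}^+$, any $k$-safety claim about one transfers verbatim to the other: the definition of $k$-safety quantifies only over the membership predicate and the trace sets $\sembrack{M}$, neither of which is affected by renaming the problem. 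Thus both the positive direction ($\lfloor 2^q\rfloor + 1$-safety) and the negative direction (not $k$-safety for $k \le \lfloor 2^q\rfloor$) follow immediately.

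Concretely, I would first record the equality $B_{\it SECC} = B_{\it CC}$ as the opening line, justifying it by the channel-capacity characterization ${\it CC}(M) = \max_\mu \mathcal{I}[\mu](O;H|L) = \max_\mu {\it SE}[\mu](M)$ from the definition of channel-capacity-based QIF. Then I would simply invoke Theorem~\ref{thm:cck}, which already establishes that $B_{\it CC}$ is $\lfloor 2^q\rfloor + 1$-safety and is not $k$-safety for any $k \le \lfloor 2^q\rfloor$ when $q$ is constant. Since the two sets are identical, the witnesses used in Theorem~\ref{thm:cck} (the counterexample trace set $T$ of $\lfloor 2^q\rfloor + 1$ traces sharing a low input with pairwise-distinct outputs, established via Malacaria and Chen's result in Lemma~\ref{lem:ccloglow}) serve equally well here.

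The only genuine content to verify is that the $k$-safety definition is indeed insensitive to this set identity, which is immediate: if $P = P'$ as subsets of ${\it Prog} \times \mathbb{R}^+$, then $(M,q) \notin P \Leftrightarrow (M,q) \notin P'$, and the existential witness $T \subseteq \sembrack{M}$ with $\forall M'.\,T \subseteq \sembrack{M'} \Rightarrow (M',q) \notin P$ transfers directly since $(M',q) \notin P \Leftrightarrow (M',q) \notin P'$. There is essentially no obstacle here; the statement is a corollary, and indeed the excerpt itself flags that the result ``follows directly'' from Theorem~\ref{thm:cck}. The one point worth a sentence of care is confirming that $\max_\mu$ in the definition of ${\it CC}$ ranges over exactly the same distributions $\mu$ quantified in $B_{\it SECC}$ (all distributions over $H$ and $L$), so that the maximum being bounded by $q$ is genuinely equivalent to the universal bound holding for every $\mu$; this is immediate from the definitions but is the hinge of the reduction.
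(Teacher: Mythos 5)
Your proposal is correct and matches the paper's own proof, which likewise disposes of the theorem as ``trivial from Theorem~\ref{thm:cck} and the fact that $B_{\it SECC}$ is equivalent to $B_{\it CC}$.'' The details you spell out---that $\max_\mu {\it SE}[\mu](M) \le q$ iff $\forall\mu.\,{\it SE}[\mu](M)\le q$, and that $k$-safety is invariant under equality of the problem sets---are exactly the (elided) content of that one-line argument.
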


First, we show that $B_{\it MECC}$ enjoys the same property as $B_{\it
  SECC}$.  That is, when $q$ is held constant, it is
$\lfloor2^q\rfloor+1$-safety, but it is not $k$-safety for any $k \leq
\lfloor2^q\rfloor$.  Note that unlike $B_{\it ME}[U]$, this holds even
for programs with low security inputs.  We show this by proving the
following lemma stating that $\max_\mu {\it ME}[\mu]$ is actually
equivalent to ${\it CC}(M)$.
\begin{lemma}
\label{lem:mecceqcc}
$\max_\mu {\it ME}[\mu](M)={\it CC}(M)$
\end{lemma}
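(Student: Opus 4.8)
The plan is to prove the equality $\max_\mu {\it ME}[\mu](M) = {\it CC}(M)$ by establishing two inequalities, or better, by showing both sides reduce to the same combinatorial quantity. Recall that for programs \emph{without} low security inputs, Smith~\cite{smith09} already showed ${\it ME}[U](M) = {\it CC}(M)$, and that ${\it CC}$ equals the maximum over low inputs of the logarithm of the number of distinct outputs (the Malacaria--Chen characterization referenced for Theorem~\ref{thm:cck}). So my target is to show that $\max_\mu {\it ME}[\mu](M)$ also collapses to this same "log of the number of reachable outputs" quantity, now handling low security inputs as well.

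First I would fix an arbitrary low input $\ell$ and analyze ${\it ME}[\mu](M)$ for distributions concentrated on that $\ell$. Writing the min-entropy flow as $\mathcal{H}_\infty[\mu](H|L) - \mathcal{H}_\infty[\mu](H|O,L)$, I want to show that for each fixed $\ell$ the supremum over distributions of this difference equals $\log |\{M(h,\ell) \mid h \in \mathbb{H}\}|$, i.e.\ the log of the number of outputs reachable under $\ell$. The upper bound direction uses that knowing $O$ partitions the high inputs into at most $n_\ell$ output-classes (where $n_\ell$ is the number of distinct outputs for $\ell$), so the posterior vulnerability $\mathcal{V}[\mu](H|O,L=\ell)$ can be bounded against the prior via the number of classes, giving ${\it ME}[\mu](M) \le \log n_\ell$. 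For the lower bound (achievability), I would exhibit a specific distribution: spread $\mu$ uniformly over one representative high input per output class under $\ell$ (and put all low-input mass on $\ell$). Under such a $\mu$ the program becomes injective on the support, each output pins down its input, the remaining vulnerability becomes $1$, and the flow attains exactly $\log n_\ell$.

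Taking the maximum over $\ell$ then yields $\max_\mu {\it ME}[\mu](M) = \max_\ell \log n_\ell$, which is precisely the Malacaria--Chen characterization of ${\it CC}(M)$, completing the proof. I would cite Lemma~\ref{lem:ccloglow} (the log-of-outputs characterization of ${\it CC}$ invoked for Theorem~\ref{thm:cck}) rather than re-deriving it.

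The main obstacle is the achievability argument in the presence of low security inputs and the conditioning structure of $\mathcal{V}[\mu](X|Y)$, which is an \emph{average} of posterior vulnerabilities weighted by $\mu(Y=y)$, not a worst case. I must check that the witness distribution simultaneously makes the prior vulnerability $\mathcal{H}_\infty[\mu](H|L)$ small (i.e.\ forces a near-uniform prior over the chosen representatives so the prior vulnerability is $1/n_\ell$) while making the posterior vulnerability exactly $1$; the subtlety is that concentrating all mass on the single best $\ell$ is legitimate since $\max_\mu$ ranges over \emph{all} distributions including degenerate ones on $L$, so the averaging over $\ell$ causes no loss. Care is also needed to confirm the upper bound holds for \emph{every} $\mu$ (including those spreading mass across several $\ell$), for which I would argue that mixing over low inputs cannot exceed the best single-$\ell$ value by a concavity/averaging argument on the vulnerabilities.
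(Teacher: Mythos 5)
Your proposal is correct and follows essentially the same route as the paper's proof: both reduce ${\it CC}(M)$ to the Malacaria--Chen characterization $\max_\ell \log |M[\mathbb{H},\ell]|$ (Lemma~\ref{lem:ccloglow}), bound the posterior-to-prior vulnerability ratio for each fixed low input by the number of reachable outputs (the paper's Lemma~\ref{lem:mecclemma}), dispose of distributions spread over several low inputs by exactly the averaging (mediant) inequality you flag as the main obstacle, and exhibit a witness distribution concentrated on the maximizing low input. The only cosmetic difference is the witness: you place uniform mass on one representative high input per output class (driving the posterior vulnerability to $1$), whereas the paper uses the uniform distribution over \emph{all} high inputs at that low input; both attain the value $\log \max_\ell |M[\mathbb{H},\ell]|$.
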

The lemma extends the result by Braun et al.~\cite{Braun:09:MFPS} that
shows the equivalence for the low-security-input-free case.  By the
lemma, the $k$-safety result for $B_{\it MECC}$ follows directly from
that of $B_{\it CC}$.
\begin{theorem}\label{thm:mecck}
  Let $q$ be a constant.  Then, $B_{\it MECC}$ is
  $\lfloor2^q\rfloor+1$-safety, but it is not $k$-safety for any $k
  \leq \lfloor2^q\rfloor$.
\end{theorem}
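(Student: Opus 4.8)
The plan is to show that $B_{\it MECC}$ and $B_{\it CC}$ coincide as subsets of ${\it Prog}\times{\mathbb R}^+$, so that the entire statement transfers verbatim from Theorem~\ref{thm:cck}. The only real work is a short chain of equivalences; the substantive content has already been isolated into Lemma~\ref{lem:mecceqcc}, so I would invoke it as a black box.

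First I would unfold the definition of $B_{\it MECC}$: a pair $(M,q)$ lies in $B_{\it MECC}$ exactly when $\forall\mu.\,{\it ME}[\mu](M)\le q$. Because the quantification ranges over all distributions, this universal bound holds iff the supremum does, i.e. iff $\max_\mu {\it ME}[\mu](M)\le q$ — here I rely on the maximum being attained, which is precisely what justifies the $\max$ notation of Lemma~\ref{lem:mecceqcc}. Applying that lemma rewrites the left-hand side as ${\it CC}(M)$, giving $(M,q)\in B_{\it MECC}$ iff ${\it CC}(M)\le q$ iff $(M,q)\in B_{\it CC}$. Hence $B_{\it MECC}=B_{\it CC}$.

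Finally I would observe that the $k$-safety status of a problem is a property of the set $P\subseteq{\it Prog}\times{\mathbb R}^+$ together with the trace semantics $\sembrack{\cdot}$ alone (by the Definition of $k$-safety property), so two problems denoting the same set have identical $k$-safety status for every $k$. Since $B_{\it MECC}=B_{\it CC}$, Theorem~\ref{thm:cck} yields both halves of the claim simultaneously: $B_{\it MECC}$ is $\lfloor2^q\rfloor+1$-safety and is not $k$-safety for any $k\le\lfloor2^q\rfloor$. The theorem itself therefore poses no obstacle; the only point warranting care is that the whole reduction hinges on Lemma~\ref{lem:mecceqcc}, whose burden (unlike the low-security-input-free case of Braun et al.~\cite{Braun:09:MFPS}) is to establish the identity $\max_\mu{\it ME}[\mu](M)={\it CC}(M)$ in the presence of low security inputs. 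Given that lemma, the present statement is an immediate corollary.
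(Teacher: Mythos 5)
Your proposal is correct and follows exactly the paper's own route: the paper proves Theorem~\ref{thm:mecck} as ``trivial by Theorem~\ref{thm:cck} and Lemma~\ref{lem:mecceqcc},'' which is precisely your reduction of $B_{\it MECC}$ to $B_{\it CC}$ via the identity $\max_\mu {\it ME}[\mu](M) = {\it CC}(M)$. Your write-up merely makes explicit the (correct) intermediate steps that the universal bound over $\mu$ coincides with the bound on the maximum and that $k$-safety status depends only on the underlying set of pairs $(M,q)$.
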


Next, we prove that, when $q$ is held constant, $B_{\it GECC}$ is
$k$-safety for $k = \lfloor\frac{(\lfloor q\rfloor +1)^2}{\lfloor
  q\rfloor +1 -q} \rfloor +1$ when $q\geq\frac{1}{2}$ and is
$2$-safety for $q < \frac{1}{2}$.  Recall that these $k$-safety bounds
are equivalent to those of $B_{\it GE}[U]$ without low security inputs
(cf.~Theorem~\ref{thm:gek}).  However, unlike $B_{\it GE}[U]$, the
$k$-safety result here holds even for programs with low security
inputs.
\begin{theorem}\label{thm:gecck}
  Let $q$ be a constant.  If $q\ge\frac{1}{2}$, then, $B_{\it GECC}$
  is $\lfloor\frac{(\lfloor q\rfloor +1)^2}{\lfloor q\rfloor +1 -q}
  \rfloor +1$-safety, but it is not $k$-safety for any $k \leq
  \lfloor\frac{(\lfloor q\rfloor +1)^2}{\lfloor q\rfloor +1 -q}
  \rfloor$.  Otherwise, $q<\frac{1}{2}$ and $B_{\it GECC}$ is
  $2$-safety, but it is not $1$-safety.
\end{theorem}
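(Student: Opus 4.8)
The plan is to reduce $B_{\it GECC}$ to the low-security-input-free guessing-entropy bounding problem already analyzed in Theorem~\ref{thm:gek}. The bridge is a combinatorial characterization of the quantity $\max_\mu {\it GE}[\mu](M)$: writing $M_\ell$ for the ``slice'' of $M$ obtained by fixing the low input to $\ell$ (a program with no low inputs), I claim
\[
\max_\mu {\it GE}[\mu](M) \;=\; \max_{\ell \in \mathbb{L}} {\it GE}[U](M_\ell),
\]
where $U$ is uniform over $\mathbb{H}$. Granting this, $B_{\it GECC}$ restricted to low-input-free programs coincides exactly with $B_{\it GE}[U]$ on such programs, so the two problems must share their $k$-safety bounds; the role of the $\max_\mu$ (rather than a fixed $U$ over $\mathbb{H}\times\mathbb{L}$) is precisely to let the adversary concentrate on the single worst low input, which is why the bound survives the addition of low inputs that broke $B_{\it GE}[U]$.

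To prove the characterization I would first dispatch the easy half. Since ${\it GE}[\mu](M)=\mathcal{G}[\mu](H|L)-\mathcal{G}[\mu](H|O,L)$ expands as a $\mu(L{=}\ell)$-weighted average of the per-slice leaks $\mathcal{G}[\mu](H|L{=}\ell)-\mathcal{G}[\mu](H|O,L{=}\ell)$, any weighted average is maximized by placing all mass on a single $\ell$; hence the optimal $\mu$ may be taken to be a point mass on some $\ell$ together with a conditional distribution $\nu$ over $\mathbb{H}$, and $\max_\mu {\it GE}[\mu](M)=\max_\ell \max_\nu(\mathcal{G}[\nu](H)-\mathcal{G}[\nu](H|O))$ for the output partition of $\mathbb{H}$ induced by $M_\ell$. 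What remains is to show that for a fixed partition the inner maximum over $\nu$ is attained at the uniform distribution, giving $\frac{|\mathbb{H}|^2-\sum_i m_i^2}{2|\mathbb{H}|}$ with $m_i$ the block sizes. This last step is the \emph{main obstacle}: the two terms pull in opposite directions (spreading mass inside a block raises $\mathcal{G}[\nu](H)$ but also raises $\mathcal{G}[\nu](H|O)$, while collapsing a block to a singleton lowers both), so uniform optimality is not a one-line monotonicity fact. I expect to establish it by a symmetrization/exchange argument: permuting elements within a block leaves the leak invariant, and averaging over such permutations drives the conditional distributions toward within-block uniformity, after which a balancing argument across the block weights pins the overall-uniform point as the maximizer.

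With the characterization in hand both directions follow from Theorem~\ref{thm:gek}. For the positive ($k$-safety) direction with $q\ge 1/2$, if $(M,q)\notin B_{\it GECC}$ then some slice has ${\it GE}[U](M_\ell)>q$, so Theorem~\ref{thm:gek} yields a counterexample trace set $T\subseteq\sembrack{M_\ell}$ of size at most $\lfloor\frac{(\lfloor q\rfloor+1)^2}{\lfloor q\rfloor+1-q}\rfloor+1$ that forces ${\it GE}[U]>q$ in every low-input-free extension. Reading $T$ as traces of $M$ at the fixed $\ell$, any $M'$ with $T\subseteq\sembrack{M'}$ has $M'_\ell$ extending $T$, hence ${\it GE}[U](M'_\ell)>q$ and, by the characterization, $\max_\mu{\it GE}[\mu](M')>q$; thus $T$ witnesses $(M',q)\notin B_{\it GECC}$. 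For the negative direction I reuse Theorem~\ref{thm:gek}'s counterexample program and its completions, which are all low-input-free: on such programs the characterization gives $\max_\mu{\it GE}[\mu](\cdot)={\it GE}[U](\cdot)$, so the bad program satisfies $\max_\mu{\it GE}[\mu](M)>q$ while every completion of a small subset keeps $\max_\mu{\it GE}[\mu](\cdot)\le q$, establishing failure of $k$-safety for all $k\le\lfloor\frac{(\lfloor q\rfloor+1)^2}{\lfloor q\rfloor+1-q}\rfloor$.

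Finally, the case $q<1/2$ collapses to non-interference. For a low-input-free interferent slice the uniform guessing-entropy leak is at least $1/2$ (attained already by two inputs with distinct outputs), so by the characterization $\max_\mu{\it GE}[\mu](M)\le q<1/2$ holds iff every slice is non-interferent iff $M$ is non-interferent; since non-interference is $2$-safety but not $1$-safety, the stated bound follows. The only genuinely new work beyond invoking Theorem~\ref{thm:gek} is the characterization lemma, and within it the uniform-optimality of $\nu$ for a fixed partition, which I flag as the crux of the argument.
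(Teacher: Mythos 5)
Your overall architecture coincides exactly with the paper's: your characterization $\max_\mu {\it GE}[\mu](M)=\max_{\ell}{\it GE}[U\otimes\dot{\ell}](M)$ is the paper's Lemma~\ref{lem:gecc}; the first half of its proof is the same weighted-average/point-mass decomposition you give; the positive direction is the paper's argument (lift the counterexample set from the low-input-free case of Theorem~\ref{thm:gek} to traces of $M$ at the fixed $\ell$, then conclude by monotonicity, the paper's Lemma~\ref{lem:geccmono}); the negative direction reuses the low-input-free counterexample of Theorem~\ref{thm:gek}; and the $q<\frac{1}{2}$ case collapses to non-interference just as in the paper. So there is nothing novel in the reduction, and those steps are sound.

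The genuine gap is exactly at the point you flag as the crux, and the method you propose does not close it. You want uniform optimality of $\nu\mapsto\mathcal{G}[\nu](H)-\mathcal{G}[\nu](H|O)$ for a fixed output partition via ``averaging over within-block permutations.'' But an averaging/symmetrization step improves the objective only if the objective is (Schur-)concave, and here both $\mathcal{G}[\nu](H)$ and $\mathcal{G}[\nu](H|O)$ are minima (or sums of minima) of linear functions of $\nu$, hence both concave; their difference is a priori neither concave nor convex, which is precisely the ``two terms pull in opposite directions'' problem you yourself note. Permutation invariance of the leak does not by itself imply the averaged distribution dominates, and your subsequent ``balancing argument across the block weights'' is likewise only named, not given. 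The paper fills this hole with machinery that has no counterpart in your sketch: it writes ${\it GE}[\mu](M)$ in a normal form $\sum_i a_i\mu(h_i)$ over the sorted masses $\mu(h_1)\ge\cdots\ge\mu(h_n)$, proves the combinatorial coefficient inequality $\sum_{i\le x}a_i\le x\,a_{x+1}$ (Lemmas~\ref{lem:gecc1} and \ref{lem:gecc2}), and uses it to show (Lemma~\ref{lem:gecc3}) that the specific move of averaging the top $i$ sorted masses (when the top $i-1$ are equal) never decreases ${\it GE}$; iterating this move terminates at the uniform distribution. That coefficient inequality is what substitutes for the missing concavity. Until you supply it (or an equivalent), the characterization lemma is unproven, and with it both directions of your reduction.
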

\begin{sloppypar}
The above is shown by proving the following lemma which states that the
``guessing entropy channel capacity'' $\max_\mu {\it GE}[\mu]$ is
actually equivalent to $\max_\ell {\it GE}[U\otimes\dot{\ell}]$.
(See below for the definition of $U\otimes\dot{\ell}$.)
\end{sloppypar}
\begin{lemma}
\label{lem:gecc}
We have $\max_\mu {\it GE}[\mu](M)=\max_{\ell'} {\it GE}[U\otimes
\dot{\ell'}](M)$ where $U\otimes \dot{\ell'}$ denotes $\lambda
h,\ell.{\sf if}\;\ell=\ell'\;{\sf then}\; U(h)\;{\sf else}\;0$.
\end{lemma}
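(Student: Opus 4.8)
The plan is to prove the two inequalities separately. The direction $\max_\mu {\it GE}[\mu](M) \ge \max_{\ell'}{\it GE}[U\otimes\dot{\ell'}](M)$ is immediate, since each $U\otimes\dot{\ell'}$ is itself a distribution over $H$ and $L$, so the left-hand maximum ranges over a superset of the distributions appearing on the right. All the content lies in the reverse inequality, which I would attack by first reducing to a fixed low input and then showing that, once the low input is fixed, the uniform distribution over $H$ maximizes the leakage.

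For the reduction, I would expand the conditional guessing entropies over the low input, writing ${\it GE}[\mu](M) = \sum_\ell \mu(L=\ell)\,\Phi_\ell$, where $\Phi_\ell = \mathcal{G}[\mu](H \mid L=\ell) - \sum_o \mu(O=o\mid L=\ell)\,\mathcal{G}[\mu](H\mid O=o,L=\ell)$. Each $\Phi_\ell$ depends only on the conditional distribution $\nu_\ell = \mu(\cdot\mid L=\ell)$ over $H$ and on the program $M(\cdot,\ell)$, and in fact $\Phi_\ell = {\it GE}[\nu_\ell\otimes\dot\ell](M)$. Since ${\it GE}[\mu](M)$ is thus a convex combination of the $\Phi_\ell$ (the weights $\mu(L=\ell)$ are nonnegative and sum to $1$), it is bounded by $\max_\ell {\it GE}[\nu_\ell\otimes\dot\ell](M)$. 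Hence it suffices to prove, for each fixed $\ell$, that ${\it GE}[\nu\otimes\dot\ell](M)\le {\it GE}[U\otimes\dot\ell](M)$ for every distribution $\nu$ over $H$.

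The main obstacle is this fixed-input claim, and the key is a combinatorial reformulation of the leakage. Fixing $\ell$, the program $M(\cdot,\ell)$ partitions $\mathbb{H}$ into output-equivalence classes (blocks) $B_1,\dots,B_r$. Using the identity $\mathcal{G}[p] = 1 + \sum_{\{x,x'\}}\min(p(x),p(x'))$ (summing over unordered pairs, valid because $\min$ is symmetric and independent of tie-breaking), I would show that ${\it GE}[\nu\otimes\dot\ell](M) = \sum_{\{h,h'\}}\min(\nu(h),\nu(h'))$, where the sum ranges over the pairs lying in \emph{distinct} blocks, i.e., over the edges of the complete multipartite graph $G$ whose parts are the blocks. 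Applying the layer-cake identity $\min(a,b)=\int_0^\infty \mathbf{1}[a>t]\,\mathbf{1}[b>t]\,dt$, this quantity equals $\int_0^\infty |E(A_t)|\,dt$, where $A_t=\{h\mid \nu(h)>t\}$ and $E(A_t)$ is the set of edges of $G$ induced on $A_t$; similarly $\int_0^\infty|A_t|\,dt=\sum_h\nu(h)=1$.

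It then remains to prove the ``densest subgraph'' inequality $|E(A)|\le \lambda\,|A|$ for every $A\subseteq\mathbb{H}$, where $\lambda=|E|/|\mathbb{H}|$ is the global edge density, with equality at $A=\mathbb{H}$. For a complete multipartite graph, $|E(A)|=\tfrac12\big(|A|^2-\sum_j|A\cap B_j|^2\big)$, and I would show that the density $D(A)=|E(A)|/|A|$ is nondecreasing as each size $|A\cap B_j|$ grows (treating the sizes as continuous variables); this follows because the numerator of the relevant partial derivative, $|A|^2-2|A\cap B_i|\,|A|+\sum_j|A\cap B_j|^2$, dominates the square $(|A|-|A\cap B_i|)^2\ge 0$. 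Consequently $D$ is maximized at $A=\mathbb{H}$, giving $D(A)\le\lambda$. The layer-cake representation then yields ${\it GE}[\nu\otimes\dot\ell](M)=\int_0^\infty|E(A_t)|\,dt\le\lambda\int_0^\infty|A_t|\,dt=\lambda$, and since $\lambda$ is exactly ${\it GE}[U\otimes\dot\ell](M)$, the fixed-input claim, and hence the lemma, follows.
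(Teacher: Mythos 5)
Your proposal is correct, and its overall skeleton matches the paper's: the easy inequality is handled the same way, and your decomposition ${\it GE}[\mu](M)=\sum_\ell \mu(\ell)\,{\it GE}[\mu(\cdot\mid\ell)\otimes\dot\ell](M)$ followed by the convex-combination bound is exactly the paper's opening computation. Where you genuinely diverge is in the key fixed-low-input claim that the uniform distribution over $H$ maximizes the leakage. The paper proves this (its Lemma on ${\it GE}[\mu']\ge{\it GE}[\mu]$ for the averaged distribution, supported by two coefficient lemmas) by writing ${\it GE}[\mu](M)$ in a sorted ``normal form'' $\sum_i a_i\mu(h_i)$, establishing the coefficient inequality $\sum_{i\le x}a_i\le x\,a_{x+1}$, and then showing that a smoothing step --- replacing the largest $i$ probabilities by their average --- never decreases ${\it GE}$; iterating the smoothing drives any distribution to uniform. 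You instead rewrite the leakage as $\sum \min(\nu(h),\nu(h'))$ over cross-block pairs (which, incidentally, is the same quantity as the paper's normal form: its coefficient $a_i$ counts exactly the cross-block pairs $\{h_j,h_i\}$ with $j<i$), then apply the layer-cake identity and reduce the maximization to the fact that a complete multipartite graph attains its maximum induced edge-density at the full vertex set, proven by your monotonicity computation $|A|^2-2|A\cap B_i||A|+\sum_j|A\cap B_j|^2\ge(|A|-|A\cap B_i|)^2\ge 0$. Both arguments are sound; yours buys a structural insight (leakage as an integral of induced edge counts, with ${\it GE}[U\otimes\dot\ell](M)$ appearing as the graph's global density $|E|/|\mathbb{H}|$) and avoids the somewhat delicate iterated-smoothing bookkeeping, while the paper's stays entirely within finite, elementary manipulations of the sorted linear form and needs no integral representation. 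Two points you should make explicit in a full write-up: the degenerate case $\nu(B_o)=0$ in the conditional guessing entropy (it contributes zero to both sides of your pairwise-min identity), and the one-line verification that increasing the part sizes one coordinate at a time connects any $A$ to $\mathbb{H}$, so the coordinate-wise monotonicity of $D$ indeed yields $D(A)\le D(\mathbb{H})$.
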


Finally, we prove somewhat surprising results for $B_{\it
  BE1CC}[h,\ell]$ and $B_{\it BE2CC}$ stating that they are in fact
equivalent to non-interference, independent of $q$.  It follows that
these problems are $2$-safety but not $1$-safety.
\begin{theorem}
\label{thm:be3ni}
$(M,q) \in B_{\it BE1CC}[h,\ell]$ iff $M(\ell)$ is non-interferent.
\end{theorem}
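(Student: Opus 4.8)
The plan is to reduce everything to the closed form for belief-based flow supplied by Lemma~\ref{lem:be}. Writing $S = \aset{h' \mid M(h',\ell) = M(h,\ell)}$ for the set of high inputs that yield the same output as $h$ under the fixed low input $\ell$, Lemma~\ref{lem:be} gives ${\it BE}[\aseq{\mu,h,\ell}](M) = -\log \sum_{h' \in S} \mu(h')$. Since an experiment distribution $\mu$ is required to be strictly positive on $\mathbb{H}$, the inner sum lies in $(0,1]$, so the quantity is nonnegative and equals $0$ exactly when $\sum_{h'\in S}\mu(h') = 1$. The first thing I would record is the easy observation that $M(\ell)$ is non-interferent precisely when every high input produces the same output as $h$, i.e. when $S = \mathbb{H}$; this is immediate from the definition of non-interference, noting that $h \in S$ always.

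For the direction from non-interference, suppose $S = \mathbb{H}$. Then $\sum_{h'\in S}\mu(h') = \sum_{h'}\mu(h') = 1$ for every $\mu$, whence ${\it BE}[\aseq{\mu,h,\ell}](M) = -\log 1 = 0 \le q$ for all $\mu$ (recall $q$ is positive). Hence $(M,q)\in B_{\it BE1CC}[h,\ell]$, and this holds independent of $q$.

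For the converse I would argue contrapositively: assuming $M(\ell)$ is interferent, $S \subsetneq \mathbb{H}$, so there is some $h^\ast$ with $M(h^\ast,\ell)\neq M(h,\ell)$, i.e. $h^\ast \notin S$. The idea is to exhibit a single belief $\mu$ that pushes ${\it BE}$ above any prescribed $q$ by concentrating the mass on $h^\ast$. Concretely, for small $\epsilon > 0$ I would set $\mu(h') = \epsilon$ for every $h' \neq h^\ast$ and $\mu(h^\ast) = 1 - (|\mathbb{H}| - 1)\epsilon$, which is strictly positive and hence a legal belief. Because $h^\ast \notin S$, we get $\sum_{h'\in S}\mu(h') = |S|\,\epsilon$, so ${\it BE}[\aseq{\mu,h,\ell}](M) = -\log(|S|\,\epsilon) \to \infty$ as $\epsilon \to 0^+$. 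Thus for any fixed $q$ one can choose $\epsilon$ small enough that ${\it BE}[\aseq{\mu,h,\ell}](M) > q$, witnessing $(M,q)\notin B_{\it BE1CC}[h,\ell]$.

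Combining the two directions yields the stated equivalence, uniformly in $q$. I expect no genuine obstacle beyond bookkeeping: the one subtlety is that the strict-positivity requirement on $\mu$ forbids placing literal zero mass on $S$, so the interferent case must be handled by a limiting construction rather than a single extremal distribution. Since $q$ is finite, however, approaching the supremum suffices and this causes no real difficulty.
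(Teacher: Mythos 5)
Your proof is correct and follows essentially the same route as the paper's: both directions reduce to the closed form of Lemma~\ref{lem:be}, and the interferent case is handled by concentrating the belief's mass on a high input outside the output-equivalence class of $h$. The only (immaterial) difference is that the paper exhibits one concrete distribution, putting mass $1-\frac{1}{\lfloor 2^q\rfloor+1}$ on such an input so that ${\it BE}\ge\log(\lfloor 2^q\rfloor+1)>q$ outright, whereas you use an $\epsilon$-family and let $\epsilon\to 0^+$.
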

Here, $M(\ell)=\lambda h. M(h,\ell)$.  That is, the theorem states
that, for any $q$, $(M, q) \in B_{\it BE1CC}[h,\ell]$ iff the
program $M$ restricted to the low security input $\ell$ is
non-interferent. (Note that checking non-interference at a fixed low
security input is also a $2$-safety property and is not a $1$-safety
property.)

An analogous result holds for $B_{\it BE2CC}$.
\begin{theorem}
\label{thm:be4ni}
$(M,q) \in B_{\it BE2CC}$ iff $M$ is non-interferent.
\end{theorem}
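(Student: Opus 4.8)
The plan is to reduce the claim to Theorem~\ref{thm:be3ni}, which already characterizes the single-input channel-capacity-like problem. First I would unfold the quantifiers in the definitions: since $B_{\it BE2CC}$ is obtained from $B_{\it BE1CC}[h,\ell]$ merely by additionally quantifying universally over $h$ and $\ell$, the two are related by
\[
(M,q)\in B_{\it BE2CC}\iff \forall h,\ell.\,(M,q)\in B_{\it BE1CC}[h,\ell].
\]
Applying Theorem~\ref{thm:be3ni} to each pair $h,\ell$, the right-hand side becomes $\forall h,\ell.\,M(\ell)\text{ is non-interferent}$, where $M(\ell)=\lambda h.M(h,\ell)$. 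The key observation is that the predicate ``$M(\ell)$ is non-interferent'' does not depend on $h$, so the conjunction over all pairs collapses to $\forall \ell.\,M(\ell)\text{ is non-interferent}$.

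It then remains to unfold the two definitions. By definition of $M(\ell)$ and of non-interference, ``$\forall\ell.\,M(\ell)$ is non-interferent'' reads $\forall\ell.\forall h,h'.\,M(h,\ell)=M(h',\ell)$, which is exactly the statement that $M$ is non-interferent. This completes the reduction, so $(M,q)\in B_{\it BE2CC}$ iff $M$ is non-interferent, independent of $q$.

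For a self-contained proof, one can bypass Theorem~\ref{thm:be3ni} and argue directly from Lemma~\ref{lem:be}, which gives ${\it BE}[\aseq{\mu,h,\ell}](M)=-\log\sum_{h'\in\aset{h''\mid M(h'',\ell)=M(h,\ell)}}\mu(h')$. For the forward direction, if $M$ is non-interferent then for every $\ell$ the preimage $\aset{h''\mid M(h'',\ell)=M(h,\ell)}$ is all of $\mathbb{H}$, so the sum equals $1$ and ${\it BE}=-\log 1=0\le q$. For the converse I would argue contrapositively: if $M$ is interferent, fix $\ell$ and $h_1,h_2$ with $M(h_1,\ell)\neq M(h_2,\ell)$, take $h=h_2$, and choose a full-support $\mu$ placing mass $1-\epsilon$ on $h_1$ and spreading the remaining $\epsilon$ over the other inputs. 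Since $h_1$ lies outside the preimage of $M(h_2,\ell)$, the sum is at most $\epsilon$, whence ${\it BE}\ge-\log\epsilon$, which exceeds any fixed $q$ once $\epsilon$ is small enough.

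I do not expect a deep obstacle here; the statement is essentially a quantifier rearrangement on top of Theorem~\ref{thm:be3ni}. The only point requiring care is the full-support side condition $\forall h.\mu(h)>0$ built into the definition of an experiment: in the converse I must keep every input at strictly positive probability while still driving the preimage mass below $2^{-q}$. This is possible precisely because $-\log$ is unbounded above, so no fixed bound $q$ can be maintained in the presence of interference.
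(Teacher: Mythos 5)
Your proposal is correct and takes essentially the same route as the paper: the paper's own proof is exactly your first reduction, namely Theorem~\ref{thm:be3ni} combined with the fact that $M$ is non-interferent iff $M(\ell)$ is non-interferent for all $\ell$. Your self-contained alternative via Lemma~\ref{lem:be} is also sound (it essentially inlines the paper's proof of Theorem~\ref{thm:be3ni}, simplified by your freedom to choose $h=h_2$, which the fixed-$h$ setting of Theorem~\ref{thm:be3ni} does not allow), so it adds nothing beyond confirming the reduction.
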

Clarkson et al.~\cite{DBLP:conf/csfw/ClarksonS08} also studies $B_{\it
  BE2CC}$, which they call $QL$ in their paper.\footnote{Technically,
  they allow an experiment to consist of a sequence of runs of the
  program whereas we restrict an experiment to a single run.}  They
state that the problem is a {\em hypersafety} property, which is a
superset of $k$-safety properties.\footnote{Informally, a property is
  a hypersafety if there exists a counterexample set of traces of {\em
    any} size.}

\section{Complexities for Loop-free Boolean Programs}

\label{sec:complex}

In this section, we analyze the computational complexity of the
bounding problems when the programs are restricted to loop-free
boolean programs.  We compare the complexity theoretic
hardness of the bounding problems with those of the related problems
for the same class of programs, as we have done with the $k$-safety
property of the problems.

That is, we compare against the comparison problems of quantitative
information flow and the problem of checking non-interference for
loop-free boolean programs.  The complexity results for these problems
are summarized below.  Here, $\mathcal{X}$ is ${\it SE}[U]$, ${\it
  ME}[U]$, ${\it GE}[U]$, or ${\it CC}$, and $\mathcal{Y}$ is ${\it
  SE}$, ${\it ME}$, or ${\it GE}$.
\begin{itemize}
\item[(1)] Checking non-interference is coNP-complete
\item[(2)] Checking $\mathcal{X}(M_1) \leq \mathcal{X}(M_2)$ is PP-hard.
\item[(3)] Checking $\forall \mu.\mathcal{Y}[\mu](M_1) \leq \mathcal{Y}[\mu](M_2)$ is coNP-complete.
\end{itemize}
The results (1) and (3) are proven in our recent
paper~\cite{DBLP:conf/csfw/yasuoka2010}.  The result (2) is proven in
the extended version of the paper~\cite{yasuoka:toplas2010submit} and
tightens our (oracle relative) \#P-hardness result from the conference
version~\cite{DBLP:conf/csfw/yasuoka2010}, which states that for each
$C$ such that $C$ is the comparison problem for ${\it SE}[U]$, ${\it
  ME}[U]$, ${\it GE}[U]$, or ${\it CC}$, we have $\text{\#P} \subseteq
\text{FP}^C$.  (Recall that the notation $\text{FP}^A$ means the
complexity class of function problems solvable in polynomial time with
an oracle for the problem $A$.)  \#P is the class of counting problems
associated with NP.  PP is the class of decision problems solvable in
probabilistic polynomial time.  PP is known to contain both coNP and
NP, $\text{PH} \subseteq \text{P}^\text{PP} =
\text{P}^\text{\#P}$~\cite{toda91}, and PP is believed to be strictly
larger than both coNP and NP.  (In particular, PP = coNP would imply
the collapse of the polynomial hierarchy (PH) to level 1.)

We show that, restricted to loop-free boolean programs, the bounding
problems for the Shannon-entropy-based, the min-entropy-based, and the
guessing-entropy-based definition of quantitative information flow with
the uniform distribution (i.e., ${\it SE}[U]$, ${\it ME}[U]$, and
${\it GE}[U]$) and the channel-capacity based definition (i.e., ${\it
  CC}$) are all PP-hard.  (The results for the belief-based definition
and the channel-capacity-like definitions appear in
Section~\ref{sec:complexbeliefcclike}.) The results strengthen the
hypothesis that the bounding problems for these definitions are quite
hard.  Indeed, they show that they are complexity theoretically harder
than non-interference and the comparison problems with the universally
quantified distributions for loop-free boolean programs, assuming that
coNP and PP are separate.

\begin{figure}[t]
\[
\begin{array}[t]{rcl}
  M&::=&x:=\psi\mid M_0 ; M_1 \\
  & \mid &{\sf if}\; \psi\;{\sf then}\; M_0 \;{\sf else}\; M_1\\
  \phi,\psi&::=&{\sf true}\mid x\mid \phi\wedge \psi\mid \neg \phi
\end{array}
\]
\caption{The syntax of loop-free boolean programs}
\label{fig:syntax}
\end{figure}

\begin{figure}[t]
\[
\begin{array}[t]{l}
\wpre{x:=\psi}{\phi}=\phi[\psi/x]\\
\wpre{{\sf if}\;\psi\;{\sf then}\;M_0\;{\sf else}\;M_1}{\phi}\\
\qquad=(\psi\Rightarrow \wpre{M_0}{\phi})\wedge(\neg \psi\Rightarrow \wpre{M_1}{\phi})\\
\wpre{M_0;M_1}{\phi}=\wpre{M_0}{\wpre{M_1}{\phi}}
\end{array}
\]
\caption{The weakest precondition for loop-free boolean programs}
\label{fig:wpsemantics}
\end{figure}

We define the syntax of loop-free boolean programs in
Figure~\ref{fig:syntax}.  We assume the usual derived formulas $\phi
\Rightarrow \psi$, $\phi = \psi$, $\phi \vee \psi$, and ${\sf false}$.
We give the usual weakest precondition semantics in
Figure~\ref{fig:wpsemantics}.

To adapt the information flow framework to boolean programs, we make
each information flow variable $H$, $L$, and $O$ range over functions
mapping boolean variables of its kind to boolean values.  For example,
if $x$ and $y$ are low security boolean variables and $z$ is a high
security boolean variable, then $L$ ranges over the functions
$\aset{x,y} \rightarrow \aset{{\sf false},{\sf true}}$, and $H$ and
$O$ range over $\aset{z} \rightarrow \aset{{\sf false},{\sf
    true}}$.\footnote{ We do not distinguish input boolean variables
  from output boolean variables.  But, a boolean variable can be made
  output-only by assigning a constant to the variable at the start of
  the program and made input-only by assigning a constant at the end.}
(Every boolean variable is either a low security boolean variable or a
high security boolean variable.)  We write $M(h,\ell) = o$ for an
input $(h,\ell)$ and an output $o$ if $(h,\ell) \models
\wpre{M}{\phi}$ for a boolean formula $\phi$ such that $o \models
\phi$ and $o' \not\models \phi$ for all output $o' \neq o$.  Here,
$\models$ is the usual logical satisfaction relation, using
$h,\ell,o$, etc.~to look up the values of the boolean variables.
(Note that this incurs two levels of lookup.)

As an example, consider the following program.
\[
\begin{array}{c}
M \;\equiv\; \ttassign{z}{x};\ttassign{w}{y};\ttif{x \wedge y}{\ttassign{z}{\neg z}}{\ttassign{w}{\neg w}}
\end{array}
\]
Let $x$, $y$ be high security variables and $z, w$ be low security
variables.  Then,
\[
\begin{array}{rcll}
{\it SE}[U](M) & = & 1.5\\
{\it ME}[U](M) & = & \log 3 \approx 1.5849625\\
\end{array}
\hspace{1em}
\begin{array}{rcll}
{\it GE}[U](M)&=& 1.25 \\
{\it CC}(M) & = & \log 3\approx 1.5849625 
\end{array}
\]
We now state the main results of the section, which show that the
bounding problems for ${\it SE}[U]$, ${\it ME}[U]$, ${\it GE}[U]$, and
${\it CC}$ are PP-hard.
\begin{theorem}
$\text{PP}\subseteq B_{\it SE}[U]$
\label{thm:ppse}
\end{theorem}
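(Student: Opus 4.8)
The plan is to exhibit a polynomial-time many-one reduction from a PP-complete counting problem to $B_{\it SE}[U]$. I would reduce from the threshold model-counting problem: given a boolean formula $\phi$ over $n$ variables together with an integer $k$, decide whether $\#\phi \le k$, where $\#\phi$ denotes the number of satisfying assignments of $\phi$. This problem is PP-complete (it is interreducible with MAJSAT, using $\text{PP} = \text{coPP}$), boolean formulas embed into loop-free boolean programs, and we may assume $k \ge 1$ so that the bound we produce will be positive.

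I would first observe that for a program $M$ with no low-security input, $O$ is a deterministic function of $H$, so $\mathcal{H}[U](O \mid H) = 0$ and hence ${\it SE}[U](M) = \mathcal{I}[U](O;H) = \mathcal{H}[U](O)$; the Shannon-entropy QIF is simply the entropy of the output distribution. To exploit this I pad the instance, setting $\phi' = \phi \wedge x_{n+1}$ over $n+1$ variables, so that $\#\phi' = \#\phi$ while $\#\phi' \le 2^n = N/2$, where $N = 2^{n+1}$. I then build the loop-free boolean program $M$ on the high input $H \in \{0,1\}^{n+1}$ that outputs the tagged copy $(0,H)$ of its input when $\phi'(H)$ holds, and the fixed value $(1,0^{n+1})$ otherwise. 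Under $U$, the $\#\phi'$ satisfying inputs yield $\#\phi'$ distinct outputs each of probability $1/N$, while the remaining inputs collapse to one output of probability $(N-\#\phi')/N$. A short calculation gives ${\it SE}[U](M) = g(\#\phi')$, where $g(s) = \log N - \tfrac{N-s}{N}\log(N-s)$ (with $0\log 0 = 0$).

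The key lemma I would prove is that $g$ is strictly increasing on $\{0,1,\dots,N-2\}$: writing $t = N-s$, one has $g(s+1) - g(s) = \tfrac{1}{N}\bigl(t\log t - (t-1)\log(t-1)\bigr)$, which is strictly positive whenever $t \ge 2$ because $\tau \mapsto \tau\log\tau$ is strictly increasing on $[1,\infty)$. The padding forces $\#\phi' \le N/2$, hence $t \ge N/2 \ge 2$, so we remain strictly inside the increasing regime. Given monotonicity, $\#\phi' \le k$ iff $g(\#\phi') \le g(k)$, so I set $q = g(k) = (n+1) - \tfrac{N-k}{N}\log(N-k)$ and output $(M,q)$. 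Then $(M,q) \in B_{\it SE}[U]$ iff $g(\#\phi') \le g(k)$ iff $\#\phi \le k$, and both $M$ and $q$ are computable in polynomial time from $(\phi,k)$. Although $q = g(k)$ is in general irrational, the bounding problem takes a real bound; in any case there is a rational of polynomial bit-length in the gap $(g(k),g(k+1))$, whose width is $\Omega(n/N)$, which serves equally well. This establishes $\text{PP} \subseteq B_{\it SE}[U]$.

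I expect the main obstacle to be the boundary behaviour of $g$ rather than the reduction itself. Naively collapsing all non-satisfying inputs to a single output makes $g(N-1) = g(N) = \log N$, so $g$ plateaus at the very top and the reduction would fail to separate $\#\phi = N-1$ from $\#\phi = N$. The padding $\phi' = \phi \wedge x_{n+1}$ is precisely the device that sidesteps this; verifying that it preserves the count, keeps $q$ positive and polynomially sized, and lands strictly inside the increasing regime is the part that needs care. Everything else — the entropy computation for $M$, polynomial-time constructibility of the loop-free boolean program, and PP-completeness of the threshold counting problem — is routine.
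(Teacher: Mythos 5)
Your proposal is correct, and it follows the same overall strategy as the paper's proof of Theorem~\ref{thm:ppse}: reduce a PP-complete threshold counting problem to $B_{\it SE}[U]$ by constructing a loop-free boolean program whose Shannon-entropy flow (equal to $\mathcal{H}[U](O)$ in the absence of low inputs, Lemma~\ref{lem:detse}) is strictly monotone in the number of satisfying assignments, then place the bound $q$ at the threshold. The differences are in execution, and they are worth comparing. First, your gadget is the mirror image of the paper's: you give each satisfying assignment a distinct tagged output and collapse the non-satisfying ones, so ${\it SE}[U]$ is \emph{increasing} in $\#SAT$, and you reduce from the ``$\#\phi\le k$'' problem (PP-complete via $\text{PP}=\text{coPP}$); the paper's encoding $S$ of Figure~\ref{fig:boolenc} instead collapses the satisfying assignments together with a padding set of size $2^{n-1}-1$ and spreads the rest over distinct outputs, so ${\it SE}[U]$ is \emph{decreasing} in $\#SAT$ and MAJSAT reduces directly (Lemma~\ref{lem:semonotone}). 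Second, and more substantively, the rationality of the bound is handled differently. The paper engineers its gadget so that at the threshold $\#SAT(\psi)=2^{n-1}+1$ the collapsed output class has size exactly $2^n$, making $q={\it SE}[U](S(\psi))=\frac{1}{2}+\frac{n+1}{2}$ exactly representable; the footnote to Lemma~\ref{lem:semonotone} states explicitly that the encoding was chosen for this reason and that a simpler encoding forces a non-rational bound --- which is precisely where your simpler gadget lands, since $g(k)=(n+1)-\frac{N-k}{N}\log(N-k)$ is in general irrational. Your patch --- taking a rational in the gap $(g(k),g(k+1))$, whose width you correctly bound below by $\Omega(n/N)$ --- is sound, but to finish the argument you should also observe that such a rational is \emph{computable} in polynomial time, i.e., that $\log(N-k)$ can be approximated to the required polynomially many bits of precision in polynomial time; this is standard, but it is the one step your route leans on that the paper's gadget design deliberately avoids. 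With that remark added, both arguments are complete: the paper pays with a more intricate gadget to obtain an exactly computable $q$, while you pay with a numerical-approximation step to keep the gadget and the entropy formula transparent.
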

\begin{theorem}
$\text{PP}\subseteq B_{\it ME}[U]$
\label{thm:ppme}
\end{theorem}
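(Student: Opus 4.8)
The plan is to give a polynomial-time many-one reduction from a PP-complete problem to $B_{\it ME}[U]$; since every $L \in \text{PP}$ many-one reduces to that PP-complete problem and reductions compose, this establishes $\text{PP} \subseteq B_{\it ME}[U]$. The key fact I would exploit is that, for a program $M$ \emph{without} low security inputs, the min-entropy leakage under the uniform distribution is simply the logarithm of the number of distinct outputs: by Smith's equality ${\it ME}[U](M) = {\it CC}(M)$ for low-security-input-free programs, together with Lemma~\ref{lem:ccloglow}, we have ${\it ME}[U](M) = \log |\aset{M(h) \mid h \in \mathbb{H}}|$. Hence $(M,q) \in B_{\it ME}[U]$ holds exactly when $M$ produces at most $2^q$ distinct outputs, which turns the bounding problem on such programs into a question about counting outputs.

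I would reduce from the complement of MAJSAT, which is PP-complete because PP is closed under complement. Given a boolean formula $\phi$ over variables $x_1,\dots,x_n$, I construct the loop-free boolean program $M_\phi$ that takes $x_1,\dots,x_n$ as high security inputs (and has no low security inputs), first computes the bit $b := \phi(x_1,\dots,x_n)$ by a straightforward sequence of assignments, and then sets its $n+1$ low security output bits to $(1,x_1,\dots,x_n)$ when $b$ holds and to $(0,\dots,0)$ otherwise. The leading flag bit guarantees that the collapsed output $(0,\dots,0)$ is distinct from every output of the form $(1,x)$, so each satisfying assignment yields its own distinct output while all non-satisfying assignments collapse to the single output $(0,\dots,0)$. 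Thus the number of distinct outputs of $M_\phi$ equals $\#\phi + 1$ whenever $\#\phi < 2^n$, and equals $2^n$ when $\phi$ is a tautology; this construction is clearly computable in time polynomial in $|\phi|$.

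Taking $q = n-1$ (assuming $n \ge 2$, which is without loss of generality by padding $\phi$ with a dummy variable so that $q > 0$ as required), I would then verify the equivalence $(M_\phi, n-1) \in B_{\it ME}[U] \iff \phi \in \overline{\text{MAJSAT}}$. Indeed, since $M_\phi$ has no low security inputs, ${\it ME}[U](M_\phi) \le n-1$ iff $M_\phi$ has at most $2^{n-1}$ distinct outputs, i.e. $\#\phi + 1 \le 2^{n-1}$ in the non-tautology case, which is equivalent to $\#\phi \le 2^{n-1}-1$, i.e. $\#\phi < 2^{n-1}$, i.e. $\phi \notin \text{MAJSAT}$. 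The tautology case ($\#\phi = 2^n$, giving $2^n$ outputs and leakage $n > n-1$) and the empty case ($\#\phi = 0$, giving one output and leakage $0 \le n-1$) are checked directly and remain consistent with the equivalence. This exhibits a polynomial-time reduction from $\overline{\text{MAJSAT}}$ to $B_{\it ME}[U]$, establishing PP-hardness.

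The main obstacle I anticipate is purely bookkeeping rather than conceptual: getting the off-by-one between the output count $\#\phi+1$ and the satisfying-assignment count right, choosing $q$ so that the integer threshold $2^q$ lands exactly at the MAJSAT boundary, and confirming that the flag-bit trick really prevents the satisfying and non-satisfying outputs from colliding (in particular when the all-false assignment satisfies $\phi$). The information-theoretic content — that ${\it ME}[U]$ on low-input-free programs is a logarithmic output-counting measure — is already supplied by the cited equalities, so no new entropy argument is needed beyond this reduction.
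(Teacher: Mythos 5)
Your core idea is the same as the paper's: use the fact that for low-security-input-free programs ${\it ME}[U](M)$ equals the logarithm of the number of distinct outputs (via Lemma~\ref{lem:ccme} and Lemma~\ref{lem:ccloglow}, or directly Lemma~\ref{lem:mel}), and turn a MAJSAT-style question into an output-counting question. Reducing from $\overline{\textrm{MAJSAT}}$ is also legitimate in principle, since PP is closed under complement. However, your reduction as stated is incorrect, and the failure is exactly at the point you dismissed as bookkeeping. You argue $(M_\phi,n-1)\in B_{\it ME}[U]$ iff $\#\phi+1\le 2^{n-1}$ iff $\#\phi<2^{n-1}$ iff $\phi\notin\textrm{MAJSAT}$; the last equivalence is false, because $\phi\notin\textrm{MAJSAT}$ means $\#SAT(\phi)\le 2^{n-1}$, not $\#SAT(\phi)<2^{n-1}$. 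Concretely, any $\phi$ with exactly $2^{n-1}$ satisfying assignments is a yes-instance of $\overline{\textrm{MAJSAT}}$, yet your $M_\phi$ has $2^{n-1}+1$ distinct outputs, so ${\it ME}[U](M_\phi)=\log(2^{n-1}+1)>n-1$ and your reduction answers no. Moreover, with your polarity (satisfying assignments get distinct outputs, falsifying ones collapse) \emph{no} integer $q$ can separate the two sides: the $\overline{\textrm{MAJSAT}}$ boundary corresponds to an output count of $2^{n-1}+1$, which is never a power of two, so you would need $q=\log(2^{n-1}+1)$ (irrational) or a rational approximation computed to polynomially many bits --- precisely the complication that the footnote to Lemma~\ref{lem:semonotone} says the paper's encodings are designed to avoid.

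The paper resolves this by flipping the polarity. In the program $T(\phi)$ of Figure~\ref{fig:boolenc2}, the \emph{satisfying} assignments (together with all inputs having $H'={\sf true}$) collapse to a single output, while each falsifying assignment with $H'={\sf false}$ gets its own distinct output, so the number of outputs is $\#SAT(\neg\phi)+1=2^n-\#SAT(\phi)+1$ (Lemma~\ref{lem:tme}; the auxiliary variable $H'$ keeps the ``$+1$'' uniform even when $\phi$ is unsatisfiable). Then $\phi\in\textrm{MAJSAT}$ iff $\#SAT(\phi)\ge 2^{n-1}+1$ iff the output count is at most $2^{n-1}$ iff ${\it ME}[U](T(\phi))\le n-1$: the strict MAJSAT inequality now lands exactly on the integer threshold $q=n-1$, and MAJSAT itself (no complementation needed) reduces to $B_{\it ME}[U]$. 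The minimal repair to your proposal is to apply your construction to $\neg\phi$ instead of $\phi$, which essentially reproduces the paper's proof; as written, your reduction misclassifies every formula whose satisfying assignments number exactly $2^{n-1}$.
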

\begin{theorem}
$\text{PP}\subseteq B_{\it GE}[U]$
\label{thm:ppge}
\end{theorem}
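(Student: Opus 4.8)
The plan is to prove PP-hardness by exhibiting a polynomial-time many-one reduction from a canonical PP-complete problem, such as MAJSAT, or equivalently the threshold counting problem that asks, given a boolean formula $\phi(x_1,\dots,x_n)$ and an integer $k$, whether $\#\phi \ge k$ (where $\#\phi$ denotes the number of satisfying assignments of $\phi$). Since hardness for the subclass of programs without low-security inputs immediately implies hardness for the full class, I would construct only programs all of whose inputs are high security, so that the distribution $U$ is just the uniform distribution over $\mathbb{H} = \{0,1\}^m$ for the appropriate $m$. This mirrors the template already used for Theorems~\ref{thm:ppse} and~\ref{thm:ppme}; the new ingredient is the ${\it GE}$-specific analysis below.

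First I would derive a closed form for ${\it GE}[U](M)$ on such programs. Writing $n_o = |\{h \mid M(h) = o\}|$ for the preimage size of each output $o$ and $N = |\mathbb{H}|$, the conditional distribution of $H$ given $O = o$ is uniform over its $n_o$ preimages, so $\mathcal{G}[U](H \mid O = o) = (n_o + 1)/2$, while $\mathcal{G}[U](H) = (N+1)/2$. Averaging over $o$ with weights $\mu(O=o) = n_o/N$ gives $\mathcal{G}[U](H \mid O) = \frac{1}{2N}\bigl(\sum_o n_o^2 + N\bigr)$, and hence ${\it GE}[U](M) = \frac{N}{2} - \frac{1}{2N}\sum_o n_o^2$. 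Thus the leakage is controlled entirely by the collision quantity $\sum_o n_o^2$, which grows (and so shrinks the leakage) as inputs are merged into common outputs.

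The heart of the reduction is a construction making $\sum_o n_o^2$ an \emph{affine} function of $\#\phi$. Given $\phi$ over $x_1,\dots,x_n$, I would take a high input $(x,b)$ with one extra bit $b$ (so $N = 2^{n+1}$) and, using an output flag bit that keeps the two branches disjoint, let $M$ output $x$ alone when $\phi(x)$ holds (so the two values of $b$ collide into a single output of preimage size $2$) and the full pair $(x,b)$ when $\phi(x)$ fails (distinct singleton outputs). This yields $\sum_o n_o^2 = 4\,\#\phi + 2(2^n - \#\phi) = 2\,\#\phi + 2^{n+1}$, and substituting into the closed form gives ${\it GE}[U](M) = 2^n - \tfrac12 - \frac{\#\phi}{2^{n+1}}$, which is strictly decreasing in $\#\phi$. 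Setting $q = 2^n - \tfrac12 - \frac{k}{2^{n+1}}$ (a positive rational, computable in polynomial time) then makes $(M,q) \in B_{\it GE}[U]$ equivalent to $\#\phi \ge k$. Since $M$ is plainly a loop-free boolean program of size polynomial in $|\phi|$, this completes the reduction and hence $\text{PP} \subseteq B_{\it GE}[U]$.

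The entropy computation and the check that $M$ lies in the loop-free boolean syntax are routine. I expect the main obstacle to be engineering the output encoding so that $\sum_o n_o^2$ is \emph{exactly} affine in $\#\phi$ with no spurious collisions between the two branches (hence the explicit flag bit), and so that the resulting monotone correspondence between ${\it GE}[U](M)$ and $\#\phi$ is strict, guaranteeing that $(M,q)\in B_{\it GE}[U] \iff \#\phi \ge k$ holds as a genuine many-one reduction; the doubling trick with the bit $b$ is precisely what makes the dependence linear and strict, avoiding the boundary ties that a naive encoding (with $\sum_o n_o^2$ quadratic in $\#\phi$) would introduce.
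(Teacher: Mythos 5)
Your proposal is correct, and its architecture is the same as the paper's proof of Theorem~\ref{thm:ppge}: a polynomial-time many-one reduction from MAJSAT built on the closed form ${\it GE}[U](M)=\frac{N}{2}-\frac{1}{2N}\sum_o n_o^2$ for programs without low-security inputs (the paper's Lemma~\ref{lem:geu}, which you re-derive correctly), with $q$ chosen to be the exact leakage value at the counting threshold. What genuinely differs is the gadget that makes the collision sum $\sum_o n_o^2$ monotone in $s=\#SAT(\phi)$. The paper (Lemma~\ref{lem:gemonotone}) uses the single-output-bit program $O:=\phi\vee H'$ with a fresh high bit $H'$; its two output classes have sizes $2^n+s$ and $2^n-s$, so $\sum_o n_o^2=2\cdot 2^{2n}+2s^2$ and the leakage $2^{n-1}-s^2/2^{n+1}$ is monotone but quadratic in $s$. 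Your gadget instead merges the two extensions $(x,b)$ of each satisfying $x$ into one output and leaves the non-satisfying inputs as singletons, giving $\sum_o n_o^2=2s+2^{n+1}$ and leakage $2^n-\frac{1}{2}-s/2^{n+1}$, affine in $s$. Both constructions are engineered around the same pitfall that you explicitly flag: the naive encoding $O:=\phi$ yields $\sum_o n_o^2=s^2+(2^n-s)^2$, which is symmetric about $s=2^{n-1}$ --- precisely the MAJSAT threshold --- and therefore cannot separate majority from minority. The two fixes are interchangeable here; yours gets strict monotonicity and a simple rational $q$ for free from linearity, and works for an arbitrary threshold $k$ rather than just the majority value, at the cost of a multi-bit output, whereas the paper's keeps a one-bit output and reuses the $\vee\, H'$ disjunction pattern that also appears in its other reductions.
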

\begin{theorem}
$\text{PP}\subseteq B_{\it CC}$
\label{thm:ppcc}
\end{theorem}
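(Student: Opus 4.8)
The plan is to give a polynomial-time many-one reduction to $B_{\it CC}$ from a standard PP-complete counting problem, namely the threshold counting problem $\{\langle \phi, k\rangle \mid \#\phi \le k\}$, where $\phi$ is a boolean formula, $k$ is a nonnegative integer written in binary, and $\#\phi$ denotes the number of satisfying assignments of $\phi$. This problem is PP-complete: it is the complement of the threshold-$\ge$ problem, to which $\mathrm{MAJSAT}$ reduces by taking $k = 2^{n-1}+1$, and $\mathrm{PP}$ is closed under complement~\cite{toda91}. Reducing it to $B_{\it CC}$ therefore establishes $\mathrm{PP} \subseteq B_{\it CC}$ in the paper's sense, i.e. that $B_{\it CC}$ is PP-hard.

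The reduction rests on the characterization already used for Theorem~\ref{thm:cck} (see Lemma~\ref{lem:ccloglow} and \cite{malacaria08}): for a program without low security inputs, ${\it CC}(M) = \log |\{M(h) \mid h \in \mathbb{H}\}|$, the logarithm of the number of distinct outputs. I would engineer a loop-free boolean program whose number of distinct outputs is exactly $\#\phi + 1$. Given $\phi$ over $x_1, \dots, x_n$, first introduce a fresh high variable $y$ and set $\phi' \equiv \phi \wedge y$; note $\#\phi' = \#\phi$ and, crucially, $\phi'$ is never a tautology (the entire $y = \mathsf{false}$ half falsifies it), so at least one assignment is non-satisfying. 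Let $M$ take $(x_1,\dots,x_n,y)$ as high input (no low input) with low output variables $t, o_1,\dots,o_n$, computing: if $\phi'$ holds then $(t,\vec o) := (\mathsf{true}, \vec x)$ else $(t,\vec o) := (\mathsf{false}, \vec 0)$. This is expressible in the syntax of Figure~\ref{fig:syntax} (the $\wedge,\neg$ basis and the derived connectives suffice for $\phi'$, and the outputs are made output-only via the variable convention). Each satisfying assignment yields a distinct output $(\mathsf{true}, \vec x)$, whereas every non-satisfying assignment collapses to the single output $(\mathsf{false}, \vec 0)$, which is guaranteed to occur because $\phi'$ is not a tautology; the tag bit $t$ keeps this collapsed output disjoint from all satisfying outputs. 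Hence the number of distinct outputs is exactly $\#\phi + 1$, and ${\it CC}(M) = \log(\#\phi + 1)$.

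It remains to choose the bound $q$ so that ${\it CC}(M) \le q$ iff $\#\phi \le k$. Since ${\it CC}(M) = \log(\#\phi+1)$ takes only values of the form $\log m$ for integer $m$, I would pick any dyadic rational $q$ with $\log(k+1) \le q < \log(k+2)$; then $\log(\#\phi+1) \le q$ iff $\#\phi + 1 \le k+1$ iff $\#\phi \le k$. Such a $q$ with $O(n)$ bits exists and is polynomial-time computable because the interval $[\log(k+1), \log(k+2))$ has length $\log(1 + 1/(k+1)) = \Omega(2^{-n})$ for $k \le 2^{n+1}$. The map $\langle \phi, k\rangle \mapsto (M, q)$ is thus computable in polynomial time, and by construction $(M,q) \in B_{\it CC}$ iff $\#\phi \le k$, completing the reduction.

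The main obstacle is forcing the distinct-output count to equal $\#\phi$ exactly rather than up to an uncontrolled additive error: the naive collapse-on-failure construction breaks when $\phi$ is a tautology (no collapsed output appears), and since tautology detection is itself coNP-hard the discrepancy cannot be patched after the fact. The dummy variable $y$ is the device that guarantees a non-satisfying assignment always exists, pinning the count to $\#\phi + 1$. The only other delicate point is the real-valued threshold, but because ${\it CC}$ takes values only at $\log m$ the comparison against $q$ is effectively discrete, so a short dyadic $q$ suffices and no irrational bound need be represented. Finally, I note an even shorter route: the program above uses no low security inputs, and on such programs ${\it CC}(M) = {\it ME}[U](M)$~\cite{smith09}; hence, if the reduction establishing Theorem~\ref{thm:ppme} likewise produces low-input-free programs, that same reduction already witnesses $\mathrm{PP} \subseteq B_{\it CC}$.
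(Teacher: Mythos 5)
Your proposal is correct, but your main argument takes a genuinely different route from the paper. The paper proves Theorem~\ref{thm:ppcc} in one line: it is a corollary of Theorem~\ref{thm:ppme} (PP-hardness of $B_{\it ME}[U]$) together with Lemma~\ref{lem:ccme} (${\it CC}(M)={\it ME}[U](M)$ for programs without low security inputs), which applies because the program $T(\phi)$ of Figure~\ref{fig:boolenc2} used in that reduction is low-input-free --- this is exactly the ``even shorter route'' you sketch in your closing remark, and it is indeed valid. Your main argument is instead a self-contained reduction built directly on Lemma~\ref{lem:ccloglow}: the tag bit $t$ together with the dummy variable $y$ pins the number of distinct outputs to exactly $\#SAT(\phi)+1$, so ${\it CC}(M)=\log(\#SAT(\phi)+1)$, and you reduce from the PP-complete threshold problem $\#SAT(\phi)\le k$. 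Note the structural difference this creates: the paper's encoding guards on $\phi\vee H'$, making the output count $\#SAT(\neg\phi)+1$ \emph{anti}-monotone in $\#SAT(\phi)$, so MAJSAT (a ``more than'' condition) reduces directly to a bounding instance with the \emph{integer} bound $q=n-1$; your encoding is monotone in $\#SAT(\phi)$, which forces you to reduce from the complement-style threshold problem and to exhibit a representable bound strictly between $\log(k+1)$ and $\log(k+2)$, whence your dyadic-rational argument. That argument is sound (${\it CC}$ takes only values $\log m$ for integer $m$, the gap has length $\Omega(2^{-n})$ for the relevant $k$, and such a $q$ is polynomial-time computable), but it is the one delicate step your route needs and the paper's avoids; alternatively you could have sidestepped it by reducing from $\aset{\phi \mid \#SAT(\phi) < 2^{n-1}}$, whose threshold makes $q=n-1$ exact. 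One small misattribution: closure of PP under complement is elementary (due to Gill), not a consequence of \cite{toda91}, which the paper cites for $\text{PH}\subseteq\text{P}^{\text{PP}}$; this does not affect correctness.
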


We remind that the above results hold (even) when the bounding
problems $B_{\it SE}[U]$, $B_{\it ME}[U]$, $B_{\it GE}[U]$, and
$B_{\it CC}$ are restricted to loop-free boolean programs.  We also
note that the results hold even when the programs are restricted to
those without low security inputs.  These results are proven by a
reduction from MAJSAT, which is a PP-complete problem.  MAJSAT is the
problem of deciding, given a boolean formula $\phi$ over variables
$\vect x$, if there are more than $2^{|\vect x|-1}$ satisfying
assignments to $\phi$ (i.e., whether the majority of the assignments
to $\phi$ are satisfying).

\subsection{Complexities for Belief and Channel Capacity Like Definitions}

\label{sec:complexbeliefcclike}

This section investigates the complexity theoretic hardness of the
bounding problems for the belief-based definition and the
channel-capacity-like definition of quantitative information flow
introduced in Section~\ref{sec:ksafetybelief} and
Section~\ref{sec:ksafetycclike}.  As in Section~\ref{sec:complex}, we
focus on loop-free boolean programs.

Below shows the complexity results for the belief-based comparison
problems for loop-free boolean
programs~\cite{yasuoka:toplas2010submit}.
\begin{itemize}
\item[(1)] Checking ${\it BE}[\aseq{U,h,\ell}](M_1)
  \leq {\it BE}[\aseq{U,h,\ell}](M_2)$ is PP-hard.
\item[(2)] Checking $\forall h,\ell.{\it BE}[\aseq{U,h,\ell}](M_1)
  \leq {\it BE}[\aseq{U,h,\ell}](M_2)$ is PP-hard.
\item[(3)] Checking $\forall \mu,h,\ell.{\it BE}[\aseq{\mu,h,\ell}](M_1)
  \leq {\it BE}[\aseq{\mu,h,\ell}](M_2)$ is coNP-complete.
\end{itemize}

First, we prove that the two types of bounding problems for the
belief-based definition, $B_{\it BE1}$ and $B_{\it BE2}$, are both
PP-hard.
\begin{theorem}
\label{thm:ppbe1}
$\text{PP}\subseteq B_{\it BE1}[\aseq{U,h,\ell}]$
\end{theorem}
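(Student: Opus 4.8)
The plan is to give a polynomial-time many-one reduction from MAJSAT, which the paper notes is PP-complete. The starting point is Lemma~\ref{lem:be}, which tells us that for the uniform distribution and no low-security inputs, ${\it BE}[\aseq{U,h}](M) = -\log \frac{|\aset{h' \mid M(h') = M(h)}|}{|\mathbb{H}|}$. Hence the bound ${\it BE}[\aseq{U,h}](M) \le q$ is equivalent to a lower bound $|\aset{h' \mid M(h') = M(h)}| \ge 2^{-q}|\mathbb{H}|$ on the size of the \emph{collision class} of the observed output $M(h)$. The whole problem therefore reduces to engineering a program whose collision class at a fixed, \emph{known} input $h$ counts an affine shift of the satisfying assignments of a formula.

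Given a formula $\phi$ over variables $\vect{x} = x_1,\dots,x_n$, I first introduce a fresh high-security variable $z$ and form $\psi(\vect{x},z) = \neg z \vee \phi(\vect{x})$. This gadget serves two purposes: it has the \emph{known} satisfying assignment $(\vect{0},0)$ (the all-false valuation), and its number of satisfying assignments is exactly $2^n + \#\phi$, where $\#\phi$ is the number of satisfying assignments of $\phi$. I then build a loop-free boolean program $M$ over high input $(\vect{x},z)$ (and no low input) that, using a distinguishing flag bit in its output, emits a single fixed value $c$ whenever $\psi$ holds and emits the (tagged) input itself otherwise; this makes the outputs on $\psi$-false inputs pairwise distinct and disjoint from $c$. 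Taking the fixed input $h = (\vect{0},0)$, we have $M(h) = c$, and the collision class $\aset{h' \mid M(h') = c}$ is exactly the satisfying set of $\psi$, so by Lemma~\ref{lem:be}, ${\it BE}[\aseq{U,h}](M) = -\log\frac{2^n + \#\phi}{2^{n+1}}$.

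It remains to pick $q$ so that the test $2^n + \#\phi \ge 2^{n+1-q}$ coincides with MAJSAT, i.e.\ with $\#\phi \ge 2^{n-1}+1$, equivalently $2^n + \#\phi \ge 3\cdot 2^{n-1}+1$. Since the collision count $2^n + \#\phi$ is an integer, it suffices to choose a positive $q$ for which the threshold $2^{n+1-q}$ lies strictly between the consecutive integers $3\cdot 2^{n-1}$ and $3\cdot 2^{n-1}+1$; then ${\it BE}[\aseq{U,h}](M)\le q$ holds \emph{iff} $\phi\in\text{MAJSAT}$. Such a $q$ is a positive constant near $2-\log_2 3 \approx 0.415$, and a suitable rational lying in the (exponentially small but nonempty) admissible window can be computed with $O(n)$ bits of precision in polynomial time; because the resulting threshold is non-integral, no tie can occur and the strict boundary of MAJSAT is handled cleanly. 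The program $M$ and the number $q$ are both produced in time polynomial in $|\phi|$, completing the reduction.

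The routine part is the explicit boolean-program construction realizing ``output $c$ if $\psi$, else output the tagged input''; the genuinely delicate point is the choice and finite representation of $q$, where one must ensure the threshold $2^{n+1-q}$ falls strictly inside the length-one gap separating the MAJSAT-yes and MAJSAT-no collision counts. I expect this boundary and representation step to be the main obstacle, whereas the reduction skeleton follows directly from the self-information characterization in Lemma~\ref{lem:be}.
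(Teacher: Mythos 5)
Your reduction is correct, and it follows the same skeleton as the paper's proof---a polynomial-time reduction from MAJSAT in which Lemma~\ref{lem:be} turns the bound ${\it BE}[\aseq{U,h}](M)\le q$ into an integer lower bound on the collision class of a fixed, known input---but it resolves the one genuinely delicate point differently. The paper's gadget $V(\psi)$ (Figure~\ref{fig:boolenc3}) pads the satisfying set of $\psi$ with exactly $2^{n-1}-1$ additional inputs that also map to ${\sf true}$, so that at the MAJSAT threshold $\#SAT(\psi)=2^{n-1}+1$ the collision class has size exactly $2^{n-1}+1+2^{n-1}-1=2^n$ out of $2^{n+2}$ inputs, making the bound $q=\log\frac{2^{n+2}}{2^n}=2$ a \emph{constant rational}; the comparison with an arbitrary $\phi$ is then handled by the monotonicity statement of Lemma~\ref{lem:be2mono}. (The paper's footnotes to Lemmas~\ref{lem:semonotone} and~\ref{lem:be2mono} make explicit that the encodings were engineered precisely to obtain a rational upper bound.) Your gadget $\psi=\neg z\vee\phi$ instead pads by $2^n$, so the threshold collision count $3\cdot 2^{n-1}+1$ is not a power of two, and you are forced to choose $q$ as an $n$-bit rational approximating $2-\log_2 3$ from below inside an exponentially small window. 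That step is valid---the window has width $\log_2\bigl(1+\tfrac{1}{3\cdot 2^{n-1}}\bigr)=\Theta(2^{-n})$, and $\log_2 3$ (equivalently $\log_2(3\cdot 2^{n-1}+1)$) can indeed be approximated to $n+O(1)$ bits in time polynomial in $n$---but this poly-time computability of the logarithm is asserted rather than argued, and it is exactly the kind of numerical bookkeeping the paper's combinatorial padding trick eliminates. What your route buys in exchange is a more elementary argument at the fixed input: you never need the $\max_h$ characterization or the comparison Lemma~\ref{lem:be2mono}, since Lemma~\ref{lem:be} applied at $h=(\vect{0},0)$ already gives the exact value $-\log\frac{2^n+\#SAT(\phi)}{2^{n+1}}$. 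If you want to keep your simpler gadget but avoid real-number approximation entirely, note that you may instead follow the paper and pad so the yes/no boundary lands on a power of two, which collapses $q$ to a small constant.
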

\begin{theorem}
\label{thm:ppbe2}
$\text{PP}\subseteq B_{\it BE2}[U]$
\end{theorem}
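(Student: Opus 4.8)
The plan is to reduce from coMAJSAT, which is PP-complete since PP is closed under complement, and to work with loop-free boolean programs \emph{without} low security inputs (this restriction only strengthens the result, since such programs are a special case). The reduction will produce an integer bound $q$, avoiding any issue of representing $q$ as a real.

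First I would recast the problem in terms of fiber sizes. By Lemma~\ref{lem:be}, for a program $M$ without low inputs and the uniform distribution $U$ over $\mathbb{H}$ with $|\mathbb{H}| = 2^N$, we have ${\it BE}[\aseq{U,h}](M) = -\log(|M^{-1}(M(h))|/2^N) = N - \log|M^{-1}(M(h))|$. Hence $\max_h {\it BE}[\aseq{U,h}](M) = N - \log(\min_{o} |M^{-1}(o)|)$, where the minimum ranges over outputs $o$ actually produced by $M$. Therefore $(M,q)\in B_{\it BE2}[U]$ iff \emph{every} nonempty output fiber of $M$ has at least $2^{N-q}$ preimages. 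The task thus reduces to building $M$ whose \emph{smallest} fiber encodes a threshold on the model count of a formula.

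Given a MAJSAT instance $\phi$ over variables $\vec{x} = (x_1,\dots,x_n)$, I would first pad it into $\Phi(\vec{x},y) = (\neg y \wedge \phi(\vec{x})) \vee (y \wedge x_1)$ over $n' = n+1$ variables. Then $\#\Phi = \#\phi + 2^{n-1}$, so $2^{n-1} \le \#\Phi \le 3\cdot 2^{n-1}$, which keeps $\#\Phi$ strictly inside $(0, 2^{n'})$ (making $\Phi$ neither a contradiction nor a tautology), while $\#\phi > 2^{n-1}$ iff $\#\Phi > 2^{n'-1}$, i.e.\ iff $\#\Phi$ exceeds the midpoint for an $n'$-variable formula. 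Writing $s' = \#\Phi$, I then construct $M$ with high input $(\vec{x},y,\vec{z})$ where $\vec{z} \in \{0,1\}^{n'}$: on input $(\vec{x},y,\vec{z})$, $M$ outputs a fixed marker $\star$ if $\Phi(\vec{x},y)$ holds and outputs $\vec{z}$ otherwise, realizing $\star$ and the $\vec{z}$-values as disjoint output bit-patterns (clearly a loop-free boolean program). The fiber of $\star$ has size $s'\cdot 2^{n'}$, and for each constant $c \in \{0,1\}^{n'}$ the fiber of output $c$ has size $2^{n'} - s'$. Since $1 \le s' \le 2^{n'}-1$ and $s'\cdot 2^{n'} \ge 2^{n'} \ge 2^{n'}-s'$, the minimum nonempty fiber size is exactly $2^{n'} - s'$. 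Here the total number of high input bits is $N = n' + n' = 2n'$. Setting $q = n'+1$, the fiber threshold $2^{N-q}$ becomes $2^{2n'-(n'+1)} = 2^{n'-1}$, so $(M,q) \in B_{\it BE2}[U]$ iff $2^{n'}-s' \ge 2^{n'-1}$ iff $s' \le 2^{n'-1}$ iff $\#\Phi \le 2^{n'-1}$ iff $\phi \notin \text{MAJSAT}$. This is a polynomial-time many-one reduction from coMAJSAT, establishing $\text{PP} \subseteq B_{\it BE2}[U]$.

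The main obstacle is precisely that the BE2 quantity is governed by the \emph{minimum} over all output fibers, not by a single designated fiber: a naive gadget that places $\#\phi$ in one fiber fails because some other fiber can be even smaller, or can vanish entirely in the degenerate tautology/contradiction cases, causing the minimum to jump discontinuously. Both difficulties are handled above: padding the high input with $\vec{z}$ (of length $\ge n'$) forces the marker fiber to dominate so the minimum always equals $2^{n'}-s'$, and the midpoint-preserving preprocessing $\Phi$ eliminates the degenerate cases while keeping a clean power-of-two threshold that lets $q$ be an integer. One could instead reduce directly from MAJSAT by swapping the two output buckets, but that forces the threshold to $2^{n'-1}+1$, which is not a power of two and would require $q = N - \log(2^{n'-1}+1)$; reducing from coMAJSAT is what makes $q$ an integer.
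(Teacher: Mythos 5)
Your proof is correct, but it takes a genuinely different route from the paper's. The paper (Lemma~\ref{lem:be2mono} together with the proof of Theorem~\ref{thm:ppbe2}) designs the gadget $V(\psi)$ of Figure~\ref{fig:boolenc3} so that the satisfying assignments, plus a padding term $2^{x-1}-1$, land in the \emph{minority} output fiber; hence $\max_h {\it BE}[\aseq{U,h}](V(\psi))$ is monotonically \emph{decreasing} in $\#SAT(\psi)$, which permits a direct many-one reduction from MAJSAT with the bound fixed at the constant $q=2$. You instead place the satisfying assignments of your padded formula $\Phi$ into one huge marker fiber and scatter the falsifying assignments across $2^{n'}$ small fibers of size $2^{n'}-s'$ by copying the auxiliary input $\vect{z}$ to the output --- a construction much closer in spirit to the paper's $T$ gadget (Figure~\ref{fig:boolenc2}) used for ${\it ME}[U]$ than to its $V$ gadget. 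This makes $\max_h {\it BE}$ monotonically \emph{increasing} in the model count, so you must reduce from the complement coMAJSAT, invoking the standard (but nowhere stated in the paper) closure of PP under complementation. Your preliminary reformulation of $B_{\it BE2}[U]$ via Lemma~\ref{lem:be} --- membership holds iff every nonempty fiber has at least $2^{N-q}$ preimages --- is clean, and the counting checks out: $1 \le s' \le 2^{n'}-1$ guarantees every fiber is nonempty, and $s'\cdot 2^{n'} \ge 2^{n'} > 2^{n'}-s'$ guarantees the minimum is exactly $2^{n'}-s'$, with the midpoint-preserving padding $\Phi$ correctly converting the MAJSAT threshold into the power-of-two threshold $2^{n'-1}$. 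The trade-off between the two arguments: your bound $q=n'+1$ grows with the instance, whereas the paper's reduction keeps $q$ at the constant $2$; so the paper's proof additionally establishes PP-hardness in the fixed-$q$ regime (consonant with its constant-bound non-$k$-safety result, Theorem~\ref{thm:be2nk1}), while your version establishes the stated theorem but not that stronger fixed-bound reading. Conversely, your approach buys an integer bound without having to engineer the $2^{x-1}-1$ offset inside the gadget, at the mild cost of relying on $\text{PP}=\text{coPP}$.
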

As in Section~\ref{sec:complex}, the above theorems are proven by
a reduction from MAJSAT.  They show that the bounding problems for
${\it BE}[U]$ are complexity theoretically difficult.

Next, we prove the hardness results for the channel-capacity like
definitions of quantitative information flow.
Theorems~\ref{thm:ppsecc} and \ref{thm:ppmecc} for $B_{\it SECC}$ and
$B_{\it MECC}$ follow from the equivalence $\max_\mu {\it SE}[\mu](M)
= \max_\mu {\it ME}[\mu](M) = {\it CC}(M)$
(cf. Section~\ref{sec:ksafetycclike}) and Theorem~\ref{thm:ppcc}.
Theorem~\ref{thm:ppgecc} for $B_{\it GECC}$ follows from
Theorem~\ref{thm:ppge} and the equivalence $\max_\mu {\it GE}[\mu](M)
= \max_\ell {\it GE}[U\otimes\dot{\ell}](M)$
(cf. Lemma~\ref{lem:gecc}).
\begin{theorem}
\label{thm:ppsecc}
  $\text{PP}\subseteq B_{\it SECC}$
\end{theorem}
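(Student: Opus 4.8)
The plan is to observe that $B_{\it SECC}$ and $B_{\it CC}$ are literally the same decision problem, so that the PP-hardness of the latter, already established as Theorem~\ref{thm:ppcc}, transfers to the former with no additional work. Concretely, recall from Section~\ref{sec:ksafetycclike} that the channel-capacity-based definition is exactly the maximum of the Shannon-entropy-based definition over the input distribution, i.e.\ ${\it CC}(M) = \max_\mu {\it SE}[\mu](M)$ (this is immediate from the definition ${\it CC}(M)=\max_\mu \mathcal{I}[\mu](O;H|L)$ together with Definition~\ref{def:se}). Hence, for every program $M$ and every bound $q$,
\[
\begin{array}{rcl}
(M,q) \in B_{\it SECC} & \iff & \forall \mu.\, {\it SE}[\mu](M) \le q \\
& \iff & \max_\mu {\it SE}[\mu](M) \le q \\
& \iff & {\it CC}(M) \le q \;\iff\; (M,q) \in B_{\it CC}.
\end{array}
\]
The middle equivalence is just the elementary fact that a quantity is bounded by $q$ for all $\mu$ iff its maximum over $\mu$ is (the maximum is well-defined exactly as in the definition of ${\it CC}$, so no extra argument about attainment is needed).

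First I would confirm that this set equality respects the input encoding: both $B_{\it SECC}$ and $B_{\it CC}$ take the same pair $(M,q)$ with $M$ a loop-free boolean program and $q \in \mathbb{R}^+$, so the identity map is a (trivial, polynomial-time) reduction from $B_{\it CC}$ to $B_{\it SECC}$. Then, invoking Theorem~\ref{thm:ppcc}, which gives $\text{PP}\subseteq B_{\it CC}$ for loop-free boolean programs, and composing it with this identity reduction, I obtain $\text{PP}\subseteq B_{\it SECC}$, as claimed. Note that, just as for Theorem~\ref{thm:ppcc}, the result continues to hold even when the programs are restricted to those without low security inputs, since the equivalence $B_{\it SECC}=B_{\it CC}$ is insensitive to the presence of low security inputs.

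There is essentially no obstacle here: all of the hard work, namely the reduction from MAJSAT underlying Theorem~\ref{thm:ppcc}, is already done, and the only thing left to check is the definitional identity $B_{\it SECC}=B_{\it CC}$, which follows from $\max_\mu {\it SE}[\mu] = {\it CC}$. The single point that warrants care is that the two bounding problems are defined over the same class of instances, so that the reduction is genuinely the identity rather than a nontrivial transformation; this holds by construction of the two problems.
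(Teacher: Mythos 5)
Your proposal is correct and is essentially identical to the paper's own proof, which likewise disposes of the theorem by noting that $B_{\it SECC}$ is equivalent to $B_{\it CC}$ (since ${\it CC}(M)=\max_\mu {\it SE}[\mu](M)$) and then invoking Theorem~\ref{thm:ppcc}. Your additional care about the identity reduction and the attainment of the maximum merely spells out what the paper calls ``trivial.''
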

\begin{theorem}
\label{thm:ppmecc}
  $\text{PP}\subseteq B_{\it MECC}$
\end{theorem}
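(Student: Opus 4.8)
The plan is to show that the bounding problem $B_{\it MECC}$ denotes exactly the same language as $B_{\it CC}$, after which the result is immediate from Theorem~\ref{thm:ppcc}. First I would unfold the definition: by construction $(M,q) \in B_{\it MECC}$ holds iff $\forall \mu.\ {\it ME}[\mu](M) \le q$, and since the bound is required for every distribution this is equivalent to $\max_\mu {\it ME}[\mu](M) \le q$. Applying Lemma~\ref{lem:mecceqcc}, which asserts $\max_\mu {\it ME}[\mu](M) = {\it CC}(M)$, this condition rewrites to ${\it CC}(M) \le q$, i.e.\ to $(M,q) \in B_{\it CC}$. Hence $B_{\it MECC} = B_{\it CC}$ as sets of pairs $(M,q)$.

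Given the identity of the two languages, the very reduction from MAJSAT that witnesses $\text{PP} \subseteq B_{\it CC}$ in Theorem~\ref{thm:ppcc} also witnesses $\text{PP} \subseteq B_{\it MECC}$: the reduction maps each instance to a pair $(M,q)$, and membership of that pair is decided identically by the two problems, so no new construction is needed. I would note only that Lemma~\ref{lem:mecceqcc} is a general equivalence holding for arbitrary $M$ (including programs with low security inputs), so in particular it applies to the loop-free boolean programs produced by the reduction; thus the identity $B_{\it MECC} = B_{\it CC}$ restricts correctly to the program class on which PP-hardness is claimed.

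There is essentially no obstacle remaining in the theorem itself, since all the real work has been discharged earlier: the content is concentrated in Lemma~\ref{lem:mecceqcc} (the extension of Braun et al.'s min-entropy-channel-capacity equivalence to the case with low security inputs) and in Theorem~\ref{thm:ppcc} (the MAJSAT reduction for ${\it CC}$). The only elementary step is the observation that ``$\forall \mu.\ {\it ME}[\mu](M) \le q$'' is the same condition as ``$\max_\mu {\it ME}[\mu](M) \le q$,'' which holds because a quantity bounded above by $q$ for every $\mu$ is precisely one whose maximum over $\mu$ is at most $q$. Consequently Theorem~\ref{thm:ppmecc} is a direct corollary of the two cited results.
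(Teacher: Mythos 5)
Your proposal is correct and follows exactly the paper's own route: the paper proves Theorem~\ref{thm:ppmecc} as a direct consequence of Lemma~\ref{lem:mecceqcc} and Theorem~\ref{thm:ppcc}, which is precisely the identification $B_{\it MECC}=B_{\it CC}$ that you spell out. Your additional remarks (that the equivalence $\forall\mu.\,{\it ME}[\mu](M)\le q$ iff $\max_\mu{\it ME}[\mu](M)\le q$ is what makes the rewriting legitimate, and that Lemma~\ref{lem:mecceqcc} covers programs with low security inputs, hence the loop-free boolean programs produced by the MAJSAT reduction) are exactly the details the paper leaves implicit in its one-line proof.
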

\begin{theorem}
\label{thm:ppgecc}
  $\text{PP}\subseteq B_{\it GECC}$
\end{theorem}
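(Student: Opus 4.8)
The plan is to obtain Theorem~\ref{thm:ppgecc} as a direct corollary of Theorem~\ref{thm:ppge} together with Lemma~\ref{lem:gecc}, reusing the very same MAJSAT reduction. Recall that $B_{\it GECC} = \aset{(M,q) \mid \max_\mu {\it GE}[\mu](M) \le q}$, and that $\text{PP} \subseteq B_{\it GECC}$ asserts that every language in PP many-one reduces to $B_{\it GECC}$. Since PP is closed under complement and MAJSAT is PP-complete, it suffices to exhibit a polynomial-time reduction from MAJSAT to $B_{\it GECC}$.

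First I would recall the reduction underlying Theorem~\ref{thm:ppge}, which maps a MAJSAT instance $\phi$ to a pair $(M,q)$, where $M$ is a loop-free boolean program \emph{without low security inputs}, such that $(M,q) \in B_{\it GE}[U]$ (i.e.\ ${\it GE}[U](M) \le q$) iff $\phi$ is a yes-instance. The crucial observation is that for any program $M$ without low security inputs, the low-security-input space is a singleton, so the only distribution of the form $U \otimes \dot{\ell'}$ is $U$ itself. Lemma~\ref{lem:gecc} then gives $\max_\mu {\it GE}[\mu](M) = \max_{\ell'} {\it GE}[U\otimes\dot{\ell'}](M) = {\it GE}[U](M)$.

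Consequently, for the programs produced by the reduction, $(M,q) \in B_{\it GECC}$ iff $\max_\mu {\it GE}[\mu](M) \le q$ iff ${\it GE}[U](M) \le q$ iff $(M,q) \in B_{\it GE}[U]$. Chaining this equivalence with the correctness of the Theorem~\ref{thm:ppge} reduction, the same map $\phi \mapsto (M,q)$ is a polynomial-time reduction from MAJSAT to $B_{\it GECC}$, establishing $\text{PP} \subseteq B_{\it GECC}$.

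The step I expect to require the most care is verifying that the reduction from Theorem~\ref{thm:ppge} indeed outputs low-security-input-free programs (as asserted in Section~\ref{sec:complex}), so that the collapse $\max_\mu {\it GE}[\mu](M) = {\it GE}[U](M)$ supplied by Lemma~\ref{lem:gecc} applies verbatim. Given that, the remainder is immediate and no new calculation is needed; the novelty of $B_{\it GECC}$ over $B_{\it GE}[U]$ (namely the universal quantification over $\mu$) is precisely neutralized by Lemma~\ref{lem:gecc} in the absence of low security inputs.
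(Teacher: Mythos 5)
Your proposal is correct and takes essentially the same route as the paper: the paper's own proof of Theorem~\ref{thm:ppgecc} is exactly ``straightforward from Lemma~\ref{lem:gecc} and Theorem~\ref{thm:ppge},'' i.e., reuse the MAJSAT reduction of Theorem~\ref{thm:ppge} (whose programs are low-security-input-free) and apply Lemma~\ref{lem:gecc} to collapse $\max_\mu {\it GE}[\mu](M)$ to ${\it GE}[U](M)$. Your writeup merely makes explicit the details the paper leaves implicit, including the point that the collapse holds because the reduction's programs have no low security inputs.
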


\begin{sloppypar}
Finally, the following coNP-completeness results for $B_{\it
  BE1CC}[h,\ell]$ and $B_{\it BE2CC}$ follow from their equivalent to
non-interference and the fact that checking non-interference is
coNP-complete for loop-free boolean programs
(cf. Section~\ref{sec:complex}).
\end{sloppypar}
\begin{theorem}
\label{thm:conpbe3}
$B_{\it BE1CC}[h,\ell]$ is coNP-complete.
\end{theorem}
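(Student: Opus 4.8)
The plan is to lean entirely on Theorem~\ref{thm:be3ni}, which reduces membership in $B_{\it BE1CC}[h,\ell]$ to a purely qualitative condition: $(M,q)\in B_{\it BE1CC}[h,\ell]$ iff $M(\ell)$ is non-interferent, i.e.\ iff $M(h',\ell)=M(h'',\ell)$ for all $h',h''\in\mathbb{H}$. Crucially, by that theorem the condition is independent of both $q$ and the distinguished high input $h$, so the whole problem collapses to deciding whether the function $h'\mapsto M(h',\ell)$ is constant. I would establish coNP-membership and coNP-hardness separately against this characterization.

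For membership I would show the complement lies in NP. The complement of ``$M(\ell)$ is non-interferent'' is ``there exist $h',h''$ with $M(h',\ell)\neq M(h'',\ell)$.'' A nondeterministic machine guesses the two high inputs $h',h''$ (each of polynomial size, since the programs are loop-free boolean) together with an output variable witnessing the disagreement, evaluates $M(h',\ell)$ and $M(h'',\ell)$ in polynomial time --- e.g.\ via the weakest precondition of Figure~\ref{fig:wpsemantics} or by straight-line evaluation --- and checks that the witnessed output bits differ. Hence the complement is in NP and $B_{\it BE1CC}[h,\ell]$ is in coNP.

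For hardness I would reduce from TAUTOLOGY, which is coNP-complete. Given a boolean formula $\phi(\vect{x})$, introduce a fresh high variable $z$ and build the loop-free program with high inputs $\vect{x},z$, a single output variable $O$, and a trivial low-input space (so the fixed parameter $\ell$ is forced and $h$ is irrelevant by Theorem~\ref{thm:be3ni}):
\[
M \;\equiv\; \ttassign{O}{(z\Rightarrow\phi)}.
\]
When $\phi$ is a tautology, $z\Rightarrow\phi$ simplifies to ${\sf true}$ on every input, so $O$ is constant and $M(\ell)$ is non-interferent. When $\phi$ is not a tautology, some $\vect{x}_0$ falsifies $\phi$, and then $O$ evaluates to ${\sf false}$ at $(\vect{x}_0,z={\sf true})$ but to ${\sf true}$ at $(\vect{x}_0,z={\sf false})$, so $O$ is non-constant and $M(\ell)$ is interferent. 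Thus $M(\ell)$ is non-interferent iff $\phi$ is a tautology, and by Theorem~\ref{thm:be3ni} we get $(M,q)\in B_{\it BE1CC}[h,\ell]$ iff $\phi$ is a tautology, for any chosen $q$ (say $q=1$). The reduction is clearly polynomial-time, completing coNP-hardness.

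I do not expect a serious obstacle, since Theorem~\ref{thm:be3ni} does the heavy lifting; the only point requiring care is aligning the reduction with the \emph{fixed} parameters $h,\ell$ of $B_{\it BE1CC}[h,\ell]$. I handle this by using a program whose low-input space is trivial --- so $\ell$ is uniquely determined --- and by appealing to the $h$-independence guaranteed by Theorem~\ref{thm:be3ni}. One could instead reduce from the coNP-completeness of non-interference stated as result~(1) of Section~\ref{sec:complex}, but the direct TAUTOLOGY reduction sidesteps the mismatch between non-interference's quantification over all $\ell$ and the single fixed $\ell$ here.
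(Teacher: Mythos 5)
Your proposal is correct, and its skeleton matches the paper's: both arguments stand on Theorem~\ref{thm:be3ni}, which collapses $B_{\it BE1CC}[h,\ell]$ to the qualitative question of whether $M(\ell)$ is non-interferent. The difference is in how the coNP-completeness of that residual question is established. The paper's proof is a two-line citation: it combines Theorem~\ref{thm:be3ni} with Lemma~\ref{lem:niconp}, the previously proved fact that checking non-interference for loop-free boolean programs is coNP-complete. You instead re-prove both halves from scratch: membership via an NP algorithm for the complement (guess a disagreeing pair $h',h''$ and evaluate the straight-line program), and hardness via a direct TAUTOLOGY reduction using the program $\ttassign{O}{(z\Rightarrow\phi)}$ with a fresh high variable $z$ --- a necessary trick, since $\ttassign{O}{\phi}$ alone would also be non-interferent when $\phi$ is unsatisfiable. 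What your route buys is self-containment and, more substantively, care about a point the paper's citation glosses over: Lemma~\ref{lem:niconp} concerns non-interference quantified over \emph{all} low inputs, whereas Theorem~\ref{thm:be3ni} concerns non-interference at the single \emph{fixed} $\ell$; these coincide only when the low-input space is trivial (or one checks that the hardness instances used for Lemma~\ref{lem:niconp} have no low inputs). Your reduction builds programs with a trivial low-input space precisely so that this mismatch never arises, which makes your argument slightly more rigorous than the paper's, at the cost of redoing work the paper could cite.
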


\begin{theorem}
\label{thm:conpbe4}
$B_{\it BE2CC}$ is coNP-complete.
\end{theorem}

\section{Discussion}

\label{sec:discussion}

\subsection{Bounding the Domains}

The notion of $k$-safety property, like the notion of safety property
from where it extends, is defined over all programs regardless of
their size.  (For example, non-interference is a $2$-safety property
for all programs and unreachability is a safety property for all
programs.)  But, it is easy to show that the bounding problems would
become ``$k$-safety'' properties if we constrained and bounded the
input domains because then the size of the semantics (i.e., the
input/output pairs) of such programs would be bounded by
$|\mathbb{H}|\hspace{-0.2em}\times\hspace{-0.2em}|\mathbb{L}|$.  In
this case, the problems are at most
$|\mathbb{H}|\hspace{-0.2em}\times\hspace{-0.2em}|\mathbb{L}|$-safety.
(And the complexity theoretic hardness degenerates to a constant.)
But, like the $k$-safety bounds obtained by fixing $q$ constant
(cf. Section~\ref{sec:ksafetyconst}), these bounds are high for all
but very small domains and are unlikely to lead to a practical
verification method.  Also, because a bound on the high security input
domain puts a bound on the maximum information flow, the bounding
problems become a tautology for $q \geq c$, where $c$ is the maximum
information flow for the respective definition.

\subsection{Low Security Inputs}

\label{sec:lowsecinputs}

Recall the results from Section~\ref{sec:ksafetyconst} that, under a
constant bound, the bounding problems for both the min-entropy based
definition and the guessing-entropy based definition with the uniform
distribution are $k$-safety for programs without low security inputs,
but not for those with.  The reason for the non-$k$-safety results is
that the definitions of quantitative information flow ${\it ME}$ and
${\it GE}$ (and in fact, also ${\it SE}$) use the conditional entropy
over the low security input distribution and are parameterized by the
distribution.  This means that the quantitative information flow of a
program is averaged over the low security inputs according to the
distribution.  Therefore, by arbitrarily increasing the number of low
security inputs, given any set of traces $T$, it becomes possible to
find a program containing $T$ whose information flow is arbitrarily
close to $0$ (at least under the uniform distribution).  This appears
to be a property intrinsic to any definition of quantitative
information flow defined via conditional entropy over the low security
inputs and is parameterized by the distribution of low security
inputs.  Note that the channel-capacity-like definitions do not share
this property as it is defined to be the maximum over the
distributions.  The non-$k$-safety result for $B_{\it SE}[U]$ holds
even in the absence of low security inputs because the Shannon entropy
of a program is the average of the {\em
  surprisal}~\cite{clarkson:csf2005} of the individual observations,
and so by increasing the number of high security inputs, given any set
of traces $T$, it becomes possible to find a program containing $T$
whose information flow is arbitrarily close to $0$.  The
non-$k$-safety results for $B_{\it BE1}[\aseq{U,h}]$ and $B_{\it
  BE2}[U]$ hold for similar reasons.\footnote{They are, respectively,
  the surprisal of a particular input, and the maximum surprisal over
  all the inputs.}

\section{Related Work}

\label{sec:related}

This work continues our recent
research~\cite{DBLP:conf/csfw/yasuoka2010} on investigating the
hardness and possibilities of verifying quantitative information flow
according to the formal definitions proposed in
literature~\cite{clarkson:csf2005,denning82,clarkjcs2007,malacaria:popl2007,smith09,kopf07,DBLP:conf/sp/BackesKR09,mccamant:pldi2008,malacaria08,NMS2009,Braun:09:MFPS,DBLP:conf/csfw/KopfS10}.
Much of the previous research has focused on information theoretic
properties of the definitions and proposed approximate (i.e.,
incomplete and/or unsound) methods for checking and inferring
quantitative information flow according to such definitions.  In
contrast, this paper (along with our recent
paper~\cite{DBLP:conf/csfw/yasuoka2010}) investigates the hardness and
possibilities of precisely checking and inferring quantitative
information flow according to the definitions.

This paper has shown that the bounding problem, that is, the problem
of checking $\mathcal{X}(M) \leq q$ given a program $M$ and a positive
real $q$, is quite hard (for various quantitative information flow
definitions $\mathcal{X}$).  This is in contrast to our previous paper
that has investigated the hardness and possibilities of the comparison
problem, that is, the problem of checking $\mathcal{X}(M_1) \leq
\mathcal{X}(M_2)$ given programs $M_1$ and $M_2$.  To the best of our
knowledge, this paper is the first to investigate the hardness of the
bounding problems.  But, the hardness of quantitative information flow
inference, a harder problem, follows from the results of our previous
paper, and Backes et al.~\cite{DBLP:conf/sp/BackesKR09} and also
Heusser and Malacaria~\cite{DBLP:conf/ifip1-7/HeusserM09} have
proposed a precise inference method that utilizes self composition and
counting algorithms.  Also, independently from our work, Heusser and
Malacaria~\cite{DBLP:conf/acsac/Heusser2010} have recently applied the
self-composition method outlined in Section~\ref{sec:ksafetyconst} for
checking the channel-capacity-based quantitative information flow.

\section{Conclusion}

\label{sec:concl}

In this paper, we have formalized and proved the hardness of the
bounding problem of quantitative information flow, which is a form of
(precise) checking problem of quantitative information flow.  We have
shown that no bounding problem is a $k$-safety property for any $k$,
and therefore that it is not possible to reduce the problem to a
safety problem via self composition, at least when the quantity to
check against is unrestricted.  The result is in contrast to
non-interference and the quantitative information flow comparison
problem with universally quantified distribution, which are $2$-safety
properties.  We have also shown a complexity theoretic gap with these
problems, which are coNP-complete, by proving the PP-hardness of the
bounding problems, when restricted to loop-free boolean programs.

We have also shown that the bounding problems for some quantitative
information flow definitions become $k$-safety for different $k$'s
under certain conditions when the quantity to check against is
restricted to be a constant, highlighting interesting disparities
among the different definitions of quantitative information flow.

It is interesting to note that, as with the comparison problems, the
bounding problems become comparatively easier when the input
distribution becomes universally quantified.  That is, as our previous
work~\cite{DBLP:conf/csfw/yasuoka2010} has shown that checking if
$\forall \mu.\mathcal{Y}[\mu](M_1) \leq \mathcal{Y}[\mu](M_2)$ is
often easier than checking if $\mathcal{Y}[U](M_1) \leq
\mathcal{Y}[U](M_2)$ (for various quantitative information flow
definitions $\mathcal{Y}$), we have shown that the problem of checking
$\forall \mu.\mathcal{Y}[\mu](M) \leq q$ is often easier than the
problem of checking $\mathcal{Y}[U](M) \leq q$.

\section*{Acknowledgments}
  This work was supported by MEXT KAKENHI 23700026, 22300005, and the
  Global COE Program ``CERIES.''

\bibliographystyle{abbrv}
\bibliography{boundflow}

\appendix
\section{Proofs}
\label{appendix}

We define some abbreviations.
\begin{definition}
\label{def:distabrv}
  $\mu(x)\triangleq \mu(X=x)$
\end{definition}
We use the above notation whenever the correspondences between random variables
and their values are clear.

We define some useful abbreviations for programs having low security inputs.

\begin{definition}
  $M[\mathbb{H},\ell]=\aset{o\mid \exists h\in\mathbb{H}.
    o=M(h,\ell)}$
\end{definition}

\begin{definition}
$M(\ell)=\lambda h. M(h,\ell)$
\end{definition}

Note that $M(\ell)$ is the program $M$ restricted to the low security
input $\ell$, and that $M[\mathbb{H},\ell]$ is the set of outputs of
$M(\ell)$.

We elide the parameter $q$ from the input to the bounding problems when
it is clear from the context (e.g., when $q$ is held constant).  For
example, we write $B_{\it SE}[U](M)$ and $M \in B_{\it SE}[U]$ instead
of $B_{\it SE}[U](M,q)$ or $(M,q) \in B_{\it SE}[U]$.

We note the following properties of deterministic
programs~\cite{clark05}.
\begin{lemma}
\label{lem:detse}
Let $M$ be a program without low-security inputs, $M'$ be a program
with low-security inputs.  Then, we have
${\it SE}[\mu](M)=\mathcal{I}[\mu](O;H)=\mathcal{H}[\mu](O)$
and
${\it SE}[\mu](M')=\mathcal{I}[\mu](O;H|L) = \mathcal{H}[\mu](O|L)$
\end{lemma}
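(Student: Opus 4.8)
The plan is to unfold the definition of ${\it SE}$ and of the (conditional) mutual information, and then to exploit determinism of the program to kill the ``remaining uncertainty'' term, i.e.\ the entropy of the output given all inputs.

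First I would treat the case of a program $M$ without low-security inputs. Here there is no variable $L$, so Definition~\ref{def:se} gives directly ${\it SE}[\mu](M) = \mathcal{I}[\mu](O;H)$. Expanding the mutual information by its definition with $O$ as the first argument and $H$ as the second yields $\mathcal{I}[\mu](O;H) = \mathcal{H}[\mu](O) - \mathcal{H}[\mu](O \mid H)$, so it suffices to show that $\mathcal{H}[\mu](O \mid H) = 0$. This is the one substantive step, and it is where determinism is used: since $M$ is deterministic, for each $h \in \mathbb{H}$ the output is fixed, so $\mu(O = o \mid H = h) = 1$ when $o = M(h)$ and $0$ otherwise. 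Substituting into the definition of $\mathcal{H}[\mu](O \mid H = h)$, every summand is either $0 \cdot \log\frac{1}{0}$ (which is $0$ under the standard convention, as $\lim_{p\to 0} p\log\frac{1}{p} = 0$) or $1 \cdot \log 1 = 0$, so $\mathcal{H}[\mu](O \mid H = h) = 0$ for every $h$, and hence $\mathcal{H}[\mu](O \mid H) = \sum_h \mu(H = h)\,\mathcal{H}[\mu](O \mid H = h) = 0$. This gives $\mathcal{I}[\mu](O;H) = \mathcal{H}[\mu](O)$, as required.

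For a program $M'$ with low-security inputs the argument is identical modulo conditioning on $L$. By Definition~\ref{def:se}, ${\it SE}[\mu](M') = \mathcal{I}[\mu](O;H \mid L) = \mathcal{H}[\mu](O \mid L) - \mathcal{H}[\mu](O \mid H, L)$. Since $M'$ is deterministic, $O$ is a function of the pair $(H, L)$, so the same computation as above, now conditioned on both $H = h$ and $L = \ell$, shows $\mathcal{H}[\mu](O \mid H, L) = 0$, whence ${\it SE}[\mu](M') = \mathcal{H}[\mu](O \mid L)$.

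I do not expect a real obstacle here, as the result is essentially a direct consequence of determinism; the two points worth flagging are the convention $0\log\frac{1}{0} = 0$ that makes the determinism step go through, and the fact that the mutual information must be expanded with $O$ playing the role of the first argument, so that the term eliminated by determinism is precisely the entropy of the output given the inputs.
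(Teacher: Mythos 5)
Your proof is correct. Note that the paper does not actually prove this lemma itself --- it is stated as a known property of deterministic programs with a citation to Clark et al.~\cite{clark05} --- so there is no in-paper argument to compare against; your direct derivation is the standard one and fills that gap cleanly. The two steps you flag are exactly the right ones: the decomposition $\mathcal{I}[\mu](O;H|L) = \mathcal{H}[\mu](O|L) - \mathcal{H}[\mu](O|H,L)$ is licensed by the paper's definition of conditional mutual information (which asserts both symmetric forms), and determinism kills $\mathcal{H}[\mu](O|H,L)$ under the convention $0\log\frac{1}{0}=0$.
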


\begin{definition}
\[
\begin{array}{l}
In(\mu,X,x)=|\aset{x'\in X\mid \mu(x')\ge\mu(x)}|
\end{array}
\]
\end{definition}
Intuitively, $In(\mu,X,x)$ is the order of $x$ defined in terms of
$\mu$.

\begin{lemma}
\[
{\mathcal G}[\mu](X)=\Sigma_{1\le i\le |X|}i\mu(x_i)=\Sigma_{x\in X}In(\mu,X,x)\mu(x)
\]
\end{lemma}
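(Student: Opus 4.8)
The first equality is nothing more than unfolding the definition of the guessing entropy $\mathcal{G}[\mu](X)$, taking $m = |\mathbb{X}| = |X|$, so the real content lies in the second equality. That equality re-expresses the positionally-indexed sum $\sum_{1\le i\le|X|} i\,\mu(x_i)$ as a sum ranging over the elements of $X$ weighted by $In(\mu,X,x)$. The plan is to pass between the two indexings by identifying, for each sorted position $i$, the element $x_i$ with its rank quantity $In(\mu,X,x_i)$.

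The core observation is that, because the enumeration $x_1,\dots,x_{|X|}$ is arranged so that $\mu(x_i)\ge\mu(x_j)$ whenever $i\le j$, the set $\aset{x'\in X \mid \mu(x')\ge\mu(x_i)}$ always contains $x_1,\dots,x_i$. When the probabilities $\mu(x)$ are pairwise distinct it contains \emph{exactly} these, so that $In(\mu,X,x_i)=i$ holds verbatim. Substituting this identity term by term turns $\sum_i i\,\mu(x_i)$ into $\sum_i In(\mu,X,x_i)\,\mu(x_i)$; and since $x_1,\dots,x_{|X|}$ enumerates $X$ without repetition, the latter is precisely $\sum_{x\in X} In(\mu,X,x)\,\mu(x)$, as required. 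I would present this reindexing as the main line of the argument, since it is the mechanism that converts a sum over sorted positions into a sum over the sample space.

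The step I expect to be the main obstacle is the treatment of \emph{ties}, i.e.\ distinct elements sharing a common probability. In that situation the sorted enumeration is no longer unique, and $In(\mu,X,x_i)$ assigns to \emph{every} element of a tied block the single largest position of that block, so the clean identity $In(\mu,X,x_i)=i$ no longer holds position by position. To keep the statement well-posed I would first verify that $\mathcal{G}[\mu](X)$ is independent of how ties are broken, which follows because the contribution $\mu(x)\sum_{i} i$ of a block of fixed probability depends only on the set of positions the block occupies and not on the assignment of elements to those positions. I would then fix one valid ordering and carry out the reindexing against the convention that $In$ records the positional rank in that ordering; the delicate bookkeeping is exactly the reconciliation of the averaged positional weights within a tied block against the block's $In$ values, and this is where the argument must be spelled out with care rather than merely invoked.
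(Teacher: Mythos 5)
Your first equality and your reindexing argument for pairwise distinct probabilities are exactly the content the paper treats as ``Trivial'' (the paper offers no proof beyond that word), so up to that point you are aligned with it. The genuine problem is the tie case, which you correctly single out as the obstacle but then propose to resolve by ``reconciling the averaged positional weights within a tied block against the block's $In$ values.'' That reconciliation is not merely delicate --- it is impossible, because with the paper's literal definition $In(\mu,X,x)=|\{x'\in X\mid \mu(x')\ge\mu(x)\}|$ the second equality is \emph{false} whenever two or more elements of positive probability share a probability value. Concretely, take $X=\{a,b\}$ with $\mu(a)=\mu(b)=\tfrac12$: then $\sum_i i\,\mu(x_i)=1\cdot\tfrac12+2\cdot\tfrac12=\tfrac32$, whereas $In(\mu,X,a)=In(\mu,X,b)=2$, so $\sum_{x\in X}In(\mu,X,x)\mu(x)=2$. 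In general, a tied block of $k$ elements of probability $q$ occupying positions $p,\dots,p+k-1$ contributes $q\bigl(kp+k(k-1)/2\bigr)$ to the positional sum but $qk(p+k-1)$ to the $In$-weighted sum, an excess of $qk(k-1)/2>0$ whenever $k\ge 2$ and $q>0$. Since both sides of the disputed equality are independent of how ties are broken, no choice of ordering and no bookkeeping can close this gap; the final step of your plan would fail.

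The honest conclusion is therefore that the statement must be repaired rather than proved: either add the hypothesis that $\mu$ assigns distinct probabilities (to elements of positive probability), or replace $In$ by the positional rank in one fixed sorted enumeration (your observation that $\mathcal{G}[\mu](X)$ is tie-break independent shows the resulting sum is well defined), or by the average position of the tied block. This is not a corner case for this paper: the lemma is applied precisely to the \emph{uniform} distribution, where all elements are tied. For instance, in the proof of Lemma~\ref{lem:geu} the paper evaluates $\sum_h In(U,\mathbb{H},h)U(h)$ as $\frac{1}{n}\cdot\frac{1}{2}n(n+1)=\frac{n+1}{2}$, i.e., it tacitly reads $In$ as positional rank, while the literal $\ge$-count definition would give the value $n$. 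So your instinct about where the trouble lies is right, but the gap in your proposal is that the deferred reconciliation step does not exist; what is actually needed is a corrected definition or hypothesis, which neither you nor the paper supplies.
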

\begin{proof}
Trivial.
\end{proof}

\begin{reflemma}{\ref{lem:be}}
  Let $\mu$ be a belief, $h_\mathcal{E}$ be a high-security input,
  $\ell_\mathcal{E}$ be a low-security input.  Then, ${\it
    BE}[\langle
    \mu,h_\mathcal{E},\ell_\mathcal{E}\rangle](M)=-\log\Sigma_{h\in
    \aset{h'\mid
      M(h',\ell_\mathcal{E})=M(h_\mathcal{E},\ell_\mathcal{E})}}
  \mu(h)$.
\end{reflemma}
\begin{proof}
By definition, we have
\[
\begin{array}{l}
  {\it BE}[\langle
    \mu,h_\mathcal{E},\ell_\mathcal{E}\rangle](M) \\
  \qquad =D(\mu\rightarrow \dot{h_\mathcal{E}})-D(\mu |
  o_\mathcal{E}\rightarrow
  \dot{h_\mathcal{E}})\\
  \qquad =\sum_h
  \dot{h_\mathcal{E}}(h)\log\frac{\dot{h_\mathcal{E}}(h)}{\mu(h)} -
  \sum_h \dot{h_\mathcal{E}}(h)\log\frac{\dot{h_\mathcal{E}}(h)}{\mu |
    o_\mathcal{E}(h)}\\
  \qquad =\log\frac{1}{\mu(h_\mathcal{E})} +\log\frac{\mu(h_\mathcal{E})}{\sum_{h\in \aset{h'\mid M(h',\ell_\mathcal{E}) =
    M(h_\mathcal{E},\ell_\mathcal{E})}} \mu(h)}\\
\qquad=-\log\sum_{h\in \aset{h'\mid M(h',\ell_\mathcal{E}) =
    M(h_\mathcal{E},\ell_\mathcal{E})}} \mu(h)

\end{array}
\]

\end{proof}

\begin{reftheorem}{\ref{thm:nonint}}
Let $\mu$ be a distribution such that $\forall
h\in\mathbb{H},\ell\in\mathbb{L}.\mu(h,\ell)>0$.  Then,
\begin{itemize}
\item $M$ is non-interferent if and only if ${\it SE}[\mu](M) \leq 0$.
\item $M$ is non-interferent if and only if ${\it ME}[\mu](M) \leq 0$.
\item $M$ is non-interferent if and only if ${\it GE}[\mu](M) \leq 0$.
\item $M$ is non-interferent if and only if ${\it BE}[\aseq{\mu',h,\ell}](M)\le 0$.\footnote{Recall Definition~\ref{def:beliefqif} that $\mu'$ is a
distribution over $\mathbb{H}$ such that $\mu'(h) > 0$ for all $h \in
\mathbb{H}$.}
\item $M$ is non-interferent if and only if ${\it CC}(M) \leq 0$.  
\end{itemize}
\end{reftheorem}
\begin{proof}
  Let $\mathbb{O}=\aset{M(h,\ell)\mid h\in\mathbb{H}\wedge
  \ell\in\mathbb{L}}$.
\begin{itemize}
\item ${\it SE}$

\hspace{0.5em}
  (See~\cite{clark05}.)
\vspace{0.5em}

\item ${\it ME}$
\begin{itemize}
\item $\Rightarrow$

  Suppose $M$ is non-interferent.  By the definition, it suffices to
  show that
\[
\mathcal{V}[\mu](H|L)=\mathcal{V}[\mu](H|L,O)
\]
That is,
\[
\sum_\ell\mu(\ell)\max_h \mu(h|\ell) =
\sum_{\ell,o}\mu(\ell,o)\max_h\mu(h|\ell,o)
\]
We have for any $\ell_x$ and $o_x$ such that $\mu(\ell_x,o_x)>0$,
$\mu(\ell_x,o_x)=\mu(\ell_x)$, and 
for all $h_y$, $\ell_y$, and $o_y$ such that
$\mu(h_y,\ell_y,o_y)>0$, for any $h'_y$ and
$o'\in\mathbb{O}\setminus\aset{o_y}$, $\mu(h'_y,\ell_y,o'_y)=0$.
 Therefore, we
have
\[
\begin{array}{rcl}
  \sum_{\ell,o}\mu(\ell,o)\max_h\mu(h|\ell,o)&=&\sum_{\ell,o}\mu(\ell,o)\max_h\frac{\mu(h,\ell,o)}{\mu(\ell,o)}\\
&=&\sum_{\ell}\mu(\ell)\max_h\mu(h|\ell)
\end{array}
\]
\item $\Leftarrow$

  We prove the contraposition.  Suppose $M$ is interferent.  That is,
  there exist $h_1$, $h_2$, and $\ell'$ such that
  $M(h_1,\ell')\not=M(h_2,\ell')$.  Let $o_1=M(h_1,\ell')$ and
  $o_2=M(h_2,\ell')$.  We have
\[
  \sum_{\ell}\mu(\ell)\max_h\mu(h|\ell)=A+\max_h\mu(h,\ell')
\]
where
$A=\sum_{\ell\in\mathbb{L}\setminus\aset{\ell'}}\max_h\mu(h,\ell)$.
And,
\[
  \sum_{\ell,o}\mu(\ell,o)\max_h\mu(h|\ell,o)=B+\sum_o\max_h\mu(h,\ell',o)
\]
where
$B=\sum_{(\ell,o)\in(\mathbb{L}\setminus\aset{\ell'})\times\mathbb{O}}\max_h\mu(h,\ell,o)$.   
Trivially, we have $A\le B$ and
\[
\max_h\mu(h,\ell')<\sum_o\max_h\mu(h,\ell',o)
\]
Therefore, we have ${\it ME}[\mu](M)>0$.
\end{itemize}

\item ${\it GE}$
\begin{itemize}
\item $\Rightarrow$

  Suppose $M$ is non-interferent.  By the definition,
\[
\begin{array}{l}
  {\it GE}[\mu](M)\\
  \quad=\sum_\ell\sum_h In(\lambda h'.\mu(h',\ell),\mathbb{H},h)\mu(h,\ell)\\
  \qquad-\sum_{\ell,o}\sum_h In(\lambda h'.\mu(h',\ell,o),\mathbb{H},h)\mu(h,\ell,o)\\
\quad=\sum_\ell\sum_h In(\lambda h'.\mu(h',\ell),\mathbb{H},h)\mu(h,\ell)\\
\qquad-\sum_\ell\sum_h In(\lambda h'.\mu(h',\ell),\mathbb{H},h)\mu(h,\ell)\\
\quad=0
\end{array}
\]
since for all $h_x$, $\ell_x$, and $o_x$ such that
$\mu(h_x,\ell_x,o_x)>0$, for any $h'_x$ and
$o'\in\mathbb{O}\setminus\aset{o_x}$, $\mu(h'_x,\ell_x,o'_x)=0$.
\item $\Leftarrow$

  We prove the contraposition.  Suppose $M$ is interferent.  That is,
  there exist $h_1$, $h_2$, and $\ell'$ such that
  $M(h_1,\ell')\not=M(h_2,\ell')$.  Let $o_1=M(h_1,\ell')$ and
  $o_2=M(h_2,\ell')$.  By the definition,
\[
\begin{array}{l}
  {\it GE}[\mu](M)\\
  \quad=\sum_\ell\sum_h In(\lambda h'.\mu(h',\ell),\mathbb{H},h)\mu(h,\ell)\\
  \qquad-\sum_{\ell,o}\sum_h In(\lambda h'.\mu(h',\ell,o),\mathbb{H},h)\mu(h,\ell,o)\\
\quad=A+\sum_h In(\lambda h'.\mu(h',\ell'),\mathbb{H},h)\mu(h,\ell')\\
\qquad-B-\sum_o\sum_h In(\lambda h'.\mu(h',\ell',o),\mathbb{H},h)\mu(h,\ell',o)
\end{array}
\]
where 
\[
\begin{array}{l}
A=\sum_{\ell\in\mathbb{L}\setminus\aset{\ell'}}\sum_h
In(\lambda h'.\mu(h',\ell'),\mathbb{H},h)\mu(h,\ell')\\
B=\sum_{(\ell,o)\in(\mathbb{L}\setminus\aset{\ell'})\times\mathbb{O}}\sum_h
In(\lambda h'.\mu(h',\ell',o),\mathbb{H},h)\mu(h,\ell',o)
\end{array}
\]
Trivially, we have $A\ge B$ and
\[
\begin{array}{l}
\sum_h In(\lambda h'.\mu(h',\ell'),\mathbb{H},h)\mu(h,\ell')\\
\qquad>\sum_o\sum_h In(\lambda h'.\mu(h',\ell',o),\mathbb{H},h)\mu(h,\ell',o)
\end{array}
\]
Therefore, we have ${\it GE}[\mu](M)>0$.
\end{itemize}

\item ${\it BE}$
\begin{itemize}
\item $\Rightarrow$

Suppose $M$ is non-interferent.  By Lemma~\ref{lem:be}, for any
$\mu$, $h$, and $\ell$,
\[
  {\it BE}[\aseq{\mu,h,\ell}](M)=-\log\Sigma_{h'\in
    \aset{h''\mid
      M(h'',\ell)=M(h,\ell)}}\mu(h')=0
\]
\item $\Leftarrow$

  We prove the contraposition. Suppose $M$ is interferent.  That is,
  there exist $h_1$, $h_2$, and $\ell'$ such that
  $M(h_1,\ell')\not=M(h_2,\ell')$.  Let $\mu'$ be a distribution such
  that for any $h'$, $\mu'(h')>0$.  Then, by Lemma~\ref{lem:be}, we
  have for any $h$,
\[
  {\it BE}[\aseq{\mu',h,\ell'}](M) =-\log\Sigma_{h'\in
    \aset{h''\mid
      M(h'',\ell')=M(h,\ell')}}\mu'(h')
  >0
\]
\end{itemize}

\item ${\it CC}$
\begin{itemize}
\item $\Rightarrow$

  Suppose $M$ is non-interferent.  By Lemma~\ref{lem:detse}, for any
  $\mu$,
\[
\begin{array}{rcl}
  {\it SE}[\mu](M)&=&\mathcal{H}[\mu](O|L)\\
  &=&\sum_o\sum_\ell \mu(o,\ell)\log\frac{\mu(\ell)}{\mu(o,\ell)}\\
  &=&0
\end{array}
\]
since $\mu(o,\ell)=\mu(\ell)$.  Therefore, we have $\forall\mu.{\it
  SE}[\mu](M)=0$.  It follows that ${\it CC}(M)=0$.
\item $\Leftarrow$

  We prove the contraposition. Suppose $M$ is interferent.  That is, there
  exist $h_1$, $h_2$, and $\ell'$ such that
  $M(h_1,\ell')\not=M(h_2,\ell')$.  Let $o_1=M(h_1,\ell')$, and
  $o_2=M(h_2,\ell')$.  Then, there exist $\mu'$ such that 
\[
\begin{array}{rcl}
  {\it SE}[\mu'](M)&=&\mathcal{H}[\mu'](O|L)\\
  &\ge&\mu'(o_1,\ell')\log\frac{\mu'(\ell')}{\mu'(o_1,\ell')}+\mu'(o_2,\ell')\log\frac{\mu'(\ell')}{\mu'(o_2,\ell')}\\
  &>&0
\end{array}
\]
And, we have ${\it SE}[\mu'](M)\le {\it CC}(M)$.
\end{itemize}

\end{itemize}
\end{proof}

We note the following equivalence of {\it CC} and {\it ME}[U] for
programs without low security inputs~\cite{smith09}.
\begin{lemma}
\label{lem:ccme}
Let $M$ be a program without low security input.  Then, 
${\it CC}(M) = {\it  ME}[U](M)$.

\end{lemma}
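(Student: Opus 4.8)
The plan is to prove that both sides equal $\log n$, where $n$ denotes the number of distinct outputs of $M$, that is, $n = |\aset{M(h) \mid h \in \mathbb{H}}|$.

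First I would treat the channel-capacity side. Since $M$ has no low security input, Lemma~\ref{lem:detse} gives ${\it SE}[\mu](M) = \mathcal{H}[\mu](O)$, so ${\it CC}(M) = \max_\mu \mathcal{H}[\mu](O)$. Because $O$ ranges over at most $n$ values, the standard maximum-entropy bound gives $\mathcal{H}[\mu](O) \le \log n$, with equality exactly when $O$ is uniform over the $n$ outputs. To see that this bound is attained, I would pick for each output $o$ a single representative preimage $h_o$ with $M(h_o) = o$, and set $\mu(h_o) = 1/n$ with $\mu(h) = 0$ for all other $h$. Under this $\mu$ we get $\mu(O = o) = 1/n$ for each of the $n$ outputs, hence $\mathcal{H}[\mu](O) = \log n$, and therefore ${\it CC}(M) = \log n$.

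Next I would compute ${\it ME}[U](M) = \mathcal{H}_\infty[U](H) - \mathcal{H}_\infty[U](H|O)$ by unfolding the definitions. Since $U$ is uniform, $\mathcal{V}[U](H) = 1/|\mathbb{H}|$, so $\mathcal{H}_\infty[U](H) = \log|\mathbb{H}|$. For the conditional term, under $U$ we have $U(O = o) = |\aset{h \mid M(h) = o}|/|\mathbb{H}|$, and the distribution of $H$ given $O = o$ is uniform over $\aset{h \mid M(h) = o}$, so $\max_h U(H = h \mid O = o) = 1/|\aset{h \mid M(h) = o}|$. Substituting into $\mathcal{V}[U](H|O) = \sum_o U(O = o)\max_h U(H = h \mid O = o)$, the preimage-size factors cancel and each of the $n$ output terms contributes $1/|\mathbb{H}|$, yielding $\mathcal{V}[U](H|O) = n/|\mathbb{H}|$ and hence $\mathcal{H}_\infty[U](H|O) = \log(|\mathbb{H}|/n)$. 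Subtracting, ${\it ME}[U](M) = \log|\mathbb{H}| - \log(|\mathbb{H}|/n) = \log n = {\it CC}(M)$, as desired.

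The main obstacle is the channel-capacity direction, and specifically the \emph{achievability} half: confirming that $\max_\mu \mathcal{H}[\mu](O)$ actually equals $\log n$ rather than being merely bounded above by it. The inequality $\mathcal{H}[\mu](O) \le \log n$ is routine, but the equality hinges on the fact that every output class is nonempty (so a representative $h_o$ exists) and that we are free to place mass on exactly one representative per class, forcing $O$ to be exactly uniform. Once this is in hand, the min-entropy side is a direct calculation whose only subtlety is noticing the cancellation of the preimage sizes.
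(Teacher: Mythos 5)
Your proof is correct. Note, however, that the paper contains no proof of this lemma to compare against: it is stated as a known result imported from Smith~\cite{smith09}, with no in-paper argument. Your derivation is the standard one and fits cleanly with the rest of the paper's machinery: the channel-capacity half (${\it CC}(M)=\log n$, where $n$ is the number of distinct outputs) is exactly the low-security-input-free instance of Lemma~\ref{lem:ccloglow}, and the min-entropy half --- where the preimage sizes cancel to give $\mathcal{V}[U](H|O)=n/|\mathbb{H}|$, hence ${\it ME}[U](M)=\log n$ --- is Smith's original calculation. Both halves are sound: invoking Lemma~\ref{lem:detse} to rewrite ${\it CC}(M)$ as $\max_\mu\mathcal{H}[\mu](O)$ is legitimate for deterministic programs without low security inputs, and the achievability step (mass $1/n$ on a single representative $h_o$ of each output class, making $O$ exactly uniform) correctly closes the gap between the upper bound $\log n$ and the maximum. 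So your proposal supplies a self-contained proof of a fact the paper merely cites.
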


\begin{reftheorem}{\ref{thm:senk2}}
Neither $B_{\it SE}[U]$, $B_{\it ME}[U]$, $B_{\it GE}[U]$, nor $B_{\it CC}$ is a
k-safety property for any k such that $k > 0$.
\end{reftheorem}
\begin{proof}
\noindent
\begin{itemize}
\item  $B_{\it SE}[U]$ is not a k-safety problem for any k such that
  $k>0$.

Trivial by Theorem~\ref{thm:senk}.

\item  $B_{\it ME}[U]$ is not a k-safety property for any k such that
  $k>0$.

Trivial by Theorem~\ref{thm:mek}.

\item $B_{\it GE}[U]$ is not a k-safety property for any k such that
  $k>0$.

Trivial by Theorem~\ref{thm:gek}.

\item $B_{\it CC}$ is not a k-safety property for any k such that $k>0$.

  Trivial from Lemma~\ref{lem:ccme} and the fact that $B_{\it ME}[U]$
  is not a k-safety property for any k.

\end{itemize}
\end{proof}

Malacaria and Chen~\cite{malacaria08} have proved the following result
relating the channel-capacity based quantitative information flow with
the number of outputs.  
\begin{lemma}
\label{lem:ccloglow}
Let $M$ be a program (with low security input).  Then,
\[
\begin{array}{l}
{\it CC}(M) = 
\max_{\ell\in\mathbb{L}}\log |M[\mathbb{H},\ell]|
\end{array}
\]
\end{lemma}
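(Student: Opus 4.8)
The plan is to reduce channel capacity to the maximization of a conditional \emph{output} entropy and then bound that entropy by the logarithm of the number of reachable outputs. First I would invoke Lemma~\ref{lem:detse}: since $M$ is deterministic, $O$ is a function of $(H,L)$, so for every distribution $\mu$ we have $\mathcal{I}[\mu](O;H|L)=\mathcal{H}[\mu](O|L)$. Hence ${\it CC}(M)=\max_\mu \mathcal{H}[\mu](O|L)$, and it suffices to show that this maximum equals $\max_{\ell}\log|M[\mathbb{H},\ell]|$.

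For the upper bound I would expand $\mathcal{H}[\mu](O|L)=\sum_\ell \mu(L=\ell)\,\mathcal{H}[\mu](O|L=\ell)$. For each fixed $\ell$, the variable $O$ conditioned on $L=\ell$ ranges over the set $M[\mathbb{H},\ell]$, so by the standard fact that the Shannon entropy of any distribution supported on a set of size $n$ is at most $\log n$, we get $\mathcal{H}[\mu](O|L=\ell)\le\log|M[\mathbb{H},\ell]|$. Averaging, $\mathcal{H}[\mu](O|L)\le \sum_\ell \mu(L=\ell)\log|M[\mathbb{H},\ell]|\le \max_\ell \log|M[\mathbb{H},\ell]|$, the last step because a convex combination never exceeds the maximum of its terms. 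As this holds for every $\mu$, we obtain ${\it CC}(M)\le\max_\ell\log|M[\mathbb{H},\ell]|$.

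For the matching lower bound I would exhibit a witnessing distribution. Let $\ell^*=\argmax_{\ell}\log|M[\mathbb{H},\ell]|$. For each output $o\in M[\mathbb{H},\ell^*]$ choose one preimage $h_o$ with $M(h_o,\ell^*)=o$, and define $\mu$ to place mass $1/|M[\mathbb{H},\ell^*]|$ on each pair $(h_o,\ell^*)$ and zero elsewhere. Then $\mu(L=\ell^*)=1$ and $O$ is uniform over $M[\mathbb{H},\ell^*]$, so $\mathcal{H}[\mu](O|L)=\mathcal{H}[\mu](O|L=\ell^*)=\log|M[\mathbb{H},\ell^*]|=\max_\ell\log|M[\mathbb{H},\ell]|$. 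Combining the two directions yields the claimed equality.

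The main obstacle is the achievability step: one must verify that a uniform distribution over the output set is genuinely realizable by a distribution over the inputs, which is exactly where the surjectivity of $M(\ell^*)$ onto $M[\mathbb{H},\ell^*]$ (plus determinism) is used. A subtlety worth flagging is support: Theorem~\ref{thm:nonint} imposed $\mu(h,\ell)>0$, but the definition of ${\it CC}$ itself places no positivity constraint, so the concentrated witness above is admissible and the maximum is attained. If a strictly positive $\mu$ were instead demanded, one would take a sequence of distributions concentrating on $\ell^*$ with output laws tending to uniform and appeal to continuity of $\mathcal{H}$ to recover the same supremum.
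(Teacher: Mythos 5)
Your proof is correct, but it is worth noting that the paper itself does not prove this lemma at all: it is imported as a known result of Malacaria and Chen~\cite{malacaria08}, stated without proof in the appendix. So your proposal is not a variant of the paper's argument but a self-contained derivation of the cited fact. Your route is the standard one and every step checks out: determinism gives ${\it CC}(M)=\max_\mu \mathcal{H}[\mu](O|L)$ (this is exactly the paper's Lemma~\ref{lem:detse}); conditioned on $L=\ell$ the output is supported on $M[\mathbb{H},\ell]$, so each term $\mathcal{H}[\mu](O|L=\ell)$ is at most $\log|M[\mathbb{H},\ell]|$ and the convex combination is at most the maximum; and the bound is attained by the distribution that selects one preimage $h_o$ per output of the maximizing low input $\ell^*$ and weights them uniformly, which makes $O$ exactly uniform on $M[\mathbb{H},\ell^*]$. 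Your closing remark is also well taken: the definition of ${\it CC}$ quantifies over \emph{all} joint distributions with no positivity requirement (unlike Theorem~\ref{thm:nonint}), so the degenerate witness concentrated on $\ell^*$ is admissible and the maximum is genuinely attained; had positivity been required, your continuity/limiting argument would recover the same supremum. What your approach buys is that the paper's subsequent results resting on this lemma (e.g., Theorem~\ref{thm:cck} and the self-composition check for $B_{\it CC}$) become self-contained rather than dependent on an external reference.
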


\begin{reftheorem}{\ref{thm:cck}}
  Let $q$ be a constant.  Then, $B_{\it CC}$ is
  $\lfloor2^q\rfloor+1$-safety, but it is not $k$-safety for any $k
  \leq \lfloor2^q\rfloor$.
\end{reftheorem}
\begin{proof}
  We prove that $B_{\it CC}$ is $\lfloor 2^q\rfloor +1$-safety.  Let
  $M$ be a program such that $M\not\in B_{\it CC}$.  By
  Lemma~\ref{lem:ccloglow}, it must be the case that there exists
  $\ell$ such that $|M[\mathbb{H},\ell]|\ge\lfloor 2^q\rfloor +1$.
  Then, there exists $T \subseteq \sembrack{M}$ such that $|T| \leq
  \lfloor2^q\rfloor+1$, $\ran(T) \geq \lfloor2^q\rfloor+1$, and for
  all $((h,\ell'),o) \in T$, $\ell' = \ell$.  Then, by
  Lemma~\ref{lem:ccloglow}, it follows that for any program $M'$ such
  that $T \subseteq \sembrack{M'}$, $M' \not\in B_{\it CC}$.
  Therefore, $B_{\it CC}$ is a $\lfloor 2^q\rfloor +1$-safety
  property.

  Finally, we prove that $B_{\it CC}[U]$ is not $k$-safety for any $k
  \le \lfloor 2^q\rfloor$.  Let $k \leq \lfloor2^q\rfloor$. For a
  contradiction, suppose $B_{\it CC}$ is a $k$-safety property.  Let
  $M$ be a program such that $M\not\in B_{\it CC}$.  Then, there
  exists $T$ such that $|T|\le k$ and $T\subseteq\sembrack{M}$, and
  for any $M'$ such that $T\subseteq \sembrack{M'}$, $(M',q)\not\in
  B_{\it CC}$.  Let $T=\aset{(h_1,o_1),\dots,(h_i,o_i)}$.  Let
  $\bar{M}$ be a program such that $\sembrack{\bar{M}} = T$.  More
  formally, let $\bar{M}$ be the following program.
\[
\bar{M}(h_1)=o_1, \bar{M}(h_2)=o_2, \dots, \bar{M}(h_i)=o_i
\]
Then, we have
\[
{\it CC}(\bar{M})=\log |\aset{o_1,o_2,\dots,o_i}|\le \log k\le q
\]
It follows that $(\bar{M},q)\in {\it CC}$, but
$T\subseteq\sembrack{\bar{M}}$.  Therefore, this leads to a
contradiction.
\end{proof}

\begin{reftheorem}{\ref{thm:mek}}
  Let $q$ be a constant, and suppose $B_{\it ME}[U]$ only takes
  programs without low security inputs.  Then, $B_{\it ME}[U]$ is
  $\lfloor 2^q\rfloor +1$-safety, but it is not $k$-safety for any $k
  \leq \lfloor2^q\rfloor$.
\end{reftheorem}
\begin{proof}
  Straightforward by Theorem~\ref{thm:cck} and Lemma~\ref{lem:ccme}.
\end{proof}

\begin{lemma}
  Let $M$ be a program without low security inputs.  Then, we have
  ${\it GE}[U](M)=\frac{n}{2}-\frac{1}{2n}\sum_o|\mathbb{H}_o|^2$
  where $n$ is the number of inputs, and $\mathbb{H}_o=\aset{h\mid
    o=M(h)}$.
\label{lem:geu}
\end{lemma}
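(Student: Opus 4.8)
The plan is to reduce everything to a direct computation from the definition of guessing entropy, exploiting the fact that under the uniform distribution all the relevant conditional distributions are themselves uniform over their support. Since $M$ has no low security inputs, the definition of ${\it GE}$ collapses to ${\it GE}[U](M) = \mathcal{G}[U](H) - \mathcal{G}[U](H|O)$, so I would compute these two terms separately and then subtract.

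First I would handle the prior term. Under $U$ every one of the $n$ inputs has probability $1/n$, so the ordering required by the definition of $\mathcal{G}$ is irrelevant and $\mathcal{G}[U](H) = \sum_{i=1}^n i \cdot \frac{1}{n} = \frac{n+1}{2}$. Next I would compute the posterior term. The key observation is that for a fixed output $o$, the conditional probability $U(H=h \mid O=o)$ equals $\frac{1}{|\mathbb{H}_o|}$ when $h \in \mathbb{H}_o$ (i.e.\ $M(h)=o$) and $0$ otherwise; this is because $U(H=h, O=o) = 1/n$ exactly for $h \in \mathbb{H}_o$, while $U(O=o) = |\mathbb{H}_o|/n$. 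Hence the conditional distribution is uniform over $\mathbb{H}_o$, and the sorted sum defining $\mathcal{G}[U](H \mid O=o)$ places the $|\mathbb{H}_o|$ support elements first (in any order, since they are equiprobable) followed by zero-probability elements, giving $\mathcal{G}[U](H \mid O=o) = \sum_{i=1}^{|\mathbb{H}_o|} i \cdot \frac{1}{|\mathbb{H}_o|} = \frac{|\mathbb{H}_o|+1}{2}$.

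Averaging over $o$ with weights $U(O=o) = |\mathbb{H}_o|/n$ then yields $\mathcal{G}[U](H \mid O) = \frac{1}{2n}\sum_o |\mathbb{H}_o|(|\mathbb{H}_o|+1)$. Finally I would invoke the partition identity $\sum_o |\mathbb{H}_o| = n$, which holds because $M$ is deterministic so the sets $\mathbb{H}_o$ partition $\mathbb{H}$, to rewrite the posterior term as $\frac{1}{2n}\sum_o|\mathbb{H}_o|^2 + \frac{1}{2}$. Subtracting from $\frac{n+1}{2}$ gives exactly $\frac{n}{2} - \frac{1}{2n}\sum_o|\mathbb{H}_o|^2$.

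The whole argument is essentially arithmetic, so there is no deep obstacle; the one point I would take care to state explicitly is the handling of the ordering in the conditional guessing entropy, namely that the zero-probability inputs (those outside $\mathbb{H}_o$) may be pushed to the tail of the sorted sequence without affecting the value, so that the sum truncates cleanly at $|\mathbb{H}_o|$. Everything else follows by plugging the uniform conditional probabilities into the definition and simplifying.
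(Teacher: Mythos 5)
Your proposal is correct and follows essentially the same route as the paper's proof: both expand ${\it GE}[U](M)=\mathcal{G}[U](H)-\mathcal{G}[U](H|O)$, compute $\mathcal{G}[U](H)=\frac{n+1}{2}$ and $\mathcal{G}[U](H|O)=\frac{1}{2n}\sum_o|\mathbb{H}_o|(|\mathbb{H}_o|+1)$ using the uniformity of the conditional distributions, and simplify via $\sum_o|\mathbb{H}_o|=n$. Your explicit remark about pushing zero-probability inputs to the tail of the sorted sequence is a detail the paper's terser computation leaves implicit, but the argument is the same.
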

\begin{proof}
By the definition, we have
\[
\begin{array}{rcl}
  {\it GE}[U](M)&=&\mathcal{G}[U](H)-\mathcal{G}[U](H|O)\\
  &=&\sum_h In(U,\mathbb{H},h)U(h)\\
&&\qquad-\sum_o U(o)\sum_h In(\lambda h'.U(h'|o),\mathbb{H}_o,h) U(h|o)\\
  &=&\frac{1}{n}\frac{1}{2}n(n+1)-\sum_o \frac{|\mathbb{H}_o|}{n}\frac{1}{2}\frac{1}{|\mathbb{H}_o|}|\mathbb{H}_o|(|\mathbb{H}_o|+1)\\
  &=&\frac{n}{2}-\frac{1}{2n}\sum_o |\mathbb{H}_o|^2
\end{array}
\]
\end{proof}

\begin{lemma}
  Let $M$ and $M'$ be low-security input free programs such that
  $\sembrack{M'}=\sembrack{M}\cup\aset{(h,o)}$ and
  $h\not\in\dom(\sembrack{M})$.  Then, we have ${\it GE}[U](M)\le{\it
  GE}[U](M')$.
\label{lem:gemono2}
\end{lemma}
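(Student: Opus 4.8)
The plan is to express both guessing-entropy quantities in closed form via Lemma~\ref{lem:geu} and then reduce the desired inequality to an elementary fact about squares.

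First I would set $n = |\dom(\sembrack{M})|$, so that $M'$ has $n+1$ inputs, and abbreviate $S = \sum_o |\mathbb{H}_o|^2$ for the sum appearing in Lemma~\ref{lem:geu} applied to $M$. Writing $o$ for the output of the newly added pair $(h,o)$, let $m = |\mathbb{H}_o|$ be the number of inputs of $M$ already mapped to $o$ (taking $m = 0$ when $o$ does not occur in the range of $M$). The key observation is that inserting the single fresh input $h \mapsto o$ alters the output partition only in the class of $o$, whose size rises from $m$ to $m+1$; hence the analogous sum for $M'$ is $S' = S + (m+1)^2 - m^2 = S + 2m + 1$. This identity holds uniformly, covering both the case in which $o$ is a brand-new output and the case in which it already appears, which lets me avoid a genuine case split.

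Next I would substitute into Lemma~\ref{lem:geu}, using ${\it GE}[U](M) = \frac{n}{2} - \frac{S}{2n}$ and ${\it GE}[U](M') = \frac{n+1}{2} - \frac{S + 2m + 1}{2(n+1)}$, and form their difference. Multiplying through by the positive quantity $2n(n+1)$, the target inequality ${\it GE}[U](M) \le {\it GE}[U](M')$ becomes equivalent to $n^2 - 2nm + S \ge 0$. (The degenerate case $n = 0$, where Lemma~\ref{lem:geu} does not literally apply, is immediate since then both quantities are $0$.)

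Finally I would discharge this inequality by noting that $m^2$ is one of the nonnegative summands of $S$, so $S \ge m^2$, and that $m^2 \ge 2nm - n^2$ because $(n - m)^2 \ge 0$. Chaining these gives $S \ge 2nm - n^2$, i.e.\ $n^2 - 2nm + S \ge 0$, which is exactly what is needed. I expect the only mildly delicate part to be the bookkeeping in the difference computation together with the uniform handling of the $m = 0$ case; once the reduction to $n^2 - 2nm + S \ge 0$ is in hand, the conclusion is just the observation that $(n-m)^2 \ge 0$.
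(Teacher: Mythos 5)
Your proof is correct and takes essentially the same route as the paper's: both apply Lemma~\ref{lem:geu} to $M$ and $M'$, form the difference, and clear denominators by $2n(n+1)$. Your reduced inequality $n^2 - 2nm + S \ge 0$ is exactly the paper's $(n-|\mathbb{H}_o|)^2 + B \ge 0$ after writing $S = B + m^2$, so the final step via $(n-m)^2 \ge 0$ is the same observation with slightly different bookkeeping (and your explicit treatment of the $n=0$ and $m=0$ edge cases is a minor, harmless refinement).
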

\begin{proof}
  We prove ${\it GE}[U](M')-{\it GE}[U](M)\ge 0$.  Let $n =
  |\sembrack{M}|$, $\mathbb{O}=\ran(\sembrack{M})$, $\mathbb{H} =
  \dom(M)$, and $\mathbb{H}_o = \aset{h\in\mathbb{H}\mid o= M(h)}$.

  By Lemma~\ref{lem:geu}, we have
\[
\begin{array}{l}
  {\it GE}[U](M')-{\it GE}[U](M)\\
\hspace{4em}=\frac{n+1}{2}-\frac{1}{2(n+1)}(B + (|\mathbb{H}_{o}|+1)^2)
  -\frac{n}{2}+\frac{1}{2n}(B+ |\mathbb{H}_{o}|^2)\\
\hspace{4em}  =\frac{1}{2n(n+1)}((n-|\mathbb{H}_{o}|)^2+B)\ge 0
\end{array}
\]
where $B=\sum_{o'\in \mathbb O\setminus \aset{o}} |\mathbb{H}_{o'}|^2$
 and $\mathbb{H}_{o'}=\aset{h\mid o'=M(h)}$.
\end{proof}

\begin{lemma}
Let $q\ge\frac{1}{2}$.  Let $M$ be a program without low security
inputs such that ${\it GE}[U](M)>q$ and $\forall
M'.\sembrack{M'}\subsetneq\sembrack{M}\Rightarrow {\it GE}[U](M')\le
q$.  Then, it must be the case that $|\sembrack{M}|\le\lfloor
\frac{(\lfloor q\rfloor +1)^2}{\lfloor q\rfloor +1 -q}\rfloor +1$.
\label{lem:gemax}
\end{lemma}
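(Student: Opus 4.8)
The plan is to reduce everything to the closed form of Lemma~\ref{lem:geu}. Write $n=|\sembrack{M}|$, $n_o=|\mathbb{H}_o|$ and $S=\sum_o n_o^2$, and let $P=\tfrac{n^2-S}{2}=\sum_{o<o'}n_o n_{o'}$ be the number of input pairs that $M$ separates; then Lemma~\ref{lem:geu} reads ${\it GE}[U](M)=P/n$. Two inequalities drive the proof. First, ${\it GE}[U](M)>q$ is just $P>qn$. Second, let $n_1=\max_o n_o$. Deleting one input lying in a largest preimage group lowers $P$ by exactly $n-n_1$ (a deleted input of group $o$ destroys its $n-n_o$ separated pairs) and leaves $n-1$ inputs, so the resulting proper subprogram $M'$ has ${\it GE}[U](M')=\tfrac{P-(n-n_1)}{n-1}$. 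Since $\sembrack{M'}\subsetneq\sembrack{M}$, minimality gives ${\it GE}[U](M')\le q$, i.e.\ $P\le q(n-1)+(n-n_1)$. We delete from a \emph{largest} group precisely because that subtracts the least from $P$, giving the strongest instance of the minimality hypothesis.

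Set $r=n-n_1$. Combining $P>qn$ with $P\le q(n-1)+r$ gives $qn<q(n-1)+r$, hence $r>q$; as $r$ is an integer this is exactly $r\ge k$ with $k=\lfloor q\rfloor+1$. I then use two elementary lower bounds on $P$: since every $n_o\le n_1$ we have $S\le n_1 n$, so $P\ge\tfrac{n(n-n_1)}{2}=\tfrac{nr}{2}$; and counting only the pairs between the largest group and the remaining $r$ inputs gives $P\ge n_1 r$. Feeding $P\ge n_1 r$ into $P\le q(n-1)+r$ and writing $n=n_1+r$ yields $n_1(r-q)\le r(q+1)-q$, whence $n=n_1+r\le\psi(r):=\tfrac{r^2+r-q}{r-q}$. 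Feeding $P\ge\tfrac{nr}{2}$ into the same inequality yields $r(n-2)\le 2q(n-1)$, so if $r\ge\lfloor 2q\rfloor+2$ (then $r>2q+1$) we get $n\le 2+\tfrac{2q}{r-2q}<2q+2$; but $\lfloor 2q\rfloor+2\le r\le n-1<2q+1$ is impossible. Hence $k\le r\le\lfloor 2q\rfloor+1$.

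It remains to maximise $\psi(r)$ over this admissible range. Writing $\psi(r)=(r-q)+(2q+1)+\tfrac{q^2}{r-q}$ shows $\psi$ is convex on $(q,\infty)$ with minimum at $r=2q$, and since $q\ge\tfrac12$ forces $k\le 2q\le\lfloor 2q\rfloor+1$, the interval $[k,\lfloor 2q\rfloor+1]$ brackets that minimum. Thus $\max_r\psi(r)$ is attained at an endpoint, the governing value being $\psi(k)=\tfrac{k^2}{k-q}+1$; together with the integrality of $n$ this gives $n\le\lfloor\tfrac{k^2}{k-q}\rfloor+1$. The shape of one largest group together with a single further group of size $k$ attains $\psi(k)$, which pins down where the bound is tight.

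I expect the last step to be the main obstacle: one must confirm that the left endpoint $r=k$ governs, and in the borderline regime (e.g.\ $q$ just above $\tfrac12$) the right endpoint $r=\lfloor 2q\rfloor+1$ can make $\psi$ marginally exceed $\psi(k)$, so the argument must lean on rounding $n$ down to an integer to recover the stated bound. This endpoint-and-integrality bookkeeping is the only delicate part; the two driving inequalities of the first paragraph and the resulting bounds on $r$ are routine.
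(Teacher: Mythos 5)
Your argument is correct in substance and takes a genuinely different route from the paper's. The paper's proof first reduces to programs with exactly two outputs: from deleting one trace with output $o'$ it derives $\frac{B+i^2}{2(i-q)} < n \le \frac{B+i^2}{2(i-q)}+1$ (with $i=\sum_{o\ne o'}|\mathbb{H}_o|$ and $B=\sum_{o\ne o'}|\mathbb{H}_o|^2$), observes that $B\le i^2$ makes the two-output case the worst one, and then analyses $\frac{i^2}{i-q}+1$ and $\frac{j^2}{j-q}+1$ for the two group sizes $i+j=n$, asserting ``by elementary real analysis'' that the maximum occurs at $i=\lfloor q\rfloor+1$ or $j=\lfloor q\rfloor+1$. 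You avoid the two-output reduction altogether: the pair count $P$, a single application of minimality (deletion from a \emph{largest} group, which is indeed the weakest deletion and hence the strongest constraint), and the two lower bounds $P\ge n_1r$ and $P\ge nr/2$ yield $n\le\psi(r)=\frac{r^2}{r-q}+1$ and $\lfloor q\rfloor+1\le r\le\lfloor 2q\rfloor+1$ directly, with $r=n-n_1$. All of these steps check out, including the convexity of $\psi$ and the fact that $q\ge\frac12$ places the minimiser $r=2q$ inside your interval; note that your $\psi(r)$ is exactly the paper's $\frac{i^2}{i-q}+1$, so the two proofs converge on the same final optimisation, yours reached without the reduction step.

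The step you flagged is a real obligation, but it closes exactly as you predicted, and it should be written out (it is also the step the paper waves off). Let $k=\lfloor q\rfloor+1$, $R=\lfloor 2q\rfloor+1$, $m=\lfloor q\rfloor$, $f=q-m$. Comparing $\frac{R^2}{R-q}$ with $\frac{k^2}{k-q}$ shows $\psi(R)\le\psi(k)$ in all cases except two regimes: $m=0$ with $\frac12\le q<\frac23$, and $m=1$ with $f<\frac15$. In the first regime, $\frac{4}{2-q}\in[\frac83,3)$ and $\frac{1}{1-q}\in[2,3)$ both have floor $2$; in the second, $\frac{9}{2-f}\in[\frac92,5)$ and $\frac{4}{1-f}\in[4,5)$ both have floor $4$. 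Hence $\lfloor\psi(R)\rfloor\le\lfloor\psi(k)\rfloor$ always, and by convexity $n\le\lfloor\psi(r)\rfloor\le\max\bigl(\lfloor\psi(k)\rfloor,\lfloor\psi(R)\rfloor\bigr)=\lfloor\frac{k^2}{k-q}\rfloor+1$, which is the lemma. With this finite, elementary endpoint check appended, your proof is complete; without it, the claim that ``the governing value'' is $\psi(k)$ is unjustified precisely in the borderline regimes you identified.
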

\begin{proof}
  Let $n$ be the integer such that $n=|\sembrack{M}|$.  If $M$ returns
  only one output, we have ${\it GE}[U](M)=0$.  Therefore, $M$ must
  have more than 1 output as ${\it GE}[U](M)>q$.  By
  Lemma~\ref{lem:geu}, we have for any $o'$
\[
\begin{array}{rcl}
{\it GE}[U](M)&=&\frac{n}{2}-\frac{1}{2n}(B+(n-i)^2)\\
&=&i-\frac{1}{2n}(B+i^2)
\end{array}
\]
where $i=\sum_{o\in\mathbb{O}\setminus\aset{o'}}|\mathbb{H}_o|$ and
$B=\sum_{o\in\mathbb{O}\setminus\aset{o'}}|\mathbb{H}_o|^2$.  Because
${\it GE}[U](M)>q$, we have $i>q$.  Then, we have
\[
\begin{array}{rcl}
{\it GE}[U](M)> q&\;\textrm{iff}\;&i-\frac{B+i^2}{2n}> q\\
&\;\textrm{iff}\;&n>\frac{B+i^2}{2(i-q)}
\end{array}
\]
By the definition of $M$, we have $\forall
M'.\sembrack{M'}\subsetneq\sembrack{M}\Rightarrow {\it GE}[U](M')\le
q$.  Let $\sembrack{\bar{M}}=\sembrack{M}\setminus \aset{(h',o')}$
where $M(h')=o'$.  Then, we have
\[
\begin{array}{rcl}
{\it GE}[U](\bar{M})\le q\;&\textrm{iff}&\;i-\frac{B+i^2}{2(n-1)}\le q\\
&\textrm{iff}&\;n\le\frac{B+i^2}{2(i-q)}+1
\end{array}
\]
Hence, we have
\[
\frac{B+i^2}{2(i-q)}< n\le\frac{B+i^2}{2(i-q)}+1
\]
Because $B=\sum_{o\in\mathbb{O}\setminus\aset{o'}}|\mathbb{H}_o|^2$
and $i=\sum_{o\in\mathbb{O}\setminus\aset{o'}}|\mathbb{H}_o|$, the
largest $n$ occurs when $B=i^2$.  That is, when $M$ has exactly two
outputs.  Therefore, it suffices to prove the lemma for just such
$M$'s.

Now, we prove $|\sembrack{M}|\le\lfloor \frac{(\lfloor q\rfloor
  +1)^2}{\lfloor q\rfloor +1 -q}\rfloor +1$.  Recall that
$i=\sum_{o\in\mathbb{O}\setminus\aset{o'}}|\mathbb{H}_o|$.  Let
$j=n-i$.  We have
\[
\begin{array}{rcl}
{\it GE}[U](M)&=&i-\frac{1}{2n}(i^2+i^2)\\
&=&j-\frac{j^2}{n}\\
&>&q
\end{array}
\]
This means that $j>q$.  Recall that
$\sembrack{\bar{M}}=\sembrack{M}\setminus \aset{(h',o')}$ where
$M(h')=o'$.  Then, we have
\[
\begin{array}{rcl}
{\it GE}[U](\bar{M})\le q\;&\textrm{iff}&\;i-\frac{i^2}{n-1}\le q\\
&\textrm{iff}&\;n\le\frac{i^2}{i-q}+1
\end{array}
\]
Because $n$ is an integer, we have $n\le\lfloor\frac{i^2}{i-q}\rfloor
+1$ and $n\le\lfloor\frac{j^2}{j-q}\rfloor +1$.  Let
$f=\frac{i^2}{i-q}+1=\frac{j^2}{j-q}+1$.  By elementary real analysis,
it can be shown that for integers $i$ and $j$ such that $i>q$ and
$j>q$, $f$ attains its maximum value when $i=\lfloor q\rfloor +1$ or
$j=\lfloor q\rfloor +1$.  Therefore, it follows that
$|\sembrack{M}|=n\le\lfloor\frac{(\lfloor q\rfloor +1)^2}{\lfloor
  q\rfloor +1 -q}\rfloor +1$.
\end{proof}

\begin{lemma}
  Let $q\ge\frac{1}{2}$.  Let $M$ be a program without low-security
  inputs such that ${\it GE}[U](M)>q$.  Then, there exists $T$ such
  that
\begin{itemize}
\item $T\subseteq \sembrack{M}$
\item $|T|\le \lfloor\frac{(\lfloor q\rfloor +1)^2}{\lfloor q\rfloor
    +1 -q} \rfloor +1$
\item ${\it GE}[U](M')>q$ where $\sembrack{M'}=T$.
\end{itemize}
\label{lem:gemax3}
\end{lemma}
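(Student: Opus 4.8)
The plan is to obtain $T$ as the trace set of a \emph{minimal} sub-program of $M$ whose guessing entropy still exceeds $q$, and then to read off the cardinality bound directly from Lemma~\ref{lem:gemax}. The only real content of this lemma is already discharged there (the closed form of Lemma~\ref{lem:geu} together with the optimization showing the worst case is a two-output configuration), so what remains is purely the organizing minimization argument.

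Concretely, I would consider the collection
\[
\mathcal{S} = \aset{S \mid S \subseteq \sembrack{M} \text{ and } {\it GE}[U](M_S) > q},
\]
where $M_S$ denotes the low-security-input-free program with $\sembrack{M_S} = S$, i.e. $M$ restricted to the inputs $\dom(S)$ under the uniform distribution over that restricted domain. This $\mathcal{S}$ is nonempty, since $\sembrack{M} \in \mathcal{S}$ by hypothesis, and finite, since $\sembrack{M}$ is finite; hence it has a minimal element $T$ with respect to set inclusion, which I would take as the witness. The first and third required properties, $T \subseteq \sembrack{M}$ and ${\it GE}[U](M_T) > q$, hold simply by $T \in \mathcal{S}$ (taking $M' = M_T$). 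For the cardinality bound, minimality of $T$ means every proper subset $S \subsetneq T$ has $S \notin \mathcal{S}$, i.e. ${\it GE}[U](M_S) \le q$; this is precisely the minimality hypothesis of Lemma~\ref{lem:gemax} applied to $M_T$, and that lemma then yields $|T| = |\sembrack{M_T}| \le \lfloor\frac{(\lfloor q\rfloor +1)^2}{\lfloor q\rfloor +1 -q}\rfloor +1$, the second property.

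Since the difficult size estimate is inherited from Lemma~\ref{lem:gemax}, no substantive obstacle remains; the only point to check carefully is that restricting to a subset of inputs yields a well-defined low-security-input-free program to which Lemma~\ref{lem:gemax} applies, i.e. that the uniform distribution is renormalized over the smaller domain $\dom(T)$, which is immediate. As an alternative to the inclusion-minimal-element argument, I could phrase the construction operationally, greedily deleting one trace at a time while ${\it GE}[U]$ stays above $q$ and appealing to the monotonicity Lemma~\ref{lem:gemono2} (deletion never increases ${\it GE}[U]$) to justify the process; this produces the same minimal $T$, but the inclusion-minimal formulation above is cleaner and does not even require monotonicity.
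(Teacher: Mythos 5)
Your proposal is correct and takes essentially the same approach as the paper: both arguments extract an inclusion-minimal subset $T\subseteq\sembrack{M}$ whose induced (low-security-input-free, uniformly distributed) program still has ${\it GE}[U]>q$, so that every proper subset falls to $\le q$, and then invoke Lemma~\ref{lem:gemax} to bound $|T|$. The only immaterial difference is how existence of the minimal $T$ is justified — you use finiteness and nonemptiness of the family of such subsets, whereas the paper appeals to Lemma~\ref{lem:gemono2} together with the bound ${\it GE}[U](M)\le|\sembrack{M}|/2$ (in effect your ``greedy deletion'' variant) — and your remark that monotonicity is not actually needed for this step is accurate.
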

\begin{proof}
  Let $q\ge\frac{1}{2}$.  Let $M$ be a program such that ${\it
    GE}[U](M)>q$.  By Lemma~\ref{lem:gemono2} and the fact that ${\it
    GE}[U](M)$ is bounded by $\frac{|\sembrack{M}|}{2}$, there exists
  $T$ such that
\begin{itemize}
\item $T\subseteq \sembrack{M}$
\item ${\it GE}[U](M')>q$ where $\sembrack{M'}=T$
\item $\forall T'\subseteq T. {\it GE}[U](\bar{M})\le q$ where
  $\sembrack{\bar{M}}=T'$.
\end{itemize}
By Lemma~\ref{lem:gemax}, we have $|T|\le \lfloor\frac{(\lfloor
  q\rfloor +1)^2}{\lfloor q\rfloor +1 -q} \rfloor +1$.  Therefore, we
have the conclusion.
\end{proof}

\begin{reftheorem}{\ref{thm:gek}}
  Let $q$ be a constant, and suppose $B_{\it GE}[U]$ only takes
  programs without low security inputs.  If $q\ge\frac{1}{2}$, then,
  $B_{\it GE}[U]$ is $\lfloor\frac{(\lfloor q\rfloor +1)^2}{\lfloor
    q\rfloor +1 -q} \rfloor +1$-safety, but it is not $k$-safety for
  any $k \leq \lfloor\frac{(\lfloor q\rfloor +1)^2}{\lfloor q\rfloor
    +1 -q} \rfloor$.  Otherwise, $q<\frac{1}{2}$ and $B_{\it GE}[U]$
  is $2$-safety, but it is not $1$-safety.
\end{reftheorem}
\begin{proof}
  First, we prove that $B_{\it GE}[U]$ for programs without
  low-security inputs is $\lfloor\frac{(\lfloor q\rfloor
    +1)^2}{\lfloor q\rfloor +1 -q} \rfloor +1$-safety for $q \geq
  \frac{1}{2}$.  By the definition of $k$-safety, for any $M$ such
  that $M\not\in B_{\it GE}[U]$, there exists $T$ such that
\begin{enumerate}
\item $T\subseteq \sembrack{M}$
\item $|T|\le \lfloor\frac{(\lfloor q\rfloor +1)^2}{\lfloor q\rfloor
    +1 -q} \rfloor +1$
\item $\forall M'.T\subseteq\sembrack{M'}\Rightarrow M'\not\in B_{\it
    GE}[U]$
\end{enumerate}
We show that if $M\not\in B_{\it GE}[U]$, then there exists $T$ such
that
\begin{itemize}
\item $T\subseteq \sembrack{M}$
\item $|T|\le \lfloor\frac{(\lfloor q\rfloor +1)^2}{\lfloor q\rfloor
    +1 -q} \rfloor +1$
\item ${\it GE}[U](M')>q$ where $\sembrack{M'}=T$.
\end{itemize}
Note that ${\it GE}[U](M')>q$ and Lemma~\ref{lem:gemono2} imply the
condition 3 above.  Suppose that $M\not\in B_{\it GE}[U]$.  Then, by
Lemma~\ref{lem:gemax3}, there exists $T\subseteq\sembrack{M}$ such
that $|T|\le \lfloor\frac{(\lfloor q\rfloor +1)^2}{\lfloor q\rfloor +1
-q} \rfloor +1$, and ${\it GE}[U](M')>q$ where $\sembrack{M'}=T$.

Next, we prove $B_{\it GE}[U]$ for programs without low-security inputs
is not $k$-safety for any $k \leq \lfloor\frac{(\lfloor q\rfloor
  +1)^2}{\lfloor q\rfloor +1 -q} \rfloor$.  For a contradiction,
suppose $B_{\it GE}[U]$ is a k-safety property.  Let $M$ be a program
such that
\[
\begin{array}{l}
M(h_1)=o, M(h_2)=o, \dots, M(h_i)=o, \\
M(h_{i+1})=o', M(h_{i+2})=o', \dots, M(h_{n})=o'
\end{array}
\]
where $h_1,h_2,\dots h_n$, and $o,o'$ are distinct,
$n=\lfloor\frac{(\lfloor q\rfloor +1)^2}{\lfloor q\rfloor +1 -q}
\rfloor +1$, and $i=\lfloor q\rfloor +1$.  Let
$\mathbb{H}_o=\aset{h\mid o=M(h)}$ and and
$\mathbb{H}_{o'}=\aset{h\mid o'=M(h)}$.  By Lemma~\ref{lem:geu}, we
have
\[
\begin{array}{rcl} 
  {\it
    GE}[U](M)&=&\frac{n}{2}-\frac{1}{2n}(|\mathbb{H}_o|^2+|\mathbb{H}_{o'}|^2)\\
  &=&i-\frac{i^2}{n}\\
  &=&\lfloor q\rfloor +1 -\frac{(\lfloor q\rfloor+1)^2}{\lfloor\frac{(\lfloor q\rfloor +1)^2}{\lfloor q\rfloor +1 -q}
    \rfloor+1}
\end{array}
\]
Let $p=\lfloor q\rfloor +1$.  If $\frac{(\lfloor q\rfloor
  +1)^2}{\lfloor q\rfloor +1 -q}$ is an integer, then we have
\[
\begin{array}{rcl}
  {\it GE}[U](M)&=&
  p -\frac{p^2}{\lfloor\frac{p^2}{p -q}\rfloor+1}\\
  &=&p -\frac{p^2}{\frac{p^2+p -q}{p -q}}\\
  &=&q(\frac{(p-q)^2}{p^2q+pq -q^2}+1)\\
  &>&q
\end{array}
\]
The last line follows from $p^2q+pq -q^2 = p^2q+q(p-q) > 0$.

Otherwise, we have $\lfloor\frac{(\lfloor q\rfloor +1)^2}{\lfloor
  q\rfloor +1 -q}\rfloor+1=\lceil\frac{(\lfloor q\rfloor +1)^2}{\lfloor
  q\rfloor +1 -q}\rceil>\frac{(\lfloor q\rfloor +1)^2}{\lfloor q\rfloor
  +1 -q}$.  And,
\[
\begin{array}{rcl}
  {\it GE}[U](M)&=&
  p -\frac{p^2}{\lceil\frac{p^2}{p -q}\rceil}\\
  &>&p -\frac{p^2}{\frac{p^2}{p -q}}\\
  &=&q
\end{array}
\]
Hence, we have ${\it GE}[U](M)>q$.  Therefore, $M\not\in B_{\it GE}[U]$.  Then,
there exists $T$ such that $|T|\le k$, $T\subseteq\sembrack{M}$, and
for any $M'$ such that $T\subseteq \sembrack{M'}$, $M'\not\in B_{\it
GE}[U]$.  Let $\bar{M}$ be a program such that $\sembrack{\bar{M}}=T$.
Then, by Lemma~\ref{lem:geu} and Lemma~\ref{lem:gemono2}, we have
\[
\begin{array}{rcl}
  {\it GE}[U](\bar{M})&\le&\frac{n-1}{2}-\frac{1}{2(n-1)}(i^2 + (n-1-i)^2)\\
  &=&i - \frac{i^2}{\lfloor \frac{i^2}{i -q}\rfloor}\\
  &\le& i - \frac{i^2}{\frac{i^2}{i -q}}\\
  &=&q
\end{array}
\]
It follows that $\bar{M}\in B_{\it GE}[U]$.  Recall that
$T\subseteq\sembrack{\bar{M}}$.  Therefore, this leads to a
contradiction.

Next, we prove that $B_{\it GE}[U]$ is $2$-safety for any
$q<\frac{1}{2}$.  It suffices to show that ${\it GE}[U](M)\le q$ iff
$M$ is non-interferent, because non-interference is a $2$-safety
property and not a $1$-safety
property~\cite{mclean:sp94,barthe:csfw04,darvas:spc05}.  We prove that
if ${\it GE}[U](M)\le q$ then $M$ is non-interferent.  The other
direction follows from Theorem~\ref{thm:nonint}.  We prove the
contraposition.  Suppose $M$ is interferent.  It must be the case that
there exist $h$ and $h'$ such that $M(h)\not=M(h')$.  Let $o=M(h)$,
and $o'=M(h')$.  Let $M'$ be a program such that
$\sembrack{M'}=\aset{(h,o),(h',o')}$.  Note that we have
$\sembrack{M'}\subseteq\sembrack{M}$.  By Lemma~\ref{lem:gemono2}, we
have
\[
{\it GE}[U](M')=\frac{1}{2}\le {\it GE}[U](M)
\]
It follows that ${\it GE}[U](M)>q$.
\end{proof}

\begin{lemma}
\label{lem:mel}
  Let $M$ be a program that has a low-security input, a high-security
  input, and a low-security output.  Then, we have
\[
{\it ME}[U](M)=\log\frac{|{\mathbb O}_{\mathbb L}|}{|{\mathbb L}|}
\]
where ${\mathbb O}_{\mathbb L}=\aset{(o,\ell)\mid \exists h.
  o=M(h,\ell)}$, and ${\mathbb L}$ is sample space of the low-security
input.
\end{lemma}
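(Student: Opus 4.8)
The plan is to unfold the definition of ${\it ME}[U](M)$ directly and compute the two conditional vulnerabilities $\mathcal{V}[U](H|L)$ and $\mathcal{V}[U](H|O,L)$ explicitly under the uniform distribution. Since
\[
{\it ME}[U](M) = \mathcal{H}_\infty[U](H|L) - \mathcal{H}_\infty[U](H|O,L) = \log\frac{\mathcal{V}[U](H|O,L)}{\mathcal{V}[U](H|L)},
\]
it suffices to evaluate both vulnerabilities and take their ratio.

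First I would compute $\mathcal{V}[U](H|L)$. Writing $U(h,\ell) = \frac{1}{|\mathbb{H}||\mathbb{L}|}$, one gets $U(\ell) = \frac{1}{|\mathbb{L}|}$ and $U(h|\ell) = \frac{1}{|\mathbb{H}|}$ for every $h$, so $\max_h U(h|\ell) = \frac{1}{|\mathbb{H}|}$ and hence $\mathcal{V}[U](H|L) = \sum_\ell U(\ell)\frac{1}{|\mathbb{H}|} = \frac{1}{|\mathbb{H}|}$.

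The key step is computing $\mathcal{V}[U](H|O,L)$, which uses the determinism of $M$. Since $O = M(H,L)$ is a deterministic function of the inputs, the joint mass $U(h,o,\ell)$ equals $U(h,\ell)$ when $o = M(h,\ell)$ and $0$ otherwise. Setting $\mathbb{H}_{o,\ell} = \aset{h \mid M(h,\ell) = o}$, I would show $U(o,\ell) = \frac{|\mathbb{H}_{o,\ell}|}{|\mathbb{H}||\mathbb{L}|}$ and that within each nonempty cell $U(h|o,\ell) = \frac{1}{|\mathbb{H}_{o,\ell}|}$ for the $h$'s mapping to $o$, so $\max_h U(h|o,\ell) = \frac{1}{|\mathbb{H}_{o,\ell}|}$. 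The factor $|\mathbb{H}_{o,\ell}|$ then cancels: each pair $(o,\ell)$ with $U(o,\ell) > 0$ contributes exactly $\frac{1}{|\mathbb{H}||\mathbb{L}|}$ to the sum, and the number of such pairs is precisely $|\mathbb{O}_{\mathbb L}|$. Hence $\mathcal{V}[U](H|O,L) = \frac{|\mathbb{O}_{\mathbb L}|}{|\mathbb{H}||\mathbb{L}|}$.

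Finally I would combine the two: taking the ratio, the $|\mathbb{H}|$ factors cancel and we obtain
\[
{\it ME}[U](M) = \log\frac{|\mathbb{O}_{\mathbb L}|/(|\mathbb{H}||\mathbb{L}|)}{1/|\mathbb{H}|} = \log\frac{|\mathbb{O}_{\mathbb L}|}{|\mathbb{L}|},
\]
which is the claim. There is no real obstacle here beyond bookkeeping; the only subtlety worth flagging is the cancellation in the conditional-vulnerability computation, where taking the maximum over $h$ inside each $(o,\ell)$ cell reduces a count of preimages to a count of reachable output/low-input pairs. This is exactly the mechanism that turns min-entropy leakage into the logarithm of the number of observable $(o,\ell)$ pairs, mirroring Lemma~\ref{lem:ccloglow} for channel capacity.
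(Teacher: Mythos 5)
Your proof is correct and follows essentially the same route as the paper's: compute $\mathcal{V}[U](H|L)=\frac{1}{|\mathbb{H}|}$ and $\mathcal{V}[U](H|O,L)=\frac{|\mathbb{O}_{\mathbb{L}}|}{|\mathbb{H}||\mathbb{L}|}$ under the uniform distribution and take the logarithm of their ratio. The only difference is that you spell out the cell-by-cell cancellation justifying the second vulnerability value (using determinism of $M$), which the paper simply asserts.
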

\begin{proof}
By the definition of ${\it ME}$, we have
\[
{\it ME}[U](M)=\log\frac{1}{\mathcal{V}[U](H|L)}-\log\frac{1}{\mathcal{V}[U](H|O,L)}
\]
where
\[
\begin{array}{c}
  \mathcal{V}[U](H|L)=\frac{1}{|\mathbb{H}|}\\
  \mathcal{V}[U](H|O,L)=\frac{|\mathbb{O}_\mathbb{L}|}{|\mathbb{H}||\mathbb{L}|}
\end{array}
\]
It follows that 
\[
{\it ME}[U](M)=\log\frac{|\mathbb{O}_\mathbb{L}|}{|\mathbb{L}|}
\]
\end{proof}

\begin{reftheorem}{\ref{thm:menk}}
  Let $q$ be a constant.  (And let $B_{\it ME}[U]$ take programs with
  low security inputs.) Then, $B_{\it ME}[U]$ is not a $k$-safety
  property for any $k > 0$.
\end{reftheorem}
\begin{proof}
  For a contradiction, suppose $B_{\it ME}[U]$ is a k-safety property.
  Let $M$ be a program such that $M\not\in B_{\it ME}[U]$.  Then,
  there exists $T$ such that $|T|\le k$, $T\subseteq\sembrack{M}$, and
  for any $M'$ such that $T\subseteq \sembrack{M'}$, $M'\not\in
  B_{\it ME}[U]$.  Let
  $T=\aset{((h_1,\ell_1),o_1),\dots,((h_i,\ell_i),o_i)}$.  Let
  $\bar{M}$ be the following program.
\[
\begin{array}{l}
  \bar{M}(h_1,\ell_1)=o_1, \bar{M}(h_2,\ell_2)=o_2, \dots, \bar{M}(h_i,\ell_i)=o_i,\\
  \bar{M}(h_{i+1},\ell_{i+1})=o_i, \bar{M}(h_{i+2},\ell_{i+2})=o_i, \dots, \bar{M}(h_{n},\ell_n)=o_i
\end{array}
\]
where $n=|\mathbb{\bar{H}}||\mathbb{\bar{L}}|$, and $\mathbb{\bar{H}},
\mathbb{\bar{L}}$ are the high security inputs and the low security
inputs of $\bar{M}$.  Then, by Lemma~\ref{lem:mel}, we have
\[
\begin{array}{rcl}
{\it ME}[U](\bar{M})&=&\log\frac{|\mathbb{O}_\mathbb{\bar{L}}|}{|\mathbb{\bar{L}}|}\\
&\le& \log \frac{i+|\mathbb{{\bar{L}}}|}{|\mathbb{\bar{L}}|}
\end{array}
\]
Therefore, for any $q > 0$, there exists $\mathbb{\bar{L}}$ such that
${\it ME}[U](\bar{M}) \leq q$ and $T\subseteq\sembrack{\bar{M}}$.
Therefore, this leads to a contradiction.
\end{proof}

\begin{lemma}
\label{lem:geu2}
Let $M$ be a program that has a high-security input with sample space
$\mathbb{H}$, a low-security input with sample space $\mathbb{L}$, and
a low-security output.  Then, we have
\[
{\it
    GE}[U](M)=\frac{|\mathbb{H}|}{2}-\frac{1}{2|\mathbb{H}||\mathbb{L}|}\sum_{o,\ell}
  |\mathbb{H}_{o,\ell}|^2
\] 
where $\mathbb{H}_{o,\ell}=\aset{h\mid o=M(h,\ell)}$.
\end{lemma}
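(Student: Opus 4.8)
The plan is to evaluate ${\it GE}[U](M) = \mathcal{G}[U](H|L) - \mathcal{G}[U](H|O,L)$ directly from the definitions, exploiting the fact that the uniform distribution over $\mathbb{H}\times\mathbb{L}$ restricts to a uniform distribution on each conditioning event. This follows the same pattern as the proof of Lemma~\ref{lem:geu} for the low-security-input-free case, but now carries the extra random variable $L$ through the computation.

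First I would handle the initial term $\mathcal{G}[U](H|L)$. Since $U$ is uniform, for every $\ell$ the conditional $U(H=h\mid L=\ell)$ equals $1/|\mathbb{H}|$, so $H$ given $L=\ell$ is uniform over all of $\mathbb{H}$. The guessing-entropy formula then gives $\mathcal{G}[U](H|L=\ell)=\frac{1}{|\mathbb{H}|}\sum_{1\le i\le|\mathbb{H}|} i=\frac{|\mathbb{H}|+1}{2}$, and averaging over $\ell$ with weights $U(\ell)=1/|\mathbb{L}|$ (which sum to one) leaves $\mathcal{G}[U](H|L)=\frac{|\mathbb{H}|+1}{2}$.

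Next I would evaluate $\mathcal{G}[U](H|O,L)=\sum_{o,\ell}U(o,\ell)\,\mathcal{G}[U](H|O=o,L=\ell)$. The key observation is that conditioning on a pair $(O=o,L=\ell)$ with $U(o,\ell)>0$ makes $H$ uniform over exactly the fiber $\mathbb{H}_{o,\ell}=\aset{h\mid o=M(h,\ell)}$: one checks $U(H=h\mid O=o,L=\ell)=1/|\mathbb{H}_{o,\ell}|$ for $h\in\mathbb{H}_{o,\ell}$ and $0$ otherwise, while $U(o,\ell)=|\mathbb{H}_{o,\ell}|/(|\mathbb{H}||\mathbb{L}|)$. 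Hence $\mathcal{G}[U](H|O=o,L=\ell)=\frac{|\mathbb{H}_{o,\ell}|+1}{2}$, and summing yields $\mathcal{G}[U](H|O,L)=\frac{1}{2|\mathbb{H}||\mathbb{L}|}\sum_{o,\ell}\left(|\mathbb{H}_{o,\ell}|^2+|\mathbb{H}_{o,\ell}|\right)$.

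The only step needing care is simplifying the linear part $\sum_{o,\ell}|\mathbb{H}_{o,\ell}|$. Because $M(\cdot,\ell)$ is a total function, for each fixed $\ell$ the fibers $\aset{\mathbb{H}_{o,\ell}}_o$ partition $\mathbb{H}$, so $\sum_o|\mathbb{H}_{o,\ell}|=|\mathbb{H}|$ and therefore $\sum_{o,\ell}|\mathbb{H}_{o,\ell}|=|\mathbb{H}||\mathbb{L}|$. Substituting gives $\mathcal{G}[U](H|O,L)=\frac{1}{2|\mathbb{H}||\mathbb{L}|}\sum_{o,\ell}|\mathbb{H}_{o,\ell}|^2+\frac{1}{2}$, and subtracting this from $\frac{|\mathbb{H}|+1}{2}$ cancels the $+\frac12$ against the $+1$ to leave $\frac{|\mathbb{H}|}{2}-\frac{1}{2|\mathbb{H}||\mathbb{L}|}\sum_{o,\ell}|\mathbb{H}_{o,\ell}|^2$, as claimed. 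I expect no genuine obstacle: the computation is routine, and the one thing to get right is the partition identity that collapses the terms linear in $|\mathbb{H}_{o,\ell}|$ into the clean constant offset.
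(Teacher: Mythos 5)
Your proposal is correct and follows essentially the same route as the paper's proof: a direct evaluation of $\mathcal{G}[U](H|L)=\frac{|\mathbb{H}|+1}{2}$ and $\mathcal{G}[U](H|O,L)=\sum_{o,\ell}\frac{|\mathbb{H}_{o,\ell}|}{|\mathbb{H}||\mathbb{L}|}\cdot\frac{|\mathbb{H}_{o,\ell}|+1}{2}$ from the uniformity of the conditional distributions, followed by the cancellation of the linear terms. The paper performs the partition identity $\sum_{o,\ell}|\mathbb{H}_{o,\ell}|=|\mathbb{H}||\mathbb{L}|$ implicitly in its final algebraic step, which you simply make explicit.
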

\begin{proof}
By the definition, we have
\[
\begin{array}{rcl}
{\it GE}[U](M)&=&\mathcal{G}[U](H|L)-\mathcal{G}[U](H|O,L)\\
&=&\sum_\ell U(\ell)\sum_h In(\lambda h'.U(h'|\ell),\mathbb{H},h)U(h|\ell)\\
&&\qquad-\sum_{o,\ell} U(o,\ell)\sum_h In(\lambda h'.U(h'|o,\ell),\mathbb{H}_{o,\ell},h) U(h|o,\ell)\\
&=&\frac{|\mathbb{H}|+1}{2}-\sum_{o,\ell} \frac{|\mathbb{H}_{o,\ell}|}{|\mathbb{H}||\mathbb{L}|}\frac{1}{|\mathbb{H}_{o,\ell}|}\frac{1}{2}|\mathbb{H}_{o,\ell}|(|\mathbb{H}_{o,\ell}|+1)\\
&=&\frac{|\mathbb{H}|}{2}-\frac{1}{2|\mathbb{H}||\mathbb{L}|}\sum_{o,\ell} |\mathbb{H}_{o,\ell}|^2 
\end{array}
\]
\end{proof}

\begin{reftheorem}{\ref{thm:genk}}
Let $q$ be a constant.  (And let $B_{\it GE}[U]$ take programs with
low security inputs.) Then, $B_{\it GE}[U]$ is not a $k$-safety
property for any $k > 0$.
\end{reftheorem}
\begin{proof}
  For a contradiction, suppose $B_{\it GE}[U]$ is a k-safety property.
  Let $M$ be a program such that $M\not\in B_{\it GE}[U]$.  Then,
  there exists $T$ such that $|T|\le k$, $T\subseteq\sembrack{M}$, and
  for any $M'$ such that $T\subseteq \sembrack{M'}$, $M'\not\in
  B_{\it GE}[U]$.  Let
  $T=\aset{((h_1,\ell_1),o_1),\dots,((h_i,\ell_i),o_i)}$.  Let
  $\bar{M}$ be the following program.
\[
\begin{array}{l}
  \bar{M}(h_1,\ell_1)=o_1, \bar{M}(h_2,\ell_2)=o_2, \dots, \bar{M}(h_i,\ell_i)=o_i,\\
  \bar{M}(h_{i+1},\ell_{i+1})=o_i, \bar{M}(h_{i+2},\ell_{i+2})=o_i, \dots \bar{M}(h_{mn},\ell_{mn})=o_i
\end{array}
\]
where $n=|\mathbb{\bar{H}}|$ and $m=|\mathbb{\bar{L}}|$, and
$\mathbb{\bar{H}}, \mathbb{\bar{L}}$ are the high security inputs and
the low security inputs of $\bar{M}$.  Then, by Lemma~\ref{lem:geu2},
we have
\[
\begin{array}{rcl}
  {\it GE}[U](\bar{M})&=&\frac{n}{2}-\frac{1}{2mn}\sum_{o,\ell}|\mathbb{H}_{o,\ell}|^2\\
  &\le&\frac{n}{2}-\frac{1}{2mn}(in+(m-i)n^2)\\
  &=&\frac{1}{2mn}(-in+in^2)
\end{array}
\]
Therefore, for any $q > 0$, there exists $\mathbb{\bar{L}}$ such that
${\it GE}[U](\bar{M}) \leq q$ and $T\subseteq\sembrack{\bar{M}}$.
Therefore, this leads to a contradiction.
\end{proof}

\begin{reftheorem}{\ref{thm:senk}}
  Let $q$ be a constant and suppose $B_{\it SE}[U]$ only takes
  programs without low security inputs. Then, $B_{\it SE}[U]$ is not a
  $k$-safety property for any $k > 0$.
\end{reftheorem}
\begin{proof}
  For a contradiction, suppose $B_{\it SE}[U]$ is a k-safety property.
  Let $M$ be a program such that $M\not\in B_{\it SE}[U]$.  Then,
  there exists $T$ such that $|T|\le k$, $T\subseteq\sembrack{M}$, and
  for any $M'$ such that $T\subseteq \sembrack{M'}$, $M'\not\in B_{\it
  SE}[U]$.  Let $T=\aset{(h_1,o_1),\dots,(h_i,o_i)}$.  Let $\bar{M}$
  and $\bar{M'}$ be the following programs.
\[
\begin{array}{l}
  \bar{M}(h_1)=o_1,\bar{M}(h_2)=o_2,\dots,\bar{M}(h_i)=o_i,\bar{M}(h_{i+1})=o,\dots,\bar{M}(h_n)=o\\
  \bar{M'}(h_1)=o_1',\bar{M'}(h_2)=o_2',\dots,\bar{M'}(h_i)=o_i',\bar{M'}(h_{i+1})=o',\dots,\bar{M'}(h_n)=o'
\end{array}
\]
where $h_1$, $h_2$, $\dots$, $h_n$ are distinct, and $o_1'$, $o_2'$,
$\dots$, $o_i'$, and $o'$ are distinct.  Then, we have
\[
\begin{array}{rcl}
{\it SE}[U](\bar{M})&\le& {\it SE}[U](\bar{M'})\\
&=&\frac{i}{n}\log n + \frac{n-i}{n}\log\frac{n}{n-i}\\
&=&\log\frac{n}{n-i}+\frac{i}{n}\log(n-i)
\end{array}
\]
Therefore, for any $q > 0$, there exists $\bar{M}$ such that ${\it
SE}[U](\bar{M}) \leq q$ and $T\subseteq\sembrack{\bar{M}}$.
Therefore, this leads to a contradiction.
\end{proof}

\begin{reftheorem}{\ref{thm:be1nk}}
Let $q$ be a constant, and suppose $B_{\it BE1}[\aseq{U,h}]$ only
takes programs without low security inputs.  Then, $B_{\it
  BE1}[\aseq{U,h}]$ is not a $k$-safety property for any $k > 0$.
\end{reftheorem}
\begin{proof}
For a contradiction, suppose $B_{\it BE1}[\aseq{U,h}]$ is
a $k$-safety property.  Let $M$ be a program such that 
\[
M(h_1)=o,\dots, M(h_m)=o,M(h)=o'
\]
where $m=\lfloor 2^q\rfloor$, and $h,h_1,\dots,h_m$ and $o,o'$ are
distinct.  Then, we have ${\it BE}[\aseq{U,h}](M)=\log(m+1)>\log
2^q=q$.  That is, $(M,q)\not\in B_{\it BE1}[\aseq{U,h}]$.  Then, it
must be the case that there is $T$ such that $|T|\le k$,
$T\subseteq\sembrack{M}$, and for any $\bar{M}$ such that $T\subseteq
\sembrack{\bar{M}}$, $(\bar{M},q)\not\in B_{\it BE1}[\aseq{U,h}]$.
Let $T=\aset{(h_1',o_1'),\dots,(h_i',o_i')}$.  Let $\bar{M}$ be the
following program.
\[
\begin{array}{l}
  \bar{M}(h_1')=o_1',
  \bar{M}(h_2')=o_2', 
  \dots, 
  \bar{M}(h_i')=o_i',\\
  \bar{M}(h_{i+1}')=o',
  \bar{M}(h_{i+2}')=o',
\dots,
  \bar{M}(h_{n}')=o'
\end{array}
\]
where 
\begin{itemize}
\item $h_1'$, $h_2'$, $\dots$, $h_{n}'$ are distinct,
\item $h\in\aset{h_1', \dots, h_n'}$,
\item $\aset{o_1', o_2', \dots, o_i'}=\aset{o,o'}$, and
\item $\bar{M}(h) = o'$.
\end{itemize}
Then, we have 
\[
\begin{array}{rcl}
{\it BE}[\aseq{U,h}](\bar{M})\leq-\log\frac{n-i}{n}
\end{array}
\]
It follows that there exists $n$ such that ${\it
BE}[\aseq{U,h}](\bar{M})\le q$.  This leads to a contradiction.
\end{proof}

\begin{lemma}
\label{lem:bemono}
Let $T$ be a trace such that $T=\aset{((h_1,\ell'),o_1), \dots,
  ((h_i,\ell'),o_i)}$ where $o_1, \dots, o_i$ are distinct.  Let $M$
be the program such that $\sembrack{M} = T$ and $M'$ be a program such
that $\sembrack{M'} \supseteq T$.  Then, we have $\max_{h,\ell} {\it
  BE}[\aseq{U,h,\ell}](M')\ge\max_{h,\ell}{\it
  BE}[\aseq{U,h,\ell}](M)$.
\end{lemma}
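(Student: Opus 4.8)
The plan is to reduce everything to the closed form for the belief-based flow under the uniform distribution supplied by Lemma~\ref{lem:be}. First I would instantiate that lemma with $\mu = U$, so that for any program $N$ with high-security input space $\mathbb{H}_N$,
\[
{\it BE}[\aseq{U,h,\ell}](N) = \log\frac{|\mathbb{H}_N|}{|\aset{h'\mid N(h',\ell)=N(h,\ell)}|}.
\]
Thus maximizing ${\it BE}[\aseq{U,\cdot,\cdot}](N)$ over $h,\ell$ amounts to locating a low-security input $\ell$ together with an output value of $N(\ell)$ whose preimage under $N(\cdot,\ell)$ is as small as possible relative to $|\mathbb{H}_N|$.

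Next I would evaluate the right-hand side exactly. Since $\sembrack{M}=T$ and the outputs $o_1,\dots,o_i$ are distinct, determinism forces the $h_j$ to be pairwise distinct, so $M$'s high-security input space has size $i$, its low-security input space is the singleton $\aset{\ell'}$, and every realized output has a singleton preimage. Hence ${\it BE}[\aseq{U,h_j,\ell'}](M)=\log i$ for each $j$, and therefore $\max_{h,\ell}{\it BE}[\aseq{U,h,\ell}](M)=\log i$.

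The heart of the argument is the lower bound for $M'$. I would restrict attention to the single low-security input $\ell'$. Because $T\subseteq\sembrack{M'}$, each $o_j=M'(h_j,\ell')$ is realized, so $M'(\cdot,\ell')$ takes at least $i$ distinct output values over $\mathbb{H}'$, the high-security input space of $M'$. Writing the distinct outputs at $\ell'$ as $p_1,\dots,p_k$ with $k\ge i$ and their preimage sizes as $n_1,\dots,n_k$, we have $\sum_t n_t=|\mathbb{H}'|$, hence
\[
\min_t n_t \;\le\; \frac{|\mathbb{H}'|}{k}\;\le\;\frac{|\mathbb{H}'|}{i}.
\]
Choosing any $h^*$ whose output at $\ell'$ attains this minimal preimage $n_{t^*}$ then yields ${\it BE}[\aseq{U,h^*,\ell'}](M')=\log(|\mathbb{H}'|/n_{t^*})\ge\log i$, so that $\max_{h,\ell}{\it BE}[\aseq{U,h,\ell}](M')\ge\log i=\max_{h,\ell}{\it BE}[\aseq{U,h,\ell}](M)$, which is the claim.

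The only delicate point is this averaging step: the maximum of ${\it BE}$ corresponds to the \emph{smallest} preimage, so I must rule out the possibility that enlarging the trace set from $T$ to $\sembrack{M'}$ inflates every preimage at $\ell'$ in step with $|\mathbb{H}'|$. The pigeonhole bound $\min_t n_t\le|\mathbb{H}'|/k$ is exactly what prevents this, using that the number $k$ of distinct outputs at $\ell'$ can only increase when traces are added. I would also remark that any additional traces $M'$ carries at low-security inputs other than $\ell'$ are harmless, since they merely enlarge the set of candidates over which the outer maximum is taken.
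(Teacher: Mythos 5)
Your proof is correct and follows essentially the same route as the paper's: both evaluate $\max_{h,\ell}{\it BE}[\aseq{U,h,\ell}](M)=\log i$ via Lemma~\ref{lem:be}, restrict $M'$ to the low input $\ell'$, and apply the same counting bound (your pigeonhole step $\min_t n_t\le|\mathbb{H}'|/k\le|\mathbb{H}'|/i$ is exactly the paper's inequality $|\aset{h'\mid\exists o.\,M'(h',\ell')=o}|\ge m\cdot\min_o|\aset{h'\mid M'(h',\ell')=o}|\ge i\cdot\min_o|\aset{h'\mid M'(h',\ell')=o}|$, with your $k$ playing the role of the paper's $m$). No gaps; the argument is sound as written.
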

\begin{proof}
  By definition, we have 
\[
\begin{array}{rcl}
  \max_{h,\ell} {\it BE}[\aseq{U,h,\ell}](M)&=&\log i\\&&\\

  \max_{h,\ell}{\it BE}[\langle U,h,\ell\rangle](M')&=&\max_{h,\ell}-\log\Sigma_{h_0\in
    \aset{h'\mid M'(h',\ell)=M'(h,\ell)}}U(h_0)\\
&\ge&\max_{h}-\log\Sigma_{h_0\in \aset{h'\mid M'(h',\ell')=M'(h,\ell')}}U(h_0)\\
&=&\log \frac{|\aset{h'\mid
      \exists o. M'(h',\ell')=o}|}{\min_{o}|\aset{h'\mid M'(h',\ell')=o}|}
\end{array}
\]
Therefore, it suffices to show that 
\[|\aset{h'\mid \exists
  o.M'(h',\ell')=o}|\ge i \min_{o}\aset{h'\mid M'(h',\ell')=o}
\]  Then,
\[
\begin{array}{l}
  |\aset{h'\mid \exists o. M'(h',\ell')=o}|- i
  \min_{o}\aset{h'\mid M'(h',\ell')=o}\\
  \qquad\ge (m-i)\min_{o}\aset{h'\mid M'(h',\ell')=o}\\
  \qquad\ge 0
\end{array}
\]where $m=|\aset{o\mid \exists h. M'(h,\ell')=o}|$.
\end{proof}

\begin{reftheorem}{\ref{thm:be2nk1}}
  Let $q$ be a constant.  If $q\ge 1$, then $B_{\it BE2}[U]$ is not a
  $k$-safety property for any $k > 0$ even when $B_{\it BE2}[U]$ only
  takes programs without low security inputs.  Otherwise, $q<1$ and
  $B_{\it BE2}[U]$ is a 2-safety property, but it is not a 1-safety
  property.
\end{reftheorem}
\begin{proof}
First, we show for the case $q \geq 1$, $B_{\it BE2}[U]$ is not a
$k$-safety property for any $k > 0$.  For a contradiction, suppose
$B_{\it BE2}[U]$ is a $k$-safety property.  Let $M$ be the program such
that
\[
M = \aset{h_1 \mapsto o,\dots, h_m \mapsto o,h \mapsto o'}
\]
where $m=\lfloor 2^q\rfloor$.  Then, we have ${\it
  BE}[\aseq{U,h}](M)=\log(m+1)>\log 2^q=q$.  That is, $(M,q)\not\in
B_{\it BE2}[U]$.  Then, it must be the case that there exists $T$ such
that $|T|\le k$, $T\subseteq\sembrack{M}$, and for any $M'$ such that
$T\subseteq \sembrack{M'}$, $(M',q)\not\in B_{\it BE2}[U]$.  Note that
for any $M'$ such that $\sembrack{M'} \subsetneq \sembrack{M}$,
$\forall h.{\it BE}[\aseq{U,h}](M') \leq q$, and therefore, it must be
the case that such $T$ must be equal to $\sembrack{M}$.

Let $\bar{M}$ be the following program.
\[
\begin{array}{l}
  \bar{M}(h_1)=o,
  \bar{M}(h_2)=o,
  \dots,
  \bar{M}(h_m)=o,\\
  \bar{M}(h)=o', \bar{M}(h_{m+1})=o',
  \bar{M}(h_{m+2})=o',
\dots,
  \bar{M}(h_{2m-1})=o'
\end{array}
\]
where $h$, $h_1$, $\dots$, $h_{2m-1}$ are distinct.

Then, we have $|\aset{h'\mid \bar{M}(h')=o}|=|\aset{h'\mid
  \bar{M}(h')=o'}|=m$.  Therefore, for any $h'$,
\[
{\it BE}[\aseq{U,h'}](\bar{M})=-\log\frac{m}{2m}=1\le q
\]
This leads to a contradiction.

Next, we prove that $B_{\it BE2}[U]$ is a 2-safety property for any
$q<1$.  It suffices to show that $\forall h,\ell.{\it
  BE}[\aseq{U,h,\ell}](M)\le q$ iff $M$ is non-interferent, because
non-interference is a $2$-safety property and is not a $1$-safety
property~\cite{mclean:sp94,barthe:csfw04,darvas:spc05}.  We prove that
if $\forall h,\ell.{\it BE}[\aseq{U,h,\ell}](M)\le q$ then $M$ is
non-interferent.  The other direction follows from
Theorem~\ref{thm:nonint}.  We prove the contraposition.  Suppose $M$
is interferent.  It must be the case that there exist $h_0$, $h_1$,
and $\ell'$ such that $M(h_0,\ell')\not=M(h_1,\ell')$.  Let
$o=M(h_0,\ell')$, and $o'=M(h_1,\ell')$.  Let $M'$ be a program such
that $\sembrack{M'}=\aset{((h_0,\ell'),o),((h_1,\ell'),o')}$.  Note
that we have $\sembrack{M'}\subseteq\sembrack{M}$.  By
Lemma~\ref{lem:bemono}, we have
\[
\max_{h,\ell} {\it BE}[\aseq{U,h,\ell}](M')=1\le \max_{h,\ell} {\it BE}[\aseq{U,h,\ell}](M)
\] 
It follows that $\neg(\forall h,\ell.{\it BE}[\aseq{U,h,\ell}] \le
q)$.
\end{proof}

\begin{reftheorem}{\ref{thm:secck}}
Let $q$ be a constant.  Then, $B_{\it SECC}$ is
$\lfloor2^q\rfloor+1$-safety, but it is not $k$-safety for any $k \leq
\lfloor2^q\rfloor$.
\end{reftheorem}
\begin{proof}
  Trivial from Theorem~\ref{thm:cck} and the fact that $B_{\it SECC}$
  is equivalent to $B_{\it CC}$.
\end{proof}

\begin{lemma}
\label{lem:mecclemma}
Let $\mu$ be a distribution.  Then, for any low-security input $\ell$,
we have $m_\ell\max_h\mu(h,\ell)\ge\sum_o\max_h\mu(h,\ell,o)$ where
$m_\ell= |M[\mathbb{H},\ell]|$
\end{lemma}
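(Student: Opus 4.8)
The plan is to exploit the determinism of $M$ to rewrite each term $\max_h\mu(h,\ell,o)$ and then bound the resulting sum term by term. The key observation is that the sum on the right-hand side has exactly $m_\ell$ nonzero summands, each of which is dominated by the single quantity $\max_h\mu(h,\ell)$ on the left.

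First I would use the fact that $M$ is deterministic to factor the joint probability: $\mu(h,\ell,o)=\mu(h,\ell)$ whenever $M(h,\ell)=o$, and $\mu(h,\ell,o)=0$ otherwise. Consequently, for a fixed output $o$ we have $\max_h\mu(h,\ell,o)=\max_{h:M(h,\ell)=o}\mu(h,\ell)$ when $o\in M[\mathbb{H},\ell]$, and $\max_h\mu(h,\ell,o)=0$ when $o\notin M[\mathbb{H},\ell]$ (since every summand is $0$ in that case). Thus the outer sum over all $o$ may be restricted to $o\in M[\mathbb{H},\ell]$ without changing its value.

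Next I would bound each surviving term. For each $o\in M[\mathbb{H},\ell]$, the maximum $\max_{h:M(h,\ell)=o}\mu(h,\ell)$ is taken over a subset of all $h$, so it is at most the unrestricted maximum $\max_h\mu(h,\ell)$. Since $|M[\mathbb{H},\ell]|=m_\ell$, summing $m_\ell$ such terms yields the chain
\[
\sum_o\max_h\mu(h,\ell,o)=\sum_{o\in M[\mathbb{H},\ell]}\max_{h:M(h,\ell)=o}\mu(h,\ell)\le\sum_{o\in M[\mathbb{H},\ell]}\max_h\mu(h,\ell)=m_\ell\max_h\mu(h,\ell),
\]
which is exactly the claimed inequality.

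I do not anticipate any genuine obstacle here: the argument is a direct consequence of determinism together with the trivial fact that maximizing over a subset can only decrease the value. The only point requiring a little care is the bookkeeping of which outputs contribute, namely ensuring that precisely the $m_\ell$ outputs in $M[\mathbb{H},\ell]$ appear with nonzero value while all other outputs contribute zero, so that the final sum has exactly $m_\ell$ terms.
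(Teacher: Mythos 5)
Your proposal is correct and follows essentially the same route as the paper's proof: both arguments compare term by term, using the fact that $\max_h\mu(h,\ell,o)\le\max_h\mu(h,\ell)$ for each output $o$ and that only the $m_\ell$ outputs in $M[\mathbb{H},\ell]$ contribute nonzero terms, so the difference $m_\ell\max_h\mu(h,\ell)-\sum_o\max_h\mu(h,\ell,o)$ is a sum of nonnegative quantities. Your version merely spells out the determinism-based bookkeeping (which outputs contribute and why) that the paper leaves implicit.
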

\begin{proof}
\[
\begin{array}{l}
m_\ell\max_h\mu(h,\ell)-\sum_o\max_h\mu(h,\ell,o)\\
\qquad=\sum_o(\max_h\mu(h,\ell)-\max_h\mu(h,\ell,o))\\
\qquad\ge 0
\end{array}
\]
since we have $\forall o.\max_h\mu(h,\ell)\ge\max_h\mu(h,\ell,o)$.
\end{proof}

\begin{reflemma}{\ref{lem:mecceqcc}}
$\max_\mu {\it ME}[\mu](M)={\it CC}(M)$
\end{reflemma}
\begin{proof}
The statement was proved for programs without low security inputs by
Braun et al.~\cite{Braun:09:MFPS}.  We show that the same result holds
for programs with low security inputs.

Let $\ell'$ be a low-security input such that for any $\ell$,
$m_{\ell'}\ge m_\ell$ where $m_{\ell_0} = |M[\mathbb{H},\ell_0]|$.
Let $\mu'$ be a distribution such that
$\forall h.\mu'(h,\ell')=\frac{1}{n}$ where $n$ is the
number of high-security inputs.  We have ${\it CC}(M)= {\it
  ME}[\mu'](M)=\log m_{\ell'}$.  Therefore, it suffices to show that for
any $\mu$, ${\it ME}[\mu'](M)\ge{\it ME}[\mu](M)$.  By definition,
\[
\begin{array}{l}
{\it ME}[\mu'](M) = \log\frac{\sum_o\max_h \mu'(h,\ell',o)}{\max_h
  \mu'(h,\ell')}\\
{\it ME}[\mu](M) = \log\frac{\sum_\ell\sum_o\max_h
  \mu(h,\ell,o)}{\sum_\ell \max_h\mu(h,\ell)}
\end{array}
\]
Therefore, it suffices to show that
\[
\begin{array}{l}
  (\sum_o\max_h \mu'(h,\ell',o))(\sum_\ell \max_h\mu(h,\ell)) \\
\hspace{8em}-(\max_h\mu'(h,\ell'))(\sum_\ell\sum_o\max_h \mu(h,\ell,o))\ge 0
\end{array}
\]
By Lemma~\ref{lem:mecclemma},
\[
\begin{array}{l}
  (\sum_o\max_h \mu'(h,\ell',o))(\sum_\ell \max_h\mu(h,\ell))\\
\qquad\qquad-(\max_h\mu'(h,\ell'))(\sum_\ell\sum_o\max_h \mu(h,\ell,o))\\
 \qquad = \frac{m_{\ell'}}{n}\sum_\ell \max_h\mu(h,\ell)-\frac{1}{n}(\sum_\ell\sum_o\max_h \mu(h,\ell,o))\\
\qquad \ge\frac{m_{\ell'}}{n}(\sum_\ell \max_h\mu(h,\ell)-\sum_\ell \frac{m_\ell}{m_{\ell'}} \max_h\mu(h,\ell))\\
\qquad  \ge 0
\end{array}
\]
Therefore, we have ${\it ME}[\mu'](M)\ge{\it ME}[\mu](M)$.
\end{proof}

\begin{reftheorem}{\ref{thm:mecck}}
  Let $q$ be a constant.  Then, $B_{\it MECC}$ is
  $\lfloor2^q\rfloor+1$-safety, but it is not $k$-safety for any $k
  \leq \lfloor2^q\rfloor$.
\end{reftheorem}
\begin{proof}
Trivial by Theorem~\ref{thm:cck} and Lemma~\ref{lem:mecceqcc}.
\end{proof}

We define the ``normal form'' of the guessing-entropy-based quantitative
information flow expression.
\begin{definition}[Guessing entropy QIF Normal Form]
Let $M$ be a program without low-security input.  The guessing-entropy
based quantitative information flow ${\it GE}[\mu](M)$ can be written
as the linear expression (over $\mu(h_1),\dots,\mu(h_n)$) $\sum_i
a_i\mu(h_i)$ where $\mu(h_1)\ge\mu(h_2)\ge\dots\ge\mu(h_n)$, and each
$a_i$ is a non-negative integer.  We call this expression $\sum_i
a_i\mu(h_i)$ the {\em normal form} of ${\it GE}[\mu](M)$.
\end{definition}

\begin{lemma}
\label{lem:gecc1}
Let $M$ be a program without low-security input.  Let $\sum_i
a_i\mu(h_i)$ be the normal form of ${\it GE}[\mu](M)$.  Then, for any
$x$ such that $x < |\mathbb{H}|$, we have 
\[
\sum_{i\le x} a_i \le
\frac{1}{2}(x-1)x-\frac{1}{2}(j-2)(j-1)
\]
where $j=|\aset{h\in\aset{h_1,\dots,h_{x+1}}\mid M(h)=M(h_{x+1})}|$.
\end{lemma}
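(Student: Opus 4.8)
The plan is to make the normal-form coefficients explicit and then read off the required bound from a counting argument about ranks within output classes.

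First I would compute the coefficients $a_i$. Using the identity $\mathcal{G}[\mu](X)=\sum_{x} In(\mu,X,x)\mu(x)$ and expanding the conditional guessing entropy class by class exactly as in the derivation of Lemma~\ref{lem:geu}, for a program $M$ without low-security input we have, writing $\mathbb{H}_o=\aset{h\mid o=M(h)}$,
\[
{\it GE}[\mu](M) = \sum_h In(\mu,\mathbb{H},h)\,\mu(h) - \sum_o\sum_{h\in\mathbb{H}_o} In(\mu,\mathbb{H}_o,h)\,\mu(h),
\]
since conditioning on the (deterministic) output only rescales the distribution within each class $\mathbb{H}_o$ and hence preserves the induced ordering. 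Ordering the inputs so that $\mu(h_1)\ge\cdots\ge\mu(h_n)$ gives $In(\mu,\mathbb{H},h_i)=i$, so the normal form is $\sum_i a_i\mu(h_i)$ with
\[
a_i = i - r_i, \qquad r_i := In(\mu,\mathbb{H}_{M(h_i)},h_i),
\]
where $r_i$ is the rank of $h_i$ within its own output class. Note $1\le r_i\le i$, which incidentally re-confirms that each $a_i$ is a non-negative integer.

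The key combinatorial observation concerns the class $o^\ast := M(h_{x+1})$. By definition $j$ counts the elements of $\aset{h_1,\dots,h_{x+1}}$ lying in $\mathbb{H}_{o^\ast}$; since $h_{x+1}$ is one of them, exactly $j-1$ indices $i\le x$ satisfy $M(h_i)=o^\ast$. Because the global ordering is by decreasing probability and is consistent with the ordering inside $\mathbb{H}_{o^\ast}$, these $j-1$ inputs are precisely the top $j-1$ members of the class, so their within-class ranks $r_i$ are exactly $1,2,\dots,j-1$. For every other index $i\le x$ I only use $r_i\ge 1$.

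Finally I would sum and simplify. Separating the two contributions,
\[
\sum_{i\le x} r_i \;\ge\; \bigl(1+2+\cdots+(j-1)\bigr) + \bigl(x-(j-1)\bigr) = \tfrac{1}{2}(j-1)j + x - (j-1),
\]
and therefore
\[
\sum_{i\le x} a_i = \sum_{i\le x} i - \sum_{i\le x} r_i \le \tfrac{1}{2}x(x+1) - \tfrac{1}{2}(j-1)j - x + (j-1) = \tfrac{1}{2}(x-1)x - \tfrac{1}{2}(j-2)(j-1),
\]
as claimed. The only delicate point is the rank identity of the previous paragraph: I must ensure that the fixed tie-breaking order defining the normal form is used consistently both globally and within $\mathbb{H}_{o^\ast}$, so that the first-$x$ members of that class genuinely receive the consecutive ranks $1,\dots,j-1$. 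The remaining algebra is routine.
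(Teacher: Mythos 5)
Your proof is correct and follows essentially the same route as the paper's own argument: both write $a_i = i - r_i$ with $r_i$ the within-class rank, observe that the $j-1$ class-mates of $h_{x+1}$ among $h_1,\dots,h_x$ contribute ranks summing to $\frac{1}{2}(j-1)j$ while every remaining rank is at least $1$, and finish with the same algebra. Your write-up merely makes explicit two points the paper leaves implicit, namely the identification of $a_i$ as a rank difference and the tie-breaking consistency between the global and within-class orderings.
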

\begin{proof}
  By the definition of guessing-entropy-based quantitative information
  flow, we have 
\[
a_i=i-|\aset{h\in\aset{h_1,\dots,h_i}\mid
    M(h)=M(h_i)}|
\]
Therefore, we have
\[
\begin{array}{l}
  \sum_{i\le x} a_i\\ \qquad=\sum_{i\le x}
  (i-|\aset{h\in\aset{h_1,\dots,h_i}\mid M(h)=M(h_i)}|)\\
  \qquad=\frac{1}{2}x(x+1)-\frac{1}{2}(j-1)j\\
\qquad\qquad-\sum_{i\in\aset{i' \leq x\mid
  M(h_{i'})\not=M(h_{x+1})}}|\aset{h\in\aset{h_1,\dots,h_i}\mid
  M(h)=M(h_i)}|\\ 
\qquad\le\frac{1}{2}(x-1)x-\frac{1}{2}(j-2)(j-1)
\end{array}
\]
where $j=|\aset{h\in\aset{h_1,\dots,h_{x+1}}\mid
  M(h)=M(h_{x+1})}|$
\end{proof}

\begin{lemma}
\label{lem:gecc2}
Let $M$ be a program without low-security input.  Let $\sum_i
a_i\mu(h_i)$ be the normal form of ${\it GE}[\mu](M)$.  Then, for any
$x$ such that $x < |\mathbb{H}|$, we have $\sum_{i\le x} a_i \le x a_{x+1}$.
\end{lemma}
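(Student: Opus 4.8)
The plan is to obtain the bound as an immediate consequence of Lemma~\ref{lem:gecc1}, so that almost no new structural work on $M$ is needed. First I would recall from the proof of Lemma~\ref{lem:gecc1} the explicit formula for the normal-form coefficients, namely $a_i = i - |\{h \in \{h_1,\dots,h_i\} \mid M(h) = M(h_i)\}|$. Applying this at $i = x+1$ and writing $j = |\{h \in \{h_1,\dots,h_{x+1}\} \mid M(h) = M(h_{x+1})\}|$ (the same quantity appearing in Lemma~\ref{lem:gecc1}), I get $a_{x+1} = x+1-j$. Hence the target inequality $\sum_{i\le x} a_i \le x\, a_{x+1}$ is exactly $\sum_{i\le x} a_i \le x(x+1-j)$.

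Next I would chain this with Lemma~\ref{lem:gecc1}, which gives $\sum_{i\le x} a_i \le \frac{1}{2}(x-1)x - \frac{1}{2}(j-2)(j-1)$. Thus it suffices to prove the purely algebraic inequality $\frac{1}{2}(x-1)x - \frac{1}{2}(j-2)(j-1) \le x(x+1-j)$. Subtracting the left side from the right and simplifying, the difference equals $\frac{1}{2}\bigl((x-j)^2 + 3(x-j) + 2\bigr) = \frac{1}{2}(x-j+1)(x-j+2)$, a nicely factored quadratic in $x-j$.

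The final step is to check nonnegativity of this product, and this is where the one genuinely combinatorial fact enters: $j$ counts a subset of the $x+1$ inputs $h_1,\dots,h_{x+1}$, so $j \le x+1$, whence $x-j+1 \ge 0$ and $x-j+2 \ge 1 > 0$, making the product nonnegative. Combining the three steps yields $\sum_{i\le x} a_i \le x\, a_{x+1}$. I expect the only mildly delicate point to be spotting the factorization of the quadratic difference and confirming $j \le x+1$; both are elementary, since the substantive counting estimate has already been carried out in Lemma~\ref{lem:gecc1}.
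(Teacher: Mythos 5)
Your proposal is correct and takes essentially the same route as the paper's proof: both chain Lemma~\ref{lem:gecc1} with the identity $a_{x+1} = x+1-j$ and reduce the claim to the same algebraic inequality. The only difference is in the last step, where the paper completes the square and appeals to ``elementary numerical analysis'' over integers, while you factor the difference as $\frac{1}{2}(x-j+1)(x-j+2)$ and conclude from $j \le x+1$; your factorization is arguably the cleaner way to finish.
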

\begin{proof}
  By Lemma~\ref{lem:gecc1}, we have 
\[
\sum_{i\le x} a_i \le
  \frac{1}{2}(x-1)x-\frac{1}{2}(j-2)(j-1)
\]
 where $j=|\aset{h\in\aset{h_1,\dots,h_{x+1}}\mid
 M(h)=M(h_{x+1})}|$, that is, $j=x+1-a_{x+1}$.  Therefore,
 it suffices to show that $\frac{1}{2}(x-1)x-\frac{1}{2}(j-2)(j-1)\le
 x a_{x+1}$.  Then,
\[
x a_{x+1} -\frac{1}{2}(x-1)x+\frac{1}{2}(j-2)(j-1)
=\frac{1}{2}((x+\frac{3-2j}{2})^2-\frac{1}{4})
\]
By elementary numerical analysis, it can be shown that for integers
$x$ and $j$ such that $x+1\ge j$, $\frac{1}{2}((x+ \frac{(3-2j)}{2})^2
- \frac{1}{4})$ attains its minimum value $0$ when $x = j-1$.
Therefore, we have $\sum_{i\le x} a_i \le x a_{x+1}$.
\end{proof}

\begin{lemma}
\label{lem:gecc3}
Let $M$ be a program without low-security input.  Let $\mu$ be a
distribution.  Let $h_1,\dots, h_n$ be such that
$\mu(h_1)=\mu(h_2)=\dots=\mu(h_{i-1})>\mu(h_{i})\ge\dots\ge\mu(h_n)$.
Let $\mu'$ be a distribution such that
$\frac{(i-1)\mu(h_1)+\mu(h_i)}{i}=\mu'(h_1)=\dots=\mu'(h_i)$, and
$\forall x. x>i\Rightarrow \mu'(h_x)=\mu(h_x)$.  Then, we have ${\it
  GE}[\mu](M)\le{\it GE}[\mu'](M)$.
\end{lemma}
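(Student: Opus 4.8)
The plan is to reduce the claim to the already-established Lemma~\ref{lem:gecc2} by exploiting the ``normal form'' of ${\it GE}[\mu](M)$. The crucial observation is that the ordering $h_1,\dots,h_n$ is consistent (descending probability) with \emph{both} $\mu$ and $\mu'$, so the \emph{same} coefficients $a_k$ represent both information flows. Write $a=\mu(h_1)$ and $b=\mu(h_i)$, so the new common value is $p'=\frac{(i-1)a+b}{i}$. Since $a>b$ we have $b<p'<a$; in particular (when $i<n$) $p'>b=\mu(h_i)\ge\mu(h_{i+1})=\mu'(h_{i+1})$, so $\mu'(h_1)=\dots=\mu'(h_i)=p'\ge\mu'(h_{i+1})\ge\dots\ge\mu'(h_n)$. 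Hence $h_1,\dots,h_n$ is descending for $\mu'$ as well, and $\mu'$ is a genuine distribution since the perturbation preserves total mass: $(i-1)(p'-a)+(p'-b)=0$.

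Because the normal-form coefficients $a_k = k-|\{h\in\{h_1,\dots,h_k\}\mid M(h)=M(h_k)\}|$ (from the proof of Lemma~\ref{lem:gecc1}) depend only on the chosen ordering and on $M$, and not on the actual probability values, using the common ordering $h_1,\dots,h_n$ gives ${\it GE}[\mu](M)=\sum_k a_k\mu(h_k)$ and ${\it GE}[\mu'](M)=\sum_k a_k\mu'(h_k)$ with \emph{identical} $a_k$. I would then compute the difference directly. Only the top $i$ entries move: $\mu'(h_k)-\mu(h_k)=p'-a=\frac{b-a}{i}$ for $k\le i-1$, and $\mu'(h_i)-\mu(h_i)=p'-b=\frac{(i-1)(a-b)}{i}$, while the difference vanishes for $k>i$. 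Substituting yields
\[
{\it GE}[\mu'](M)-{\it GE}[\mu](M)=\frac{a-b}{i}\left((i-1)a_i-\sum_{k=1}^{i-1}a_k\right).
\]

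Since $a>b$ the prefactor $\frac{a-b}{i}$ is strictly positive, so it remains only to show $\sum_{k=1}^{i-1}a_k\le (i-1)a_i$. This is exactly Lemma~\ref{lem:gecc2} instantiated at $x=i-1$ (valid because $1\le i-1\le n-1<|\mathbb{H}|$), which gives $\sum_{k\le i-1}a_k\le (i-1)a_i$. Therefore the bracketed quantity is nonnegative, the difference is nonnegative, and ${\it GE}[\mu](M)\le{\it GE}[\mu'](M)$.

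The only genuinely delicate point is the shared-normal-form step: one must check that averaging the top tie block $h_1,\dots,h_{i-1}$ together with $h_i$ does not disturb the descending ordering (which it does not, precisely because $b<p'<a$), so that $\mu$ and $\mu'$ are represented by the same coefficients. Everything after that is bookkeeping that collapses the statement onto Lemma~\ref{lem:gecc2}.
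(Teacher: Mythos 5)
Your proof is correct and is essentially the same as the paper's: both represent ${\it GE}[\mu](M)$ and ${\it GE}[\mu'](M)$ via the normal form with identical coefficients $a_k$, compute the difference as $\frac{1}{i}\bigl((i-1)a_i-\sum_{k\le i-1}a_k\bigr)\bigl(\mu(h_1)-\mu(h_i)\bigr)$, and conclude nonnegativity from Lemma~\ref{lem:gecc2}. Your explicit check that the ordering remains descending under $\mu'$ (so the coefficients really are shared) is a detail the paper compresses into ``by the construction of $\mu'$,'' but it is the same argument.
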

\begin{proof}
Let $\sum_j a_j\mu(h_j)$ be the normal form of ${\it
GE}[\mu](M)$.  By the construction of $\mu'$, $\sum_j a_j\mu'(h_j)$ is the
normal form of ${\it GE}[\mu'](M)$.  Therefore,
\[
\begin{array}{l}
  {\it GE}[\mu'](M)-{\it GE}[\mu](M)\\ 
\qquad=\sum_j a_j\mu'(h_j)-\sum_j a_j\mu(h_j)\\
\qquad  =(a_1+\dots+a_{i})\frac{(i-1)\mu(h_1)+\mu(h_i)}{i}-(a_1+\dots+a_{i-1})\mu(h_1)-a_i\mu(h_i)\\
\qquad  =\frac{1}{i}((i-1)a_i-A)(\mu(h_1)-\mu(h_{i}))
\end{array}
\]
where $A=a_1+\dots+a_{i-1}$.  Since we have $(i-1)a_i
-(a_1+\dots+a_{i-1})\ge 0$ by Lemma~\ref{lem:gecc2}, and
$\mu(h_1)-\mu(h_{i})>0$, we have
\[
\frac{1}{i}((i-1)a_i-A)(\mu(h_1)-\mu(h_{i}))\ge 0
\]
Therefore, we have ${\it GE}[\mu'](M)\ge{\it GE}[\mu](M)$.
\end{proof}

\begin{reflemma}{\ref{lem:gecc}}
We have $\max_\mu {\it GE}[\mu](M)=\max_{\ell'} {\it GE}[U\otimes
\dot{\ell'}](M)$ where $U\otimes \dot{\ell'}$ denotes $\lambda
h,\ell.{\sf if}\;\ell=\ell'\;{\sf then}\; U(h)\;{\sf else}\;0$.
\end{reflemma}
\begin{proof}
\[
\begin{array}{rcl}
{\it GE}[\mu](M)&=&\sum_\ell\mu(\ell)\sum_i i\mu(h_i |
\ell)-\sum_\ell\sum_o\mu(\ell,o)\sum_i i\mu(h_i| \ell, o)\\
&=&\sum_\ell\mu(\ell)(\sum_i i\mu(h_i |\ell)-\sum_o\sum_i i\mu(h_i,o|\ell))\\
&=&\sum_\ell\mu(\ell){\it GE}[\lambda h.\mu(h|\ell)](M(\ell))
\end{array}
\]
By Lemma~\ref{lem:gecc3}, we have $\max_\mu {\it
  GE}[\mu](M(\ell))={\it GE}[U](M(\ell))$.  Therefore, we have
$\max_\mu {\it GE}[\mu](M)=(\max_{\ell'} {\it GE}[U\otimes
\dot{\ell'}](M))$.
\end{proof}

\begin{lemma}
\label{lem:geccmono}
  Let $M$ and $M'$ be programs such that
  $\sembrack{M'}=\sembrack{M}\cup\aset{((h',\ell'),o)}$ and
  $(h',\ell')\not\in\dom(\sembrack{M})$.  Then, we have $\max_\ell
  {\it GE}[U\otimes\ell](M)\le\max_\ell {\it GE}[U\otimes\ell](M')$.
\end{lemma}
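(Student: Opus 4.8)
The plan is to reduce the statement to the low-security-input-free monotonicity result of Lemma~\ref{lem:gemono2}, exploiting the fact that $\max_\ell {\it GE}[U\otimes\dot{\ell}]$ decomposes into independent per-low-input contributions. First I would observe that for any fixed low-security input $\ell$, the distribution $U\otimes\dot{\ell}$ places a point mass on $\ell$ and the uniform distribution on the high inputs. Hence, by the decomposition established in the proof of Lemma~\ref{lem:gecc},
\[
{\it GE}[U\otimes\dot{\ell}](M) = {\it GE}[U](M(\ell)),
\]
where $M(\ell) = \lambda h. M(h,\ell)$ is the restriction of $M$ to the low input $\ell$, viewed as a program without low-security inputs whose semantics is $\sembrack{M(\ell)} = \aset{(h,o)\mid ((h,\ell),o)\in\sembrack{M}}$. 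Thus each term of the maximum depends only on the traces of $M$ that share the low input $\ell$.

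Next I would prove the pointwise inequality ${\it GE}[U\otimes\dot{\ell}](M)\le{\it GE}[U\otimes\dot{\ell}](M')$ for every $\ell$, by a case split on whether $\ell=\ell'$. If $\ell\neq\ell'$, then adding the trace $((h',\ell'),o)$ does not affect the restriction to $\ell$, so $\sembrack{M'(\ell)}=\sembrack{M(\ell)}$ and the two quantities are equal. If $\ell=\ell'$, then $\sembrack{M'(\ell')}=\sembrack{M(\ell')}\cup\aset{(h',o)}$, and the hypothesis $(h',\ell')\notin\dom(\sembrack{M})$ guarantees $h'\notin\dom(\sembrack{M(\ell')})$; Lemma~\ref{lem:gemono2} then yields ${\it GE}[U](M(\ell'))\le{\it GE}[U](M'(\ell'))$, i.e.\ ${\it GE}[U\otimes\dot{\ell'}](M)\le{\it GE}[U\otimes\dot{\ell'}](M')$.

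Finally, since the inequality ${\it GE}[U\otimes\dot{\ell}](M)\le{\it GE}[U\otimes\dot{\ell}](M')$ holds for all $\ell$, taking the maximum over $\ell$ preserves it: choosing $\ell^\ast$ to attain the left-hand maximum (over the low inputs of $M$, which form a subset of those of $M'$) gives
\[
\max_\ell {\it GE}[U\otimes\dot{\ell}](M) = {\it GE}[U\otimes\dot{\ell^\ast}](M)\le{\it GE}[U\otimes\dot{\ell^\ast}](M')\le\max_\ell {\it GE}[U\otimes\dot{\ell}](M'),
\]
which is the desired conclusion. I expect no serious obstacle here, as the substance lies in Lemma~\ref{lem:gemono2}; the only point requiring care is verifying the domain side-condition $h'\notin\dom(\sembrack{M(\ell')})$ so that the monotonicity lemma applies to the restricted program $M(\ell')$.
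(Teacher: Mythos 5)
Your proof is correct and takes essentially the same approach as the paper: both establish the pointwise inequality ${\it GE}[U\otimes\dot{\ell}](M)\le{\it GE}[U\otimes\dot{\ell}](M')$ for every $\ell$ via Lemma~\ref{lem:gemono2} and then pass to the maximum over $\ell$. The paper's proof is simply a terser version of yours, leaving implicit the case split on $\ell=\ell'$ versus $\ell\neq\ell'$ and the identification ${\it GE}[U\otimes\dot{\ell}](M)={\it GE}[U](M(\ell))$ that you spell out.
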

\begin{proof}
By Lemma~\ref{lem:gemono2}, for any $\ell$, we have ${\it
GE}[U\otimes\ell](M)\le {\it GE}[U\otimes\ell](M')$.  Therefore,
$\max_\ell {\it GE}[U\otimes\ell](M)\le\max_\ell {\it
GE}[U\otimes\ell](M')$.
\end{proof}

\begin{reftheorem}{\ref{thm:gecck}}
  Let $q$ be a constant.  If $q\ge\frac{1}{2}$, then, $B_{\it GECC}$
  is $\lfloor\frac{(\lfloor q\rfloor +1)^2}{\lfloor q\rfloor +1 -q}
  \rfloor +1$-safety, but it is not $k$-safety for any $k \leq
  \lfloor\frac{(\lfloor q\rfloor +1)^2}{\lfloor q\rfloor +1 -q}
  \rfloor$.  Otherwise, $q<\frac{1}{2}$ and $B_{\it GECC}$ is
  $2$-safety, but it is not $1$-safety.
\end{reftheorem}
\begin{proof}
By Lemma~\ref{lem:gecc}, $(M,q) \in B_{\it GECC}$ iff $\max_{\ell'}
{\it GE}[U\otimes \dot{\ell'}](M) \leq q$.\footnote{Therefore, for
  programs without low security inputs, this theorem follows from
  Theorem~\ref{thm:gek}.  But, we show that the theorem holds also for
  programs with low security inputs.}  We prove for the case $q \geq
\frac{1}{2}$ by a ``reduction'' to the result of Theorem~\ref{thm:gek}.
The case for $q < \frac{1}{2}$ follows by essentially the same
argument.

First, we show that $B_{\it GECC}$ is $\lfloor\frac{(\lfloor q\rfloor
  +1)^2}{\lfloor q\rfloor +1 -q} \rfloor +1$-safety in this case.  By
the definition of $k$-safety, for any $M$ such that $M\not\in B_{\it
  GECC}$, there exists $T$ such that
\begin{enumerate}
\item $T\subseteq \sembrack{M}$
\item $|T|\le \lfloor\frac{(\lfloor q\rfloor +1)^2}{\lfloor q\rfloor
    +1 -q} \rfloor +1$
\item $\forall M'.T\subseteq\sembrack{M'}\Rightarrow M'\not\in B_{\it
    GECC}$
\end{enumerate}
Suppose that $M\not\in B_{\it GECC}$.  By Lemma~\ref{lem:gecc}, it
must be the case that there exists $\ell'$ such that $\max_\mu {\it
  GE}[\mu](M)={\it GE}[U](M(\ell'))$.  Then, by
Lemma~\ref{lem:gemax3}, there exists $T\subseteq\sembrack{M(\ell')}$
such that $|T|\le \lfloor\frac{(\lfloor q\rfloor +1)^2}{\lfloor
  q\rfloor +1 -q} \rfloor +1$, and ${\it GE}[U](M')>q$ where
$\sembrack{M'}=T$.  Let $T' = \aset{((h,\ell'),o) \mid (h,o) \in T}$.
Then, we have ${\it GE}[U](M'')>q$ where $\sembrack{M''}=T'$.  Finally,
by Lemma~\ref{lem:geccmono}, we have that for any $M'$ such that
$T'\subseteq\sembrack{M'}$, $M'\not\in B_{\it GECC}$, and so $B_{\it
  GECC}$ is $\lfloor\frac{(\lfloor q\rfloor +1)^2}{\lfloor q\rfloor +1
  -q} \rfloor +1$-safety.

To see that $B_{\it GECC}$ is not $k$-safety for any $k \leq
\lfloor\frac{(\lfloor q\rfloor +1)^2}{\lfloor q\rfloor +1 -q}
\rfloor$, recall Theorem~\ref{thm:gek} that $B_{\it GE}[U]$ is not
$k$-safety for such $k$ (even) for low-security-input-free programs.
Therefore, the result follows by Lemma~\ref{lem:gecc}.
\end{proof}

\begin{reftheorem}{\ref{thm:be3ni}}
$(M,q) \in B_{\it BE1CC}[h,\ell]$ iff $M(\ell)$ is non-interferent.
\end{reftheorem}
\begin{proof}
  We prove that if $\forall\mu.{\it BE}[\aseq{\mu,h,\ell}](M)\le q$
  then $M(\ell)$ is non-interferent.  The other direction follows from
  Theorem~\ref{thm:nonint}.  We prove the contraposition.  Suppose
  $M(\ell)$ is interferent, that is, there exist $h_0$ and $h_1$ such
  that $M(h_0,\ell)\not=M(h_1,\ell)$. If $M(h,\ell)\not=M(h_1,\ell)$,
  then let $\mu'$ be a distribution such that
  $\mu'(h_1)=1-\frac{1}{\lfloor 2^q\rfloor +1}$.  Otherwise, let
  $\mu'$ be a distribution such that $\mu'(h_0)=1-\frac{1}{\lfloor
    2^q\rfloor +1}$.  Then, we have
\[
{\it BE}[\aseq{\mu',h,\ell}](M)  \ge \log(\lfloor 2^q\rfloor +1)  > q
\]
\end{proof}

\begin{reftheorem}{\ref{thm:be4ni}}
$(M,q) \in B_{\it BE2CC}$ iff $M$ is non-interferent.
\end{reftheorem}
\begin{proof}
Straightforward from Theorem~\ref{thm:be3ni} and the fact that a program $M$ is
non-interferent iff for all $\ell$, $M(\ell)$ is non-interferent.
\end{proof}

\paragraph*{\bf Notation}
In the proofs below, for convenience, we sometimes use large letters
$H$, $L$, $O$, etc.~to range over boolean variables as well as generic
random variables.  Also, we assume that variables $H$, $H'$, $H_1$,
etc.~are high security boolean variables and $L$, $L'$, $L_i$, $O$,
$O_1$, $O_i$, etc.~are low security boolean variables.

\paragraph*{\bf Majority SAT}
The following PP-hardness results (Theorems~\ref{thm:ppse},
\ref{thm:ppme}, \ref{thm:ppge}, \ref{thm:ppcc}, \ref{thm:ppbe1},
\ref{thm:ppbe2}, \ref{thm:ppmecc}, and \ref{thm:ppgecc}) are proven by
a reduction from MAJSAT, which is a PP-complete problem.  MAJSAT is
defined as follows.
\[
\textrm{MAJSAT}=\aset{\phi\mid \#SAT(\phi)>2^{n-1}}
\]
where $n$ is the number of variables in the boolean formula $\phi$, and
$\#SAT(\phi)$ is the number of satisfying assignments of $\phi$.

\begin{figure}[t]
\[
\begin{array}{l}
S(\psi)\equiv\\
\ \ {\sf case}\;(H',\psi,\vect{H})\\
\ \ \ \ {\sf when}\;({\sf true},{\sf true},\_)\;{\sf then}\;\vect{O}:=\vect{\sf true};O':={\sf true};O'':={\sf true} \\
\ \ \ \ {\sf when}\;({\sf true},{\sf false},\_)\;{\sf then}\;\vect{O}:={\vect H};O':={\sf true};O'':={\sf false}\\
\ \ \ \ {\sf when}\;({\sf false},\_,\vect{\sf true})\;{\sf then}\;{\vect O}:=\vect{\sf true};O':={\sf false};O'':={\sf false}\\
\ \ \ \ {\sf else}\\
\ \ \ \ \ \ {\sf if}\:H_1\\
\ \ \ \ \ \ \ \ {\sf then}\:\vect{O}:=\vect{\sf true};O':={\sf true};O'':={\sf true}\\
\ \ \ \ \ \ \ \ {\sf else}\hspace{5pt}{\vect{O}:={\vect H};O':={\sf false};O'':={\sf false}}
\end{array}
\]
where $H'$, $\vect{H}= H_1, \dots, H_n$, and $O'$, $O''$, $\vec{O}$ are distinct.
\caption{The Boolean Program for Lemma~\ref{lem:semonotone} and Theorem~\ref{thm:ppse}.}
\label{fig:boolenc}
\end{figure}

\begin{lemma}
\label{lem:semonotone}
Let $\vect H$ and $H'$ be distinct boolean random variables.  Let $n$
and $m$ be any non-negative integers such that $n\le 2^{|{\vect H}|}$
and $m\le 2^{|\vect{H}|}$.  Let $\phi_m$ (resp. $\phi_n$) be a formula
over $\vect H$ having $m$ (resp. $n$) satisfying assignments. Then,
$n\le m$ iff ${\it SE}[U](M_m)\le{\it SE}[U](M_n)$.  where $M_n\equiv
S(\phi_n)$, $M_m\equiv S(\phi_m)$, and $S$ is defined in
Figure~\ref{fig:boolenc}.\footnote{The encoding $S$ is defined so
    that MAJSAT is reduced to a bounding problem with a rational
    upper-bound $q$ in Theorem~\ref{thm:ppse} below.  A simpler
    encoding is possible if we were to do a reduction with a
    non-rational $q$.}
\end{lemma}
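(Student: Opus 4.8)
The plan is to reduce the statement to the output distribution of $S(\psi)$ and then to the Shannon entropy of that distribution via Lemma~\ref{lem:detse}. Since $S(\psi)$ has no low-security input, ${\it SE}[U](S(\psi)) = \mathcal{H}[U](O)$, where $O = (\vect O, O', O'')$ ranges over output triples. So the first task is to enumerate, for each of the $2N$ inputs $(H',\vect H)$ (writing $N = 2^{|\vect H|}$), the resulting triple, and to tabulate the multiplicity of each distinct output.

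First I would carry out the case analysis on the four branches of $S$, letting $s$ be the number of satisfying assignments of $\psi$ (so $s=m$ for $M_m$ and $s=n$ for $M_n$). The two $H'={\sf true}$ branches send the $s$ satisfying assignments to the single triple $(\vect{\sf true},{\sf true},{\sf true})$ and send each of the $N-s$ non-satisfying assignments to its own distinct triple $(\vect H,{\sf true},{\sf false})$. The $H'={\sf false}$ branches send $\vect H=\vect{\sf true}$ to $(\vect{\sf true},{\sf false},{\sf false})$, send the $N/2-1$ inputs with $H_1={\sf true}$ and $\vect H\neq\vect{\sf true}$ again to $(\vect{\sf true},{\sf true},{\sf true})$, and send each of the $N/2$ inputs with $H_1={\sf false}$ to its own distinct triple $(\vect H,{\sf false},{\sf false})$. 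Checking that these four families are pairwise disjoint (they are separated by the $O',O''$ components and by $\vect O$) shows that exactly one output, namely $A=(\vect{\sf true},{\sf true},{\sf true})$, is hit by $a := s + N/2 - 1$ inputs, while the remaining $2N-a$ inputs each hit their own singleton output; a count check confirms the multiplicities sum to $2N$.

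From here the entropy computation is routine: each input has mass $\frac{1}{2N}$, so $A$ has mass $\frac{a}{2N}$ and there are $2N-a$ singleton outputs of mass $\frac{1}{2N}$ each, giving
\[
{\it SE}[U](M_s) = \mathcal{H}[U](O) = -\frac{a}{2N}\log\frac{a}{2N} - (2N-a)\frac{1}{2N}\log\frac{1}{2N} = \log(2N) - \frac{a}{2N}\log a,
\]
with $a = s + N/2 - 1$. Writing $a_m = m + N/2 - 1$ and $a_n = n + N/2 - 1$, the inequality ${\it SE}[U](M_m)\le{\it SE}[U](M_n)$ is then equivalent, after cancelling the common $\log(2N)$ and the positive factor $\frac{1}{2N}$, to $a_m\log a_m \ge a_n\log a_n$, i.e.\ to $f(a_m)\ge f(a_n)$ where $f(a)=a\log a$.

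The crux is the monotonicity of $f$. Since $f'(a)=\log a + \log e > 0$ for $a\ge 1$, and since $a_s = s + N/2 - 1 \ge N/2 - 1 \ge 1$ whenever $|\vect H|\ge 2$, the function $f$ is strictly increasing on the range of $a_m$ and $a_n$. Hence $f(a_m)\ge f(a_n)$ holds iff $a_m\ge a_n$, i.e.\ iff $m\ge n$, i.e.\ iff $n\le m$, which is exactly the claim. The main obstacle I anticipate is the bookkeeping in the case analysis: getting the multiplicity of $A$ exactly right (the $+\,N/2-1$ contribution from the $H'={\sf false},\,H_1={\sf true}$ branch) and verifying disjointness of the four output families, since the construction is engineered precisely so that ${\it SE}[U](M_s)$ collapses to the single strictly monotone term $-\frac{a}{2N}\log a$ in $s$.
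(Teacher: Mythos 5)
Your proof is correct, and its skeleton is the same as the paper's: analyze the branches of $S(\psi)$ to show that exactly one output (the all-true triple) is produced by $a_s = s + 2^{|\vect H|-1}-1$ inputs while every other input yields its own distinct output, then compute the Shannon entropy of the output distribution via Lemma~\ref{lem:detse}. Where you genuinely differ is the finishing step. The paper argues the two directions of the ``iff'' separately, by chains of inequalities on expressions of the form $p\log\frac{1}{p}+(1-p)\log 2^{x+1}$; you instead collapse the entropy to the closed form $\log(2N)-\frac{a}{2N}\log a$ and invoke strict monotonicity of $f(a)=a\log a$ on $[1,\infty)$, which yields both directions at once. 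Your finish is in fact more solid than the paper's own: the paper's displayed intermediate bounds claim, e.g., $p\log\frac{1}{p}-q\log\frac{1}{q}\ge p\log\frac{q}{p}$ for $p\le q\le\frac{1}{2}$ (and the analogous step replacing $\log p^p$ by $\log p^q$ in the converse direction), but these inequalities are reversed, since $\log p<0$ on that range; moreover the paper's side condition $q\le\frac{1}{2}$ can fail outright when $m>2^{|\vect H|-1}+1$. The paper's final conclusions are nevertheless true, precisely because they are equivalent to the monotonicity of $a\log a$ that you prove directly --- so your route is the clean way to repair its proof. Finally, the caveat you flag, that strictness needs $a\ge 1$, i.e.\ $|\vect H|\ge 2$, is real: for $|\vect H|=1$, $m=0$, $n=1$ both programs have entropy $2$, so the stated ``iff'' fails there; the paper's proof silently has the same gap, and since the MAJSAT reduction in Theorem~\ref{thm:ppse} never exercises that case, making the hypothesis explicit as you do is the right way to state the lemma.
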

\begin{proof}
  First, we explain the construction $S(\psi)$ of
  Figure~\ref{fig:boolenc}.  Here, we use ML-like case statements
  (i.e., earlier cases have the precedence).  It is easy to see that the case
  statements can be written as nested if-then-else statements. Note
  that $\vect{O}=\vect{\sf true}$, $O'={\sf true}$, and $O''={\sf
    true}$ iff either $H'\wedge \psi$, or $H'\wedge H_1$ and at least
  one of $H_2,\dots,H_n$ is ${\sf false}$.  For other inputs,
  $S(\psi)$ returns disjoint outputs.  Therefore, the number of inputs
  $h$ such that $S(\psi)(h) = \vect{\sf true}$ is $\#SAT(\psi) +
  2^{|\vect{H}|-1} - 1$, and for the rest of the $2^{|\vect{H}|+1} -
  (\#SAT(\psi) + 2^{|\vect{H}|-1} - 1)$ inputs, $S(\psi)$ returns
  disjoint outputs different from $\vect{\sf true}$.

Therefore, 
\[
\begin{array}{rcl}
  {\it SE}[U](M_n)&=&\frac{n + 2^{x-1}-1}{2^{x+1}}\log\frac{2^{x+1}}{n + 2^{x-1}-1} + \frac{2^x - n + 2^{x-1}+1}{2^{x+1}}\log 2^{x+1}\\
  {\it SE}[U](M_m)&=&\frac{m + 2^{x-1}-1}{2^{x+1}}\log\frac{2^{x+1}}{m + 2^{x-1}-1} + \frac{2^x - m + 2^{x-1}+1}{2^{x+1}}\log 2^{x+1}
\end{array}
\]
where $x=|{\vect H}|$.
\begin{itemize}
\item $\Rightarrow$

  Suppose $n\le m\le 2^{|\vect H|}$.  Let $x=|{\vect H}|$, and let $p$
  and $q$ be positive real numbers such that
  $p=\frac{n+2^{x-1}-1}{2^{x+1}}$ and $q=\frac{m+2^{x-1}-1}{2^{x+1}}$.
  We have $0\le p\le q\le \frac{1}{2}$.  Therefore,
\[
\begin{array}{l}
  {\it SE}[U](M_n)-{\it SE}[U](M_m)\\
  \qquad=p\log\frac{1}{p} + (1-p)\log 2^{x+1}
- q\log\frac{1}{q} - (1-q)\log 2^{x+1}\\
  \qquad \ge p\log(\frac{q}{p}) +(q-p)\log 2^{x+1}\\
  \qquad\ge 0
\end{array}
\]

\item $\Leftarrow$

  We prove the contraposition.  Suppose $m< n\le 2^{|\vect H|}$.  Let
  $x=|{\vect H}|$, and let $p$ and $q$ be positive real numbers such
  that $p=\frac{n+2^{x-1}-1}{2^{x+1}}$ and
  $q=\frac{m+2^{x-1}-1}{2^{x+1}}$.  We have $0\le q< p\le
  \frac{1}{2}$.  Therefore,
\[
\begin{array}{l}
  {\it SE}[U](M_m)-{\it SE}[U](M_n)\\
\qquad=\log(\frac{1}{q})^q + \log p^p + ((1-q)-(1-p))\log 2^{x+1}\\
\qquad\ge\log(\frac{1}{q})^q + \log p^q + (p-q)\log 2^{x+1}\\
\qquad\ge (p-q)\log 2^{x+1}\\
\qquad> 0
\end{array}
\]
\end{itemize}
\end{proof}

\begin{reftheorem}{\ref{thm:ppse}}
$\text{PP}\subseteq B_{\it SE}[U]$
\end{reftheorem}
\begin{proof}
Let $\phi$ be a boolean formula.  Let $\psi$ be a boolean formula such
that $\#SAT(\psi)=2^{n-1}+1$ where $n$ is the number of variables in
$\phi$.  Let $q$ be the number such that
\[
\begin{array}{rcl}
  q&=&{\it SE}[U](S(\psi))\\
&=&\frac{2^{n-1}+1 + 2^{n-1}-1}{2^{n+1}}\log\frac{2^{n+1}}{2^{n-1}+1 + 2^{n-1}-1} + \frac{2^{n} -(2^{n-1}+1) + 2^{n-1}+1}{2^{n+1}}\log 2^{n+1}\\
 &=&\frac{1}{2} + \frac{n+1}{2}
\end{array}
\]
where $S$ is defined in Figure~\ref{fig:boolenc}.
Then,
\[
\begin{array}{rcl}
  (S(\phi),q)\in B_{\it SE}[U](S(\phi))&\textrm{iff}& {\it
  SE}[U](S(\phi)) \leq {\it SE}[U](S(\psi))\\
 & \textrm{iff}&\#SAT(\phi)\ge\#SAT(\psi)\\
 &\textrm{iff}& \phi\in\textrm{MAJSAT}
\end{array}
\]
by Lemma~\ref{lem:semonotone}.  Therefore, we can decide if
$\phi\in\textrm{MAJSAT}$ by deciding if ${\it SE}[U](S(\phi))\le q$.
Note that the boolean program $S(\phi)$ and $q$ can be
constructed in time polynomial in the size of $\phi$.  Therefore, this
is a reduction from \textrm{MAJSAT} to $B_{\it SE}[U]$.
\end{proof}

\begin{figure}[t]
\[
\begin{array}{l}
T(\phi)=\\
\ \ {\sf if}\;\phi\vee H'\\
\ \ \ \ \ \ {\sf then}\;O_f:={\sf true};\vect{O}:=\vect{{\sf false}}\\
\ \ \ \ \ \ {\sf else}\;O_f:={\sf false};\vect{O}:=\vect{H}
\end{array}
\]
where $\vect{H}$ and $H'$ are distinct, and $O_f$ and $\vect O$ are
distinct.
\caption{The Boolean Program for Lemma~\ref{lem:tme}, Lemma~\ref{lem:memonotone}, and Theorem~\ref{thm:ppme}}
\label{fig:boolenc2}
\end{figure}

\begin{lemma}
\label{lem:tme}
Let $\vect{H}$ and $H'$ be distinct boolean variables.  Let $\phi$ be
a boolean formula.  Then, we have ${\it
  ME}[U](T(\phi))=\log(\#SAT(\neg\phi)+1)$ where $T$ is defined in
Figure~\ref{fig:boolenc2}.
\end{lemma}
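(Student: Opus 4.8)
The plan is to reduce the whole statement to a count of the distinct outputs of $T(\phi)$. First I would recall that for a program $M$ without low-security inputs the min-entropy-based flow under the uniform distribution is just the logarithm of the number of reachable outputs: combining Lemma~\ref{lem:ccme} (giving ${\it ME}[U](M) = {\it CC}(M)$) with Lemma~\ref{lem:ccloglow} (giving ${\it CC}(M) = \max_{\ell} \log|M[\mathbb{H},\ell]|$, which degenerates to $\log|\mathbb{O}|$ in the absence of low-security inputs, where $\mathbb{O}$ is the set of outputs of $M$). Hence it suffices to show that $T(\phi)$ has exactly $\#SAT(\neg\phi) + 1$ distinct outputs.

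Next I would read the two branches of $T(\phi)$, whose high-security input is the pair $(\vect{H}, H')$ ranging over $2^{|\vect H|+1}$ values. On the then-branch, taken exactly when $\phi \vee H'$ holds, the program emits the single fixed output $(O_f = {\sf true}, \vect{O} = \vect{\sf false})$; this branch is always reachable, since every input with $H' = {\sf true}$ takes it, so it contributes exactly one output. On the else-branch, taken exactly when $H' = {\sf false}$ and $\phi$ is falsified, the program copies $\vect{H}$ into $\vect{O}$ and sets $O_f = {\sf false}$, so each falsifying assignment $\vect{h}$ of $\phi$ yields the output $(O_f = {\sf false}, \vect{O} = \vect{h})$, and there are exactly $\#SAT(\neg\phi)$ such assignments.

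I would then establish distinctness across the two families: every then-branch output has $O_f = {\sf true}$ whereas every else-branch output has $O_f = {\sf false}$, so the single then-branch output collides with none of the else-branch outputs, and the else-branch outputs are pairwise distinct because they carry distinct $\vect{H}$ values in the $\vect{O}$ component. Therefore the total number of distinct outputs is $1 + \#SAT(\neg\phi)$, and ${\it ME}[U](T(\phi)) = \log(\#SAT(\neg\phi) + 1)$ follows immediately from the first paragraph.

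The argument is essentially bookkeeping, so I do not anticipate a genuine obstacle. The one point requiring care is the correct partition of the inputs into the two branches and the observation that the flag $O_f$ is precisely what separates the two output families and rules out accidental collisions; in particular one must confirm that the then-branch is nonempty, so that the extra "$+1$" output is indeed present.
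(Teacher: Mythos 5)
Your proposal is correct and follows essentially the same route as the paper: count the distinct outputs of $T(\phi)$ as $\#SAT(\neg\phi)+1$ (one shared output from the then-branch, plus one distinct output per falsifying assignment from the else-branch, separated by the flag $O_f$) and then invoke the fact that ${\it ME}[U]$ of a low-security-input-free program is the logarithm of its number of outputs. The only cosmetic difference is that the paper cites Lemma~\ref{lem:mel} for that last fact while you derive it from Lemma~\ref{lem:ccme} and Lemma~\ref{lem:ccloglow}, a combination the paper itself uses in the proof of Lemma~\ref{lem:memonotone}.
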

\begin{proof}
  It is easy to see that the number of outputs of $T(\phi)$ is equal
  to the number of satisfying assignment to $\neg \phi$ plus $1$.
  Therefore, it follows from Lemma~\ref{lem:mel} that ${\it
    ME}[U](T(\phi)) = \log(\#SAT(\neg\phi)+1)$.
\end{proof}

\begin{lemma}
\label{lem:memonotone}
Let $\vect H$ and $H'$ be distinct boolean random variables.  Let $m$
and $n$ be any non-negative integers such that $m\le 2^{|{\vect H}|}$
and $n\le 2^{|\vect{H}|}$.  Let $\phi_m$ (resp. $\phi_n$) be a formula
over $\vect H$ having $m$ (resp. $n$) satisfying assignments. Then,
$n\le m$ iff ${\it ME}[U](M_m)\le{\it ME}[U](M_n)$.  where $M_n\equiv
T(\phi_n)$, $M_m\equiv T(\phi_m)$, and $T$ is defined in
Figure~\ref{fig:boolenc2}.
\end{lemma}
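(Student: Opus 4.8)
The plan is to reduce the lemma entirely to Lemma~\ref{lem:tme} together with the monotonicity of the logarithm, so that essentially no entropy computation is needed. First I would apply Lemma~\ref{lem:tme} to both $M_n \equiv T(\phi_n)$ and $M_m \equiv T(\phi_m)$, obtaining ${\it ME}[U](M_n) = \log(\#SAT(\neg\phi_n)+1)$ and ${\it ME}[U](M_m) = \log(\#SAT(\neg\phi_m)+1)$. The key observation is then that over the $2^{|\vect H|}$ assignments to $\vect H$ we have $\#SAT(\neg\phi) = 2^{|\vect H|} - \#SAT(\phi)$, so that $\#SAT(\neg\phi_n) = 2^{|\vect H|} - n$ and $\#SAT(\neg\phi_m) = 2^{|\vect H|} - m$. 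Hence ${\it ME}[U](M_n) = \log(2^{|\vect H|} - n + 1)$ and ${\it ME}[U](M_m) = \log(2^{|\vect H|} - m + 1)$.

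With these closed forms in hand, the equivalence follows by a short monotonicity argument. Since the map $x \mapsto \log(2^{|\vect H|} - x + 1)$ is strictly decreasing on the relevant range (the argument stays positive because $m, n \le 2^{|\vect H|}$), we have $n \le m$ iff $2^{|\vect H|} - m + 1 \le 2^{|\vect H|} - n + 1$ iff $\log(2^{|\vect H|} - m + 1) \le \log(2^{|\vect H|} - n + 1)$ iff ${\it ME}[U](M_m) \le {\it ME}[U](M_n)$. This chains directly into the statement of the lemma, and explains the apparently reversed placement of the indices: more satisfying assignments of $\phi$ yield fewer distinct outputs of $T(\phi)$, and hence smaller min-entropy leakage.

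The only point requiring care, and the closest thing to an obstacle, is tracking the direction of the inequality through the complementation $\phi \mapsto \neg\phi$: because the min-entropy QIF of $T(\phi)$ is governed by $\#SAT(\neg\phi)$ rather than $\#SAT(\phi)$, the leakage is \emph{decreasing} in the number of satisfying assignments of $\phi$, which is exactly why the stated equivalence pairs $n \le m$ with ${\it ME}[U](M_m) \le {\it ME}[U](M_n)$. Getting this orientation right is all the argument really demands; everything else is an immediate consequence of Lemma~\ref{lem:tme} and the strict monotonicity of $\log$.
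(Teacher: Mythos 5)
Your proposal is correct and follows essentially the same route as the paper's own proof: both reduce the claim to the closed form ${\it ME}[U](T(\phi)) = \log(2^{|\vect H|} - \#SAT(\phi) + 1)$ obtained from Lemma~\ref{lem:tme} (via $\#SAT(\neg\phi) = 2^{|\vect H|} - \#SAT(\phi)$) and then conclude by monotonicity of $\log$. If anything, your version is slightly more self-contained, since the paper's proof additionally cites Lemma~\ref{lem:ccme} and Lemma~\ref{lem:ccloglow}, which are not strictly needed once Lemma~\ref{lem:tme} is available.
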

\begin{proof}
  By Lemma~\ref{lem:ccme}, Lemma~\ref{lem:ccloglow}, and
  Lemma~\ref{lem:tme}, we have
${\it ME}[U](T(\phi_m))\le{\it ME}[U](T(\phi_n)))$
iff
$  \log(2^{|\vect H|}-m+1)\le\log(2^{|\vect H|}-n+1)$
iff
$
n\le m
$.
\end{proof}

\begin{reftheorem}{\ref{thm:ppme}}
$\text{PP}\subseteq B_{\it ME}[U]$
 \end{reftheorem}
\begin{proof}
  Let $\phi$ be a boolean formula.  Let $\psi$ be a boolean formula
  such that $\#SAT(\psi)=2^{n-1}+1$ where $n$ is the number of
  variables in $\phi$.  Let $q$ be the number such that
\[
q={\it ME}[U](T(\psi))
=\log(2^n-(2^{n-1}+1)+1)=n-1
\]
where $T$ is defined in Figure~\ref{fig:boolenc2}.
Then, we have
\[
\begin{array}{rcl}
  {\it ME}[U](T(\phi))\le q
& \textrm{iff}& {\it ME}[U](T(\phi)) \le {\it ME}[U](T(\psi))\\
  &\textrm{iff}& \phi\in\textrm{MAJSAT}
\end{array}
\]
by Lemma~\ref{lem:memonotone}.  Therefore, we can decide if
$\phi\in\textrm{MAJSAT}$ by deciding if ${\it ME}[U](T(\phi))\le q$.
Note that $T(\phi)$ and $q$ can be constructed in time polynomial in
the size of $\phi$.  Therefore, this is a reduction from
\textrm{MAJSAT} to $B_{\it ME}[U]$.
\end{proof}

\begin{definition}
  Let $M$ be a function such that $M:\mathbb{A}\rightarrow
  \mathbb{B}$.  For any $o\in \mathbb{B}$, we write $M^{-1}(o)$ to mean
\[
M^{-1}(o)=\aset{i\in \mathbb{A}\mid o=M(i)}
\]
\end{definition}

\begin{lemma}
\label{lem:gemonotone}
Let $\vect H$ and $H'$ be distinct boolean random variables.  Let $n$
and $m$ be non-negative integers such that $n\le 2^{|{\vect H}|}$ and
$m\le 2^{|\vect{H}|}$.  Let $\phi_m$ (resp. $\phi_n$) be a formula
over $\vect H$ having $m$ (resp. $n$) satisfying assignments. Then,
$m\le n$ iff ${\it GE}[U](M_n)\le{\it GE}[U](M_m)$.  where $M_n\equiv
O:=\phi_n\vee H'$ and $M_m\equiv O:=\phi_m\vee H'$.
\end{lemma}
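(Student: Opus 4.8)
The plan is to reduce the entire statement to a single closed-form expression for ${\it GE}[U]$ of the encoding program $O := \phi \vee H'$, obtained via Lemma~\ref{lem:geu}, and then read off the equivalence from an elementary monotonicity. First I would observe that these programs have no low-security input: the high-security input ranges over the assignments to $\vect{H}, H'$, so there are $2^{x+1}$ inputs where $x = |\vect{H}|$, and the output $O$ is a single boolean. Counting preimages, when $H'$ is ${\sf true}$ the output is ${\sf true}$ regardless of $\vect{H}$, contributing $2^x$ inputs, and when $H'$ is ${\sf false}$ the output is ${\sf true}$ exactly on the satisfying assignments of $\phi$. Hence, writing $k$ for $\#SAT(\phi)$, the output ${\sf true}$ has $2^x + k$ preimages and the output ${\sf false}$ has $2^x - k$ preimages.

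Next I would apply Lemma~\ref{lem:geu}, which gives ${\it GE}[U](M)=\frac{N}{2}-\frac{1}{2N}\sum_o|\mathbb{H}_o|^2$ with $N = 2^{x+1}$. Substituting the two preimage counts and using the identity $(2^x+k)^2+(2^x-k)^2 = 2^{2x+1}+2k^2$, the expression collapses to
\[
{\it GE}[U](M) = 2^{x-1} - \frac{k^2}{2^{x+1}}.
\]
(Empty output classes, which occur when $k = 2^x$, are harmless here since they contribute $0^2 = 0$ to the sum.) The key observation is simply that this quantity is a strictly decreasing function of $k$ on the nonnegative integers.

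Finally, applying this formula to $M_n$ (with $k = n$) and to $M_m$ (with $k = m$), I would conclude that ${\it GE}[U](M_n) \le {\it GE}[U](M_m)$ holds iff $2^{x-1} - \frac{n^2}{2^{x+1}} \le 2^{x-1} - \frac{m^2}{2^{x+1}}$, i.e. iff $n^2 \ge m^2$; and since $n$ and $m$ are nonnegative, this is equivalent to $m \le n$. This establishes the claimed equivalence.

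I do not expect a serious obstacle here. The only care needed is in the preimage count—getting the $2^x + k$ versus $2^x - k$ split right by treating the two values of $H'$ separately—together with noting that the hypotheses $n, m \le 2^{|\vect{H}|}$ ensure that formulas $\phi_n, \phi_m$ with the prescribed numbers of satisfying assignments actually exist. The monotonicity of $2^{x-1} - k^2/2^{x+1}$ in $k$ is immediate, so no further analysis is required.
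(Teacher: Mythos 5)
Your proposal is correct and takes essentially the same approach as the paper's own proof: both reduce the claim to the closed-form guessing-entropy formula for low-input-free programs (Lemma~\ref{lem:geu}), count the preimages of ${\sf true}$ and ${\sf false}$ as $2^x+k$ and $2^x-k$, and conclude by monotonicity in $k$. The only cosmetic difference is that you simplify all the way to $2^{x-1}-k^2/2^{x+1}$, whereas the paper stops at monotonicity of the sum of squared preimage sizes $|M^{-1}({\sf true})|^2+|M^{-1}({\sf false})|^2 = 2^{2x+1}+2k^2$.
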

\begin{proof}
By the definition, 
\[
\begin{array}{rcl}
  {\it GE}[U](M)&=&\mathcal{G}(H)-\mathcal{G}(H|O)\\
  &=&\frac{1}{2}(2^{|H|+1})+\frac{1}{2}-\sum_o\sum_{1\le i\le |H|}i U(h_i,o)\\
  &=&2^{|H|}-\frac{1}{2^{|H|+2}}(|M^{-1}({\sf true})|^2+|M^{-1}({\sf
    false})|^2)
\end{array}
\]
Therefore, we have 
\[
{\it GE}[U](M_n)\le {\it GE}[U](M_m)
\]
iff
\[
|M_m^{-1}({\sf true})|^2+|M_m^{-1}({\sf false})|^2
\le|M_n^{-1}({\sf true})|^2+|M_n^{-1}({\sf false})|^2
\]
iff $m\le n$.
\end{proof}

\begin{reftheorem}{\ref{thm:ppge}}
$\text{PP}\subseteq B_{\it GE}[U]$
\end{reftheorem}
\begin{proof}
Let $\phi$ be a boolean formula.  Let $\psi$ be a boolean formula
such that  $\#SAT(\psi)=2^{n-1}+1$ where $n$ is the number of variables in $\phi$.  Let $q$ be the number such that
\[
\begin{array}{rcl}
  q&=&{\it GE}(O:=\psi\vee H)\\
  &=&\frac{2^{n+1}}{2}-\frac{1}{2^{n+2}}(|M^{-1}({\sf true})|^2+|M^{-1}({\sf
    false})|^2)\\
  &=&2^n-\frac{1}{2^{n+2}}((2^{n-1}+1)^2+(2^{n-1}-1)^2)
\end{array}
\]
where $H$ is a boolean variable that does not appear in $\psi$ and $\phi$.  
Then, we have
\[
\begin{array}{rcl}
{\it GE}[U](O:=\phi\vee H)\le q & \textrm{iff}&
{\it GE}[U](O:=\phi\vee H)\leq {\it GE}[U](O:=\psi\vee H) \\
& \textrm{iff}& {\it GE}[U](O:=\phi\vee H)\le q\\
  &\textrm{iff}& \#SAT(\phi)\ge\#SAT(\psi)\\
  &\textrm{iff}& \phi\in\textrm{MAJSAT}
\end{array}
\]
by Lemma~\ref{lem:gemonotone}.  Therefore, we can decide if
$\phi\in\textrm{MAJSAT}$ by deciding if ${\it GE}[U](O:=\phi\vee H)\le
q$.  Note that $O:=\phi\vee H$ and $q$ can be constructed in time
polynomial in the size of $\phi$.  Therefore, this is a reduction from
\textrm{MAJSAT} to $B_{\it GE}[U]$.
\end{proof}

\begin{reftheorem}{\ref{thm:ppcc}}
  $\text{PP}\subseteq B_{\it CC}$
\end{reftheorem}
\begin{proof}
  Straightforward from Lemma~\ref{lem:ccme} and Theorem~\ref{thm:ppme}.
\end{proof}

\begin{figure}[t]
\[
\begin{array}{l}
V(\psi)\equiv\\
\ \ {\sf case}\;(H',H'',\vect{H})\\
\ \ \ \ {\sf when}\;({\sf true},{\sf true},\_)\;{\sf then}\;\\
\ \ \ \ \ \ \ \ \ttif{\psi}{O:={\sf true}}{O:={\sf false}} \\
\ \ \ \ {\sf when}\;({\sf true},{\sf false},\vect{\sf true})\;{\sf then}\;O:={\sf false}\\
\ \ \ \ {\sf when}\;({\sf true},{\sf false},\_)\;{\sf then}\;\\
\ \ \ \ \ \ \ \ \ttif{H_1}{O:={\sf true}}{O:={\sf false}} \\
\ \ \ \ {\sf else}\;{O:={\sf false}}
\end{array}
\]
where ${\vect H}=H_1, \dots, H_h$ is the vector of variables appearing in 
$\psi$, and $\vect H$, $H'$, and $H''$ are distinct.
\caption{The Boolean Program for Lemma~\ref{lem:be2mono}, Theorem~\ref{thm:ppbe1}, and Theorem~\ref{thm:ppbe2}.}
\label{fig:boolenc3}
\end{figure}

\begin{lemma}
\label{lem:be2mono}
Let $\vect H$, $H'$, and $H''$ be distinct boolean random variables.
Let $n$ and $m$ be any non-negative integers such that $n\le
2^{|{\vect H}|}$ and $m\le 2^{|\vect{H}|}$.  Let $\phi_m$ (resp.
$\phi_n$) be a formula over $\vect H$ having $m$ (resp. $n$)
satisfying assignments. Then, $n\le m$ iff $\max_h {\it
  BE}[\aseq{U,h}](M_m)\le \max_h {\it BE}[\aseq{U,h}](M_n)$, where
$M_n\equiv V(\psi_n)$, $M_m\equiv V(\phi_m)$, and $V$ is defined in
Figure~\ref{fig:boolenc3}.
\footnote{As in Lemma~\ref{lem:semonotone}, the encoding is
    chosen so as to reduce MAJSAT to the bounding problem with a
    rational upper-bound.}
\end{lemma}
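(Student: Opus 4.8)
The plan is to compute $\max_h {\it BE}[\aseq{U,h}](V(\psi))$ explicitly as a function of $\#SAT(\psi)$ and then read off the claimed equivalence. First I would observe that $V(\psi)$ has no low-security input and a single output variable $O$ ranging over $\aset{{\sf true},{\sf false}}$, so Lemma~\ref{lem:be} specializes to ${\it BE}[\aseq{U,h}](V(\psi)) = -\log\frac{|V(\psi)^{-1}(V(\psi)(h))|}{2^{x+2}}$, where $x = |\vect H|$ and $2^{x+2}$ is the number of high-security inputs, i.e.\ the assignments to $(H',H'',\vect H)$. Maximizing over $h$ then corresponds to selecting the output with the smallest preimage, so that $\max_h {\it BE}[\aseq{U,h}](V(\psi)) = \log\frac{2^{x+2}}{\min_o |V(\psi)^{-1}(o)|}$.

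Next I would count the two preimages directly from the case statement of Figure~\ref{fig:boolenc3}. The output ${\sf true}$ is produced exactly (i) in the first branch ($H'=H''={\sf true}$) on the $\#SAT(\psi)$ assignments to $\vect H$ that satisfy $\psi$, and (ii) in the third branch ($H'={\sf true}$, $H''={\sf false}$, $\vect H\neq\vect{\sf true}$) on the $2^{x-1}-1$ assignments with $H_1={\sf true}$ other than the all-true one. Every remaining input yields ${\sf false}$. Hence $|V(\psi)^{-1}({\sf true})| = \#SAT(\psi) + 2^{x-1} - 1$ and $|V(\psi)^{-1}({\sf false})| = 2^{x+2} - \#SAT(\psi) - 2^{x-1} + 1$.

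The crux of the argument is to show that ${\sf true}$ is always the minority output, independently of $\psi$, so that $\min_o |V(\psi)^{-1}(o)| = |V(\psi)^{-1}({\sf true})|$. This is exactly where the hypothesis $\#SAT(\psi)\le 2^x$ and the particular shape of the encoding are used: from $\#SAT(\psi)\le 2^x$ one obtains $|V(\psi)^{-1}({\sf true})| \le 2^x + 2^{x-1} - 1 < 2^{x+1}$, which is strictly below half of the $2^{x+2}$ inputs, whence $|V(\psi)^{-1}({\sf true})| < |V(\psi)^{-1}({\sf false})|$ for every admissible $\psi$. I expect checking this inequality --- that the ${\sf true}$-branch can never become the majority --- to be the main (though elementary) obstacle, since it is precisely the property the encoding was engineered to guarantee and the step that makes $\min_o$ independent of $\psi$.

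Once the minimum is pinned to the ${\sf true}$-preimage, I would conclude with $\max_h {\it BE}[\aseq{U,h}](V(\psi)) = \log\frac{2^{x+2}}{\#SAT(\psi) + 2^{x-1} - 1}$, which is strictly decreasing in $\#SAT(\psi)$. Substituting $\#SAT(\phi_m)=m$ and $\#SAT(\phi_n)=n$, it follows that $\max_h {\it BE}[\aseq{U,h}](M_m) \le \max_h {\it BE}[\aseq{U,h}](M_n)$ iff $m + 2^{x-1} - 1 \ge n + 2^{x-1} - 1$ iff $n\le m$, which is the statement of the lemma.
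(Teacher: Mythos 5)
Your proposal is correct and follows essentially the same route as the paper's proof: both invoke Lemma~\ref{lem:be} to reduce ${\it BE}$ to the (negative log) relative size of the output's preimage, count $|V(\psi)^{-1}({\sf true})| = \#SAT(\psi) + 2^{x-1} - 1$, observe that ${\sf true}$ is always the minority output so the maximum over $h$ is $\log\frac{2^{x+2}}{\#SAT(\psi)+2^{x-1}-1}$, and conclude by monotonicity in $\#SAT(\psi)$. The only difference is cosmetic: you spell out the inequality showing the ${\sf true}$-preimage stays below half of the $2^{x+2}$ inputs, which the paper merely asserts.
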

\begin{proof}
First, we explain the construction $V(\psi)$ of
Figure~\ref{fig:boolenc3}.  Note that $V(\psi) = {\sf true}$ iff
either $H'\wedge H'' \wedge \psi$, or $H' \wedge \neg H'' \wedge H_1$
and at least one of $H_2,\dots,H_n$ is ${\sf false}$.  Therefore,
there are strictly more inputs $h$ such that $V(\psi)(h) = {\sf
  false}$ than inputs $h$ such that $V(\psi)(h) = {\sf true}$.  Hence,
$\max_h {\it BE}[\aseq{U,h}](V(\psi)) = {\it
  BE}[\aseq{U,h'}](V(\psi))$ where $h'$ is any input such that
$V(\psi)(h') = {\sf true}$.

Now, let $x=|\vect H|$.  Then,
\[
\begin{array}{rcl}
  \max_h {\it BE}[\aseq{U,h}](M_n)&=&\log\frac{2^{x+2}}{n + 2^{x-1}-1}\\
  \max_h {\it BE}[\aseq{U,h}](M_m)&=&\log\frac{2^{x+2}}{m + 2^{x-1}-1}\\
\end{array}
\]
Therefore, $n \leq m$ iff $\max_h {\it BE}[\aseq{U,h}](M_m) \leq \max_h {\it BE}[\aseq{U,h}](M_n)$.
\end{proof}

\begin{reftheorem}{\ref{thm:ppbe1}}
$\text{PP}\subseteq B_{\it BE1}[\aseq{U,h,\ell}]$
\end{reftheorem}
\begin{proof}
Let $\phi$ be a boolean formula.  Let $\psi$ be a boolean formula such that $\#SAT(\psi)=2^{n-1}+1$ where $n$ is the number of variables in $\phi$.  Let $q$ be the number such that
\[
q={\it BE}[\aseq{U,h}](V(\psi))=\log\frac{2^{n+2}}{2^{n-1}+1 + 2^{n-1}-1}=2
\]
where $V$ is defined in Figure~\ref{fig:boolenc3}, $h$ is a high
security input such that $h(H') = {\sf true}$, $h(H'') = {\sf false}$,
$h(H_1) = {\sf true}$, and $h(H_2) = {\sf false}$.  Note that
$V(\psi)(h)= V(\phi)(h) = {\sf true}$.  Then, we have
\[
\begin{array}{rcl}
(V(\phi),q)\in B_{\it BE1}[\aseq{U,h}]& \textrm{iff} &
\max_{h'} {\it BE}[\aseq{U,h'}](V(\phi))\leq q\\
& \textrm{iff}& \max_{h'} {\it BE}[\aseq{U,h'}](V(\phi))\\
&&\qquad\qquad\qquad\leq \max_{h'}{\it BE}[\aseq{U,h'}](V(\psi))\\
  &\textrm{iff}& \#SAT(\phi)\ge\#SAT(\psi)\\
  &\textrm{iff}& \phi\in\textrm{MAJSAT}
\end{array}
\]
by Lemma~\ref{lem:be2mono}, and the fact that $\max_{h'}{\it
  BE}[\aseq{U,h'}](V(\phi))={\it BE}[\aseq{U,h}](V(\phi))$ and
$\max_{h'}{\it BE}[\aseq{U,h'}](V(\psi))={\it
  BE}[\aseq{U,h}](V(\psi))$.  Therefore, we can decide if
$\phi\in\textrm{MAJSAT}$ by deciding if ${\it
  BE}[\aseq{U,h}](V(\phi))\le q$.  Note that $V(\phi)$ and $q$ can be
constructed in time polynomial in the size of $\phi$ (in fact, $q$ is
just the constant $2$).  Therefore, this is a reduction from
\textrm{MAJSAT} to $B_{\it BE1}[\aseq{U,h}]$.
\end{proof}

\begin{reftheorem}{\ref{thm:ppbe2}}
$\text{PP}\subseteq B_{\it BE2}[U]$
\end{reftheorem}
\begin{proof}
Let $\phi$ be a boolean formula.  Let $\psi$ be a boolean formula such that $\#SAT(\psi)=2^{n-1}+1$ where $n$ is the number of variables in $\phi$.  Let $q$ be the number such that
\[
q=\max_h {\it BE}[\aseq{U,h}](V(\psi))=\log\frac{2^{n+2}}{2^{n-1}+1 + 2^{n-1}-1}=2
\]
where $V$ is defined in Figure~\ref{fig:boolenc3}.  We have
\[
\begin{array}{rcl}
(V(\phi),q)\in B_{\it BE2}[U]& \textrm{iff}&
\max_h {\it BE}[\aseq{U,h}](V(\phi))\leq q\\
& \textrm{iff}& \max_h {\it BE}[\aseq{U,h}](V(\phi))\leq \max_h {\it BE}[\aseq{U,h}](V(\psi))\\
  &\textrm{iff}& \#SAT(\phi)\ge\#SAT(\psi)\\
  &\textrm{iff}& \phi\in\textrm{MAJSAT}
\end{array}
\]
by Lemma~\ref{lem:be2mono}.  Therefore, we can decide if
$\phi\in\textrm{MAJSAT}$ by deciding if $\max_h {\it
  BE}[\aseq{U,h}](V(\phi))\le q$.  Note that $V(\phi)$ and $q$ can be
constructed in time polynomial in the size of $\phi$ (in fact, $q$ is
just the constant $2$).  Therefore, this is a reduction from
\textrm{MAJSAT} to $B_{\it BE2}[U]$.
\end{proof}

\begin{reftheorem}{\ref{thm:ppsecc}}
  $\text{PP}\subseteq B_{\it SECC}$
\end{reftheorem}
\begin{proof}
  Trivial from Theorem~\ref{thm:ppcc} and the fact that $B_{\it SECC}$
  is equivalent to $B_{\it CC}$.
\end{proof}

\begin{reftheorem}{\ref{thm:ppmecc}}
  $\text{PP}\subseteq B_{\it MECC}$
\end{reftheorem}
\begin{proof}
Straightforward from Lemma~\ref{lem:mecceqcc} and Theorem~\ref{thm:ppcc}.
\end{proof}

\begin{reftheorem}{\ref{thm:ppgecc}}
  $\text{PP}\subseteq B_{\it GECC}$
\end{reftheorem}
\begin{proof}
Straightforward from Lemma~\ref{lem:gecc} and Theorem~\ref{thm:ppge}.
\end{proof}

We have shown in a previous work~\cite{DBLP:conf/csfw/yasuoka2010} that
checking non-interference for loop-free boolean programs is coNP-complete.
\begin{lemma}
\label{lem:niconp}
Checking non-interference is coNP-complete for loop-free boolean programs.
\end{lemma}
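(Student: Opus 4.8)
The plan is to establish the two directions of completeness separately: membership in coNP and coNP-hardness. I would phrase non-interference as a $\forall$-statement and its complement, interference, as an $\exists$-statement, then place interference in NP.

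For membership, I would observe that by the definition of non-interference, $M$ is interferent iff there exist $h,h'\in\mathbb{H}$ and $\ell\in\mathbb{L}$ with $M(h,\ell)\neq M(h',\ell)$. A nondeterministic machine can guess the triple $(h,h',\ell)$, whose size is polynomial in the number of boolean variables, and then evaluate $M(h,\ell)$ and $M(h',\ell)$. For a loop-free boolean program, evaluation on a fixed input is polynomial-time (linear in program size) by structural recursion on the syntax of Figure~\ref{fig:syntax} (or, equivalently, by the weakest-precondition semantics of Figure~\ref{fig:wpsemantics}). Comparing the two outputs then decides interference. Hence interference is in NP and non-interference is in coNP.

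For hardness I would reduce from TAUTOLOGY, which is coNP-complete. Given a boolean formula $\phi$ over variables $\vect{x}$, I would build the loop-free boolean program $M\equiv\ttif{H'}{O:={\sf true}}{O:=\phi}$, where $\vect{x}$ and a fresh variable $H'$ are high security inputs, $O$ is the low security output, and there are no low security inputs. The outputs of $M$ are exactly $\{{\sf true}\}\cup\{\phi(\vect{x})\mid\vect{x}\}$: when $H'={\sf true}$ the output is ${\sf true}$, and when $H'={\sf false}$ it is $\phi(\vect{x})$. Since ${\sf true}$ is always attainable, $M$ is non-interferent iff every output equals ${\sf true}$, i.e.\ iff $\phi(\vect{x})={\sf true}$ for all $\vect{x}$, i.e.\ iff $\phi$ is a tautology. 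The construction is clearly polynomial-time, so non-interference is coNP-hard, and together with membership it is coNP-complete. Note the reduction uses no low security inputs, so hardness already holds for that restricted class.

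The proof is largely routine; the one point requiring care is the design of the reduction. A naive encoding $M\equiv O:=\phi$ does not work, because such an $M$ is non-interferent whenever $\phi$ is a constant function, and would therefore fail to separate tautologies from unsatisfiable formulas. The role of the auxiliary variable $H'$ is precisely to force ${\sf true}$ to always be an available output, so that non-interference collapses to ``$\phi$ never outputs ${\sf false}$,'' which matches TAUTOLOGY exactly.
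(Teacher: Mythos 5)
Your proof is correct, but note that the paper does not actually prove this lemma at all: it imports the result by citation from the authors' earlier work~\cite{DBLP:conf/csfw/yasuoka2010}, so your argument supplies a self-contained proof where the paper has none. Both of your directions are sound. Membership is the routine part: interference is the existential statement $\exists h,h',\ell.\ M(h,\ell)\neq M(h',\ell)$, the witness $(h,h',\ell)$ has size polynomial in the number of boolean variables, and a loop-free boolean program can be evaluated on a concrete input in polynomial time (forward execution, or equivalently one pass of the weakest-precondition semantics), so interference is in NP and non-interference is in coNP. Your hardness reduction is also correct, and your cautionary remark is exactly the right one: the naive encoding $\ttassign{O}{\phi}$ is non-interferent precisely when $\phi$ is a \emph{constant} function (tautology or unsatisfiable), so it separates nothing; guarding with a fresh high variable, $M\equiv\ttif{H'}{\ttassign{O}{{\sf true}}}{\ttassign{O}{\phi}}$, forces ${\sf true}$ to always be a reachable output, whence $M$ is non-interferent iff $\phi$ is a tautology. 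This guard-variable trick is the same device the paper itself uses in its PP-hardness encodings (the auxiliary high inputs $H'$, $H''$ in Figures~\ref{fig:boolenc}--\ref{fig:boolenc3} play an analogous role of padding the output structure), so your proof is stylistically consonant with the paper even though the paper never spells out this particular argument. One minor formal point: under the paper's convention a low-security variable such as $O$ is also an input, so to claim ``no low security inputs'' you should make $O$ output-only by assigning it a constant at the start of the program, as the paper's footnote prescribes; since both branches of your program overwrite $O$ anyway, this does not affect the reduction.
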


\begin{reftheorem}{\ref{thm:conpbe3}}
$B_{\it BE1CC}[h,\ell]$ is coNP-complete.
\end{reftheorem}
\begin{proof}
Straightforward from Lemma~\ref{lem:niconp} and Theorem~\ref{thm:be3ni}. 
\end{proof}

\begin{reftheorem}{\ref{thm:conpbe4}}
$B_{\it BE2CC}$ is coNP-complete.
\end{reftheorem}
\begin{proof}
Straightforward from Lemma~\ref{lem:niconp} and Theorem~\ref{thm:be4ni}. 
\end{proof}

\end{document}